\newcommand{\act}[2][]{{\stackrel{#2}{\longrightarrow}}^{#1}}
\newcommand{\beq}{\simeq}
\newlength{\probwidth}
\renewcommand{\act}[2][]{\stackrel{#2}{\longrightarrow}^{#1}}
\title{Dividing Line between Decidable PDA's and Undecidable Ones}
\author{Yuxi Fu \thanks{Email: \tt fu-yx@cs.sjtu.edu.cn} \and Qiang Yin\thanks{Email: \tt q.yin@sjtu.edu.cn} }
\institute{BASICS, Department of Computer Science, Shanghai Jiao Tong University}
\begin{document}

\maketitle

\begin{abstract}
S{\'e}nizergues has proved that language equivalence is decidable for disjoint $\epsilon$-deterministic PDA.
Stirling has showed that strong bisimilarity is decidable for PDA.
On the negative side Srba demonstrated that the weak bisimilarity is undecidable for normed PDA.
Later Jan\v{c}ar and Srba established the undecidability of the weak bisimilarity for disjoint $\epsilon$-pushing PDA and disjoint $\epsilon$-popping PDA.
These decidability and undecidability results are extended in the present paper.
The extension is accomplished by looking at the equivalence checking issue for the branching bisimilarity of several variants of PDA.
\end{abstract}

\section{Introduction}

\begin{quote}
``Is it recursively unsolvable to determine if $L_{1}=L_{2}$ for arbitrary deterministic languages $L_{1}$ and $L_{2}$?''

\hfill -- Ginsburg and Greibach, 1966
\end{quote}
The above question was raised in Ginsburg and Greibach's 1966 paper~\cite{GinsburgGreibach1966} titled Deterministic Context Free Languages.
The equality referred to in the above quotation is the language equivalence between context free languages.
It is well known that the context free languages are precisely those accepted by pushdown automata (PDA)~\cite{HopcroftUllman1979}.
A PDA extends a finite state automaton with a memory stack.
It accepts an input string whenever the memory stack is empty.
The operational semantics of a PDA is defined by a finite set of rules of the following form
\[pX\stackrel{a}{\longrightarrow}q\alpha\ \mathrm{or}\ pX\stackrel{\epsilon}{\longrightarrow}q\alpha.\]
The transition rule $pX\stackrel{a}{\longrightarrow}q\alpha$ reads ``If the PDA is in state $p$ with $X$ being on the top of the stack, then it can accept an input letter $a$, pop off $X$, place the string $\alpha$ of stack symbols onto the top of the stack, and turn into state $q$''.
The rule $pX\stackrel{\epsilon}{\longrightarrow}q\alpha$ describes a silent transition that has nothing to do with any input letter.
It was proved early on that language equivalence between pushdown automata is undecidable~\cite{HopcroftUllman1979}.
A natural question asks what restrictions one may impose on the PDA's so that language equivalence becomes decidable.
Ginsburg and Greibach studied deterministic context free languages.
These are the languages accepted by deterministic pushdown automata (DPDA)~\cite{GinsburgGreibach1966}.

A deterministic pushdown automaton enjoys disjointness and determinism properties.
The determinism property is the combination of $A$-determinism and $\epsilon$-determinism.
These conditions are defined as follows:
\begin{quote}
{\em Disjointness}.
For all state $p$ and all stack symbol $X$, if $pX$ can accept a letter then it cannot perform a silent transition, and conversely if $pX$ can do a silent transition then it cannot accept any letter.

$A$-{\em Determinism}.
If $pX\stackrel{a}{\longrightarrow}q\alpha$ and $pX\stackrel{a}{\longrightarrow}q'\alpha'$ then $q=q'$ and $\alpha=\alpha'$.

$\epsilon$-{\em Determinism}.
If $pX\stackrel{\epsilon}{\longrightarrow}q\alpha$ and $pX\stackrel{\epsilon}{\longrightarrow}q'\alpha'$ then $q=q'$ and $\alpha=\alpha'$.
\end{quote}
These are strong constraints from an algorithmic point of view.
It turns out however that the language problem is still difficult even for this simple class of PDA's.
One indication of the difficulty of the problem is that there is no size bound for equivalent DPDA configurations.
It is easy to design a DPDA such that two configurations $pY$ and $pX^{n}Y$ accept the same language for all $n$.

It was S\'{e}nizergues who proved after 30 years that the problem is decidable~\cite{Senizergues1997,Senizergues2001}.
His original proof is very long.
Simplified proofs were soon discovered by S\'{e}nizergues~\cite{Senizergues2002} himself and by Stirlng~\cite{Stirling2001-DPDA-dcidability}.
After the positive answer of S\'{e}nizergues, one wonders if the strong constraints (disjointness+$A$-determinism+$\epsilon$-determinism) can be relaxed.
The first such extension was given by S\'{e}nizergues himself~\cite{Senizergues1998}.
He showed that strong bisimilarity on the collapsed graphs of the disjoint $\epsilon$-deterministic pushdown automata is also decidable.
In the collapsed graphs all $\epsilon$-transitions are absorbed.
This result suggests that $A$-{\em non}determinism is harmless as far as decidability is concerned.
The silent transitions considered in~\cite{Senizergues1998} are $\epsilon$-popping.
A silent transition $pX\stackrel{\epsilon}{\longrightarrow}q\alpha$ is $\epsilon$-popping if $\alpha=\epsilon$.
In this paper we shall use a slightly more liberal definition of this terminology.
\begin{quote}
$\epsilon$-{\em Popping PDA}.
A PDA is $\epsilon$-popping if $|\alpha|\le1$ whenever $pX\stackrel{\epsilon}{\longrightarrow}q\alpha$.

$\epsilon$-{\em Pushing PDA}.
A PDA is $\epsilon$-pushing if $|\alpha|\ge1$ whenever $pX\stackrel{\epsilon}{\longrightarrow}q\alpha$.
\end{quote}
A disjoint $\epsilon$-deterministic PDA can be converted to an equivalent disjoint $\epsilon$-popping PDA in the following manner:
Without loss of generality we may assume that the disjoint $\epsilon$-deterministic PDA does not admit any infinite sequence of silent transitions.
Suppose $pX\stackrel{\epsilon}{\longrightarrow}\ldots\stackrel{\epsilon}{\longrightarrow}q\alpha$ and $q\alpha$ cannot do any silent transition.
If $\alpha=\epsilon$ then we can redefine the semantics of $pX$ by $pX\stackrel{\epsilon}{\longrightarrow}q\epsilon$; otherwise we can remove $pX$ in favour of $qZ$ with $Z$ being the first symbol of $\alpha$.
So under the disjointness condition $\epsilon$-popping condition is weaker than $\epsilon$-determinism.

A paradigm shift from a language viewpoint to a process algebraic viewpoint helps see the issue in a more productive way.
Groote and H\"{u}ttel~\cite{GrooteHuttel1994,Huttel1994} pointed out that as far as BPA and BPP are concerned the bisimulation equivalence \`{a} la Milner~\cite{Milner1989} and Park~\cite{Park1981} is more tractable than the language equivalence.
The best way to understand Senizergues' result is to recast it in terms of bisimilarity.
Disjointness and $\epsilon$-determinism imply that all silent transitions preserve equivalence.
It follows that the branching bisimilarity~\cite{vanGlabbeekWeijland1989-first-paper-bb} of the disjoint $\epsilon$-deterministic PDA's coincides with the strong bisimilarity on the collapsed graphs of these PDA's.
So what Senizergues has proved is that the branching bisimilarity on the disjoint $\epsilon$-deterministic PDA's is decidable.

The process algebraic approach allows one to use the apparatus from the process theory to study the equivalence checking problem for PDA.
Stirling's proof of the decidability of the strong bisimilarity for normed PDA (nPDA)~\cite{Stirling1996-nPDA-decidability,Stirling1998-nPDA-decidability} exploits the tableau method~\cite{HuttelStirling1991,Huttel1992}.
Later he extended the tableau approach to the study of the unnormed PDA~\cite{Stirling2000-PDA-decidability}.
Stirling also provided a simplified account of Senizergues' proof~\cite{Senizergues1998} using the process method~\cite{Stirling2001-DPDA-dcidability}.
The proof in~\cite{Stirling2001-DPDA-dcidability}, as well as the one in~\cite{Senizergues1998}, is interesting in that it turns the language equivalence of disjoint $\epsilon$-deterministic PDA to the strong bisimilarity of correlated models.
Another advantage of bisimulation equivalence is that it admits a nice game theoretical interpretation.
This has been exploited in the proofs of negative results using the technique of Defender's Forcing~\cite{JancarSrba2008}.
Srba proved that weak bisimilarity on nPDA's is undecidable~\cite{Srba2002d}.
Jancar and Srba improved this result by showing that the weak bisimilarity on the disjoint nPDA's with only $\epsilon$-popping transitions, respectively $\epsilon$-pushing transitions, is already undecidable~\cite{JancarSrba2008}.
In fact they proved that the problems are $\Pi_{1}^{0}$-complete.
Recently Yin, Fu, He, Huang and Tao have proved that the branching bisimilarity for all the models above either the normed BPA or the normed BPP in the hierarchy of process rewriting system~\cite{Mayr2000PRS} are undecidbale~\cite{YinFuHeHuangTao2014}.
This general result implies that the branching bisimilarity on nPDA is undecidable.
The idea of Defender's Forcing can also be used to prove complexity bound.
An example is Benedikt, Moller, Kiefer and Murawski's proof that the strong bisimilarity on PDA is non-elementary~\cite{BenediktMollerKieferMurawski2013}.
A summary of the decidability/undecidability results mentioned above is given in Fig.~\ref{Decidability-of-PDA} and Fig.~\ref{More-on-Decidability-of-PDA}.

\begin{figure*}[t]
\begin{center}
\begin{tabular}{|c|c|c|} \hline
& {\bf PDA} & {\bf nPDA} \\ \hline\hline
$\sim$ & \begin{tabular}{c} Decidable~\cite{Senizergues1998,Stirling2000-PDA-decidability} \\
 Non-Elementary~\cite{BenediktMollerKieferMurawski2013} \end{tabular} & \begin{tabular}{c} Decidable~\cite{Stirling1996-nPDA-decidability,Stirling1998-nPDA-decidability} \\ Non-Elementary~\cite{BenediktMollerKieferMurawski2013} \end{tabular} \\ \hline
$\simeq$ & Undecidable~\cite{YinFuHeHuangTao2014} & Undecidable~\cite{YinFuHeHuangTao2014} \\ \hline
$\;\approx\;$ & \begin{tabular}{c} $\Sigma^{1}_{1}$-Complete~\cite{JancarSrba2008} \\ Undecidable~\cite{Srba2002d} \end{tabular} & \begin{tabular}{c} $\Sigma^{1}_{1}$-Complete~\cite{JancarSrba2008} \\ Undecidable~\cite{Srba2002d} \end{tabular} \\ \hline
\end{tabular}
\end{center}
\caption{Decidability of PDA \label{Decidability-of-PDA}}
\end{figure*}

\begin{figure*}[t]
\begin{center}
\begin{tabular}{|c|c||c|c|} \hline
& $\epsilon$-Popping {\bf nPDA}/{\bf PDA} & $\epsilon$-Pushing {\bf nPDA} & $\epsilon$-Pushing {\bf PDA} \\ \hline\hline
$\simeq$ & ? & ? & ? \\ \hline
$\;\approx\;$ & $\Pi_{1}^{0}$-Complete~\cite{JancarSrba2008} & $\Pi_{1}^{0}$-Complete~\cite{JancarSrba2008} & $\Sigma^{1}_{1}$-Complete~\cite{JancarSrba2008} \\ \hline\hline
\end{tabular}
\end{center}
\caption{More on Decidability of PDA \label{More-on-Decidability-of-PDA}}
\end{figure*}

The decidability of the strong bisimilarity and the undecidability of the weak bisimilarity still leaves a number of questions unanswered.
A conservative extension of the language equivalence for DPDA is neither the strong bisimilarity nor the weak bisimilarity.
It is not the former because language equivalence ignores silent transitions.
It is not the latter since the whole point of introducing the disjointness and $\epsilon$-determinism conditions is to force all silent transitions to preserve equivalence.
To investigate the possibility of extending the decidability result of DPDA, one should really start with the branching bisimilarity.
This is what we are going to do in this paper.
Since Senizergues' result can be stated as saying that the branching bisimilarity on the disjoint $\epsilon$-deterministic PDA is decidable, we will look at the situations in which either the disjointness condition is dropped and/or the $\epsilon$-determinism condition is weakened/removed.
It turns out that both the decidability result and the undecidability about PDA can be strengthened.

The contributions of this paper are summarized as follows.
\begin{enumerate}
\item Technically we will provide answers to some of the open problems raised in literature.
The main results are the following.
\begin{itemize}
\item The branching bisimilarity on the $\epsilon$-popping PDA is decidable.
\item The branching bisimilarity on the $\epsilon$-pushing nPDA is decidable.
\item The branching bisimilarity on the $\epsilon$-pushing PDA is $\Sigma_1^1$-complete.
\end{itemize}
\item At the model theoretical level we propose a model that strictly extends the classical PDA model.
The new model gets rid of the notion of stack in favour of a structural definition of processes.
The structural definition helps simply the proofs of our results significantly.
\end{enumerate}

The rest of the paper is organised as follows.
Section~\ref{sec-Syntax-and-Semantics-of-PDA} introduces an extended PDA model.
Section~\ref{sec-Branching-Bisimilarity} reviews the basic properties of the branching bisimilarity.
Section~\ref{sec-Finite-Branching-Property} confirms that the finite branching property hold for both the $\epsilon$-pushing nPDA and the $\epsilon$-popping PDA.
Section~\ref{sec-Decidability of PDAminus} establishes the decidability of the $\epsilon$-popping PDA.
Section~\ref{sec-Decidability-of-nPDAplus} points out that the proofs given in Section~\ref{sec-Decidability of PDAminus} can be repeated for the $\epsilon$-pushing nPDA.
Section~\ref{sec-High-Undecidability} applies the Defender's Forcing technique to show that $\epsilon$-nondeterminism is highly undecidable.
Section~\ref{sec-Conclusion} concludes with remark on future work.

\section{Syntax and Semantics of PDA}\label{sec-Syntax-and-Semantics-of-PDA}

A {\em pushdown automaton} (or simply PDA) $\Gamma=(\mathcal{Q},\mathcal{V},\mathcal{L},\mathcal{R})$ consists of
\begin{itemize}
\item
a finite set of states $\mathcal{Q}=\{p_{1},\ldots,p_{\mathfrak{q}}\}$ ranged over by $o,p,q,r,s,t$,
\item
a finite set of symbols $\mathcal{V}=\{X_{1},\ldots,X_{\mathfrak{n}}\}$ ranged over by $X,Y,Z$,
\item
a finite set of letters $\mathcal{L}=\{a_{1},\ldots,a_{\mathfrak{s}}\}$ ranged over by $a,b,c,d$, and
\item
a finite set of transition rules $\mathcal{R}$.
\end{itemize}
If we think of a PDA as a process we may interpret a letter in $\mathcal{L}$ as an action label (or simply action).
The set $\mathcal{L}^{*}$ of words is ranged over by $u,v,w$.
Following the convention in language theory a silent transition will be denoted by $\epsilon$.
The set $\mathcal{A}=\mathcal{L}\cup\{\epsilon\}$ of actions is ranged over by $\ell$.
The set $\mathcal{A}^{*}$ of action sequence is ranged over by $\ell^{*}$.
The set $\mathcal{V}^{*}$ of strings of symbols is ranged over by small Greek letters.
By overloading notation the empty string is also denoted by $\epsilon$.
We identify both $\epsilon\alpha$ and $\alpha\epsilon$ to $\alpha$ syntactically.
The length of $\alpha$ is denoted by $|\alpha|$.

A {\em pushdown process}, or {\em PDA process}, is an interactive object that has a syntactical tree structure.
Pushdown processes are defined in terms of {\em constants}.
A constant is a finite list of PDA processes.
To help study the algebraic properties of PDA processes, it is convenient to introduce a special class of constants called {\em recursive constants}.
A recursive constant is defined by a recursive equation.
Formally the set of the PDA processes and the set of the constants definable in a PDA $\Gamma$ are generated from the following BNF:
\begin{eqnarray*}
P &\ :=\ & {\bf 0} \; \mid\; l \; \mid\; pXC_{[n]} \; \mid\; l\cdot C_{[n]}, \\
C_{[n]} &\ :=\ & (P_{1},\ldots,P_{n}) \; \mid\; V_{[n]}.
\end{eqnarray*}
In the above definition, $l$ ranges over the set $\mathbb{N}$ of {\em positive} integer and the notation $[n]$ stands for the set $\{1,2,\ldots,n\}$.
An $n$-{\em ary} constant $C_{[n]}$ is either an $n$-tuple of processes or an $n$-{\em ary} recursive constant.
An alternative notation for $(P_{1},\ldots,P_{n})$ is $\left(P_{i}\right)_{i\in[n]}$.
A unary constant $(P)$ is identified with the process $P$ syntactically.
We sometimes omit the subscript in $C_{[n]}$ when no confusion may arise.
A process is either the {\em nil} process ${\bf 0}$, or a {\em selection} process $l$, or a {\em sequential} process $pXC_{[n]}$, or a {\em composition} process $l\cdot C_{[n]}$.
In the composition process $l\cdot C_{[n]}$ the role of the process $l$ is to select a continuation from $C_{[n]}$.
We impose the following equality on composition processes:
\begin{eqnarray*}
l\cdot (P_{1},\ldots,P_{n}) &=& \left\{\begin{array}{ll}
P_{i}, & \ \mathrm{if}\ l\in[n], \\
l, & \ \mathrm{otherwise};
\end{array}\right. \\
l\cdot V_{[n]} &=& l,\ \ \mathrm{if}\ l>n.
\end{eqnarray*}
Throughout this paper the equality symbol ``$=$'' stands for grammar equality.
So $l\cdot (P_{1},\ldots,P_{n})$ is syntactically identified to $P_{i}$ if $l\in[n]$ and to $l$ if $l>n$.
Similarly $l\cdot V_{[n]}$ is syntactically the same as $l$ if $l>n$.
We sometimes write $C_{[n]}(i)$ for $i\cdot C_{[n]}$ if $i\in[n]$.

It will be very helpful to generalize the composition operation between a selection process and a constant to one between a process/constant and a constant.
The operation is defined by the following induction.
\begin{eqnarray*}
{\bf 0}\cdot C' &=& {\bf 0}, \\
(pXC)\cdot C' &=& pX(C\cdot C'), \\
V_{[n]}(i)\cdot C &=& V_{[n]}(i), \\
(P_{1},\ldots,P_{n})\cdot C &=& (P_{1}\cdot C,\ldots,P_{n}\cdot C), \\
V_{[n]}\cdot C &=& V_{[n]}.
\end{eqnarray*}
Using simple induction it is easy to show that the generalized composition operator is associative.
In the sequel we shall omit the composition operator.
So the concatenation $PC$, respectively $CC'$, is the composition $P\cdot C$, respectively the composition $C\cdot C'$.

Before explaining recursive constants, we need to introduce the function $ln(\_)$ defined as follows.
\begin{eqnarray*}
ln({\bf 0}) &=& \emptyset, \\
ln(l) &=& \{l\}, \\
ln(pX(P_{1},\ldots,P_{n})) &=& ln(P_{1})\cup\ldots\cup ln(P_{n}), \\
ln(V_{[n]}(i)) &=& \emptyset, \\
ln((P_{1},\ldots,P_{n})) &=& ln(P_{1})\cup\ldots\cup ln(P_{n}), \\
ln(V_{[n]}) &=& \emptyset.
\end{eqnarray*}
The function returns the set of dangling selection processes appearing in a process/constant.

A process or constant is {\em simple} if it contains no occurrences of recursive constant.
A recursive constant $V_{[n]}$ is defined by an equation of the form
\begin{equation}\label{2014-02-15}
V_{[n]} = (P_{1},\ldots,P_{n})\cdot V_{[n]}
\end{equation}
such that the following statements are valid:
\begin{itemize}
\item
$P_{i}$ is simple for each $i\in[n]$.
\item
$ln(P_{i})\subseteq[n]$ for each $i\in[n]$.
\item
If $P_i$ is a selection process, it must be $i$.
\end{itemize}
The grammar equality  $V_{[n]}(i)=P_{i}V_{[n]}$ is derivable by our definition of the extended composition operator.
If $P_i=i$ then $V_{[n]}(i)=P_{i}V_{[n]}$ is the trivial identity.
In this case we impose the grammar equality $V_{[n]}(i)={\bf 0}$.
In the terminology of PDA the constant $V_{[n]}$ is a stack that has recursive behaviour.
The simplest $n$-ary recursive constant $I_{[n]}$ is defined by $I_{[n]}=(1,\ldots,n)I_{[n]}$, which is the same as the $n$-ary constant $({\bf 0},\ldots,{\bf 0})$.
The terminologies `simple constant' and `recursive constant' are introduced in Stirling's work~\cite{Stirling1996-nPDA-decidability}.
The constants introduced in this paper are more general and have better composition property.
In the sequel we will write $L,M,N,O,P,Q$ for processes, $A_{[n]},B_{[n]},C_{[n]},D_{[n]}$ for constants, $U_{[n]}$ for simple constant, and $V_{[n]}$ for recursive constant.

We now introduce the auxiliary notation $p\alpha C_{[n]}$, which will make evident the relationship between the standard PDA and our extended PDA.
The process abbreviated by $p\alpha C_{[n]}$ is inductively defined as follows:
(i) $p_{i}\epsilon = i$, and (ii) $pX\beta C = pX(p_{1}\beta C,\ldots,p_{\mathfrak{q}}\beta C)$.
Now let $p\alpha = p\alpha(p_{1}\epsilon,\ldots,p_{\mathfrak{q}}\epsilon)$.
In this way a standard PDA process $p\alpha$ can be seen as an abbreviation of a simple pushdown process written in our notation.

Finally we are in a position to define the operational semantics of PDA.
Every transition rule in $\mathcal{R}$ is of the form $pX\stackrel{\ell}{\longrightarrow}q\alpha$.
The transition semantics is defined by the following two rules:
\begin{eqnarray*}
\inference{pX\stackrel{\ell}{\longrightarrow}q\alpha \in \mathcal{R}}{pX\stackrel{\ell}{\longrightarrow}q\alpha}\ \ \  &
\inference{pX\stackrel{\ell}{\longrightarrow}q\alpha}{pXC_{[n]}\stackrel{\ell}{\longrightarrow}q\alpha C_{[n]}}
\end{eqnarray*}
We shall use the standard notation $\stackrel{\ell^{*}}{\longrightarrow}$ and $\Longrightarrow$ and $\stackrel{\ell^{*}}{\Longrightarrow}$.
A process of the form $l$ is an accepting process, which cannot perform any action.
A process $P$ {\em accepts} a word $w$ if $P\stackrel{w}{\Longrightarrow}l$ for some $l$.
A process $P$ is {\em normed}, or $P$ is an nPDA process, if $P\stackrel{\ell^{*}}{\longrightarrow}l$ for some $\ell^{*},l$.
A PDA $\Gamma=(\mathcal{Q},\mathcal{V},\mathcal{L},\mathcal{R})$ is normed, or $\Gamma$ is an nPDA, if $pX$ is normed for all $p\in\mathcal{Q}$ and all $X\in\mathcal{V}$.
The notation $\mathrm{PDA}^{\epsilon+}$ will refer to the variant of PDA with $\epsilon$-pushing transitions, and $\mathrm{nPDA}^{\epsilon-}$ to the variant of nDPA with $\epsilon$-popping transitions.

At a more intuitive level a process can be identified to a possibly infinite $\mathfrak{q}$-branching labeled tree with an internal node labeled by $pX$ for some $p\in\mathcal{Q},X\in\mathcal{V}$ and a leaf labeled by either a positive number or ${\bf 0}$.
Regarded as a tree the composition $C_{[n]}C_{[m]}$ is obtained by replacing every leaf of $C_{[n]}$ with label $l\le m$ by the tree of $C_{[m]}(l)$ and leaving every leaf of $C_{[n]}$ with label $l>m$ unchanged.
Using the tree interpretation a constant $C_{[n]}$ can be {\em decomposed} into a {\em simple} constant $C_{[n]}'$ followed by another constant $C_{[m]}''$, i.e. $C_{[n]}=C_{[n]}'C_{[m]}''$.
Such a decomposition is of course not unique.
Normally we should specify the arity of $C_{[m]}'$.
We will not formally define the notion of decomposition.
For the purpose of this paper it is sufficient to see an example.
Consider a simple constant $C   = (p_1X_1(l_1, p_2X_2(l_2,l_3)), q_1Y_1(q_2Y_2(l_4,l_5), {\bf 0}))$.
Diagrammatically $C$ can be depicted as the following two trees.
\begin{center}
  \resizebox{.7\textwidth}{!}{
    \begin{tikzpicture}
      \node (p6) at (3,0) {$l_2$};
      \node (p7) at (5,0) {$l_3$};

      \node (q4) at (7,0) {$l_4$};
      \node (q5) at (9,0) {$l_5$};

      \node (p2) at (1,2) {$l_1$};
      \node (p3) at (4,2) {$p_2X_2$};
      \node (q2) at (8,2) {$q_2Y_2$};
      \node (q3) at (11,2) {${\bf 0}$};

      \node (p1) at (2.5,4) {$p_1X_1$};
      \node (q1) at (9.5,4) {$q_1Y_1$};
      \draw (p1) -- (p2);
      \draw (p1) -- (p3);
      \draw (p3) -- (p6);
      \draw (p3) -- (p7);

      \draw (q1) -- (q2);
      \draw (q1) -- (q3);
      \draw (q2) -- (q4);
      \draw (q2) -- (q5);

      \node (d1) at (12.5, 2) {$(1)$};
      \node (d2) at (10.5, 0) {$(2)$};

      \draw [dashed] (-.4,2) -- (p2) --(p3)-- (q2) -- (q3) -- (d1);
      \draw [dashed] (1.6,0 ) -- (p6) -- (p7) -- (q4) -- (q5) -- (d2);
    \end{tikzpicture}}
\end{center}
Two decompositions of $C$ are $C_{11}C_{12}$ and $C_{21}C_{22}$, where $C_{11}$, $C_{12}$, $C_{21}$, $C_{22}$ are defined as follows.
\begin{itemize}
\item [(1)] $C_{11}=(p_1X_1(1,2), q_1Y_1(3,4))$ and $C_{12}=(l_1, p_2X_2(l_2,l_3), q_2Y_2(l_4,l_5),{\bf 0})$; \vspace{.5em}
\item [(2)] $C_{21}=(p_1X_1(l_1,p_2X_2(1,2)), q_1Y_1(q_2Y_2(3,4),{\bf 0}))$ and $C_{22}=(l_2,l_3,l_4,l_5)$.
\end{itemize}
The decompositions are indicated by the dashed lines in the above diagram.
Given a process $P$ and $k>0$, its {\em simple prefix $P{\upharpoonright}_{k}$ up to level $k>0$}, which must be a simple process, is defined as follows:
\begin{eqnarray*}
l{\upharpoonright}_{k} &=& l, \\
{\bf 0}{\upharpoonright}_{k} &=& {\bf 0}, \\
(pX(P_{1},\ldots,P_{n})){\upharpoonright}_{k} &=& \left\{ \begin{array}{ll} pX(1,\ldots,n), & \mathrm{if}\ k=1, \\ pX(P_{1}{\upharpoonright}_{k-1},\ldots,P_{n}{\upharpoonright}_{k-1}), & \mathrm{if}\ k>1. \end{array} \right.
\end{eqnarray*}
Using the above example we see that $C{\upharpoonright}_{1}=C_{11}$ and $C{\upharpoonright}_{2}=C_{21}$.
The operation $(\_){\upharpoonright}_{k}$ can be applied to a constant with the obvious effect.

The {\em height} of a simple process $pX(P_{1},\ldots,P_{n})$ is defined as follows:
\begin{eqnarray*}
|{\bf 0}| &=& 0,\\
|l| &=& 0,\\
|pX(P_{1},\ldots,P_{n})| &=& 1+\max_{1\le i\le n}\{|P_{i}|\}.
\end{eqnarray*}
The height of a simple constant $(P_{1},\ldots,P_{n})$ is the maximum of the height of all $P_{1},\ldots,P_{n}$.
If we replace the $\max$ operation in the above definition by the $\min$ operation, we get the {\em minimal height} $\min|Q|$ of a process $Q$.
Given a process $P$ and a number $k>0$, we can decompose $P$ into $P{\upharpoonright}_{k}\cdot C$ for some constant $C$ such that $P{\upharpoonright}_{k}$ is a simple process with $|P{\upharpoonright}_{k}| =k$.

\section{Branching Bisimilarity}\label{sec-Branching-Bisimilarity}

The definition of branching bisimilarity is due to van Glebbeek and Weijland~\cite{vanGlabbeekWeijland1996}.
Our definition of branching bisimilarity for PDA is similar to Stirling's definition given in~\cite{Stirling1996-nPDA-decidability}.
\begin{definition}\label{2014-03-30}
A symmetric relation $\mathcal{R}$ on PDA processes is a {\em branching bisimulation} if the following statements are valid:
\begin{enumerate}
\item If $P\mathcal{R}Q\stackrel{a}{\longrightarrow}Q'$ then $P\Longrightarrow P''\stackrel{a}{\longrightarrow}P'\mathcal{R}Q'$ and $P''\mathcal{R}Q$ for some $P',P''$.
\item If $P\mathcal{R}Q\stackrel{\epsilon}{\longrightarrow}Q'$ then either $P\mathcal{R}Q'$ or $P\Longrightarrow P''\stackrel{\epsilon}{\longrightarrow}P'\mathcal{R}Q'$ for some $Q',Q''$ such that $P''\mathcal{R}Q$.
\item $P=j$ if and only if $Q=j$.
\end{enumerate}
The {\em branching bisimilarity} $\simeq$ is the largest branching bisimulation.
\end{definition}
The condition 3 in Definition~\ref{2014-03-30} guarantees that $\simeq$ is closed under composition.
We write $\simeq_{\mathrm{nPDA}^{\epsilon+}}$ for example for the branching bisimilarity on $\mathrm{nPDA}^{\epsilon+}$ processes.

A technical lemma that plays an important role in the study of branching bisimilarity is the Computation Lemma~\cite{vanGlabbeekWeijland1996}.

\begin{lemma}\label{computation-lemma}
If $P_{0}\stackrel{\epsilon}{\longrightarrow}P_{1}\stackrel{\epsilon}{\longrightarrow}\ldots \stackrel{\epsilon}{\longrightarrow}P_{k}\simeq P_{0}$, then $P_{0}\simeq P_{1}\simeq\ldots\simeq P_{k}$.
\end{lemma}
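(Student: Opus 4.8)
The plan is to reduce the statement to showing that every $P_i$ is branching bisimilar to $P_0$; since $\simeq$ is an equivalence and $P_k\simeq P_0$ by hypothesis, transitivity then yields $P_0\simeq P_1\simeq\cdots\simeq P_k$. To this end I would exhibit a single branching bisimulation containing all the pairs $(P_i,P_0)$, and it is cleanest to take its $\simeq$-closure. Concretely, let
\[
\mathcal{S}=\ \simeq\ \cup\ \{(S,T)\mid S\simeq P_a\text{ and }T\simeq P_b\text{ for some }a,b\in\{0,\ldots,k\}\}.
\]
This relation is symmetric and contains $\simeq$, so once I check that $\mathcal{S}$ is a branching bisimulation it follows that $\mathcal{S}\subseteq\simeq$, which is exactly what is needed. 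Clause 3 of Definition~\ref{2014-03-30} is immediate: each of $P_0,\ldots,P_{k-1}$ performs an $\epsilon$-transition and hence is not an accepting process, while $P_k\simeq P_0$ forces $P_k$ to be non-accepting as well; so no $P_i$ equals any $j$, and every process $\simeq$-related to some $P_a$ inherits this.

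Before verifying the transfer clauses I would record the routine \emph{weak transfer} property of $\simeq$, proved by induction on the length of $\Longrightarrow$: if $P\simeq Q$ and $P\Longrightarrow P_1\stackrel{a}{\longrightarrow}P_2$, then $Q\Longrightarrow Q_1\stackrel{a}{\longrightarrow}Q_2$ with $P_1\simeq Q_1$ and $P_2\simeq Q_2$, and the silent analogue obtained by iterating clause 2. The essential bookkeeping is that the state reached immediately before the matching action is kept $\simeq$-related to the corresponding state on the challenger's side; this is what later lets me meet the stuttering requirement.

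Now I would check the transfer clauses for a pair in $\mathcal{S}$. By the $\simeq$-closure it suffices to treat a challenge $P_b\stackrel{\ell}{\longrightarrow}R$ that must be answered from $P_a$, the general case following by pre- and post-composing with the surrounding $\simeq$ steps. If $a\le b$ then $P_a\Longrightarrow P_b$ along the given silent chain, so $P_a$ simply catches up silently and then performs the very transition $P_b\stackrel{\ell}{\longrightarrow}R$; taking $P''=P_b$ and $P'=R$ satisfies both clauses, since $P_b\mathrel{\mathcal{S}}P_b$ and $R\mathrel{\mathcal{S}}R$. If $a>b$ the simulator is \emph{ahead} of the challenger and cannot move backwards, and here I use the loop: since $a\le k$ we have $P_a\Longrightarrow P_k$, while $P_k\simeq P_0\Longrightarrow P_b\stackrel{\ell}{\longrightarrow}R$. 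For an observable $\ell=a$, feeding $P_0\Longrightarrow P_b\stackrel{a}{\longrightarrow}R$ through $P_k\simeq P_0$ via weak transfer yields $P_k\Longrightarrow Q''\stackrel{a}{\longrightarrow}Q'$ with $Q''\simeq P_b$ and $Q'\simeq R$; prepending $P_a\Longrightarrow P_k$ gives the required answer, with $P''=Q''\mathrel{\mathcal{S}}P_b$ and $P'=Q'\mathrel{\mathcal{S}}R$.

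The hard part will be the silent instance of this last case, where clause 2 demands that the state $P''$ just before the matching $\epsilon$-step be $\mathcal{S}$-related to the challenger $P_b$, not merely that the final state match $R$. This is precisely why I close $\mathcal{S}$ under $\simeq$ and why weak transfer is stated so as to preserve the pre-action state up to $\simeq$. Routing $P_a\Longrightarrow P_k\simeq P_0\Longrightarrow P_b\stackrel{\epsilon}{\longrightarrow}R$, I match $P_0\Longrightarrow P_b$ to reach some $\widehat{Q}\simeq P_b$ and then match the single step $P_b\stackrel{\epsilon}{\longrightarrow}R$: if it is absorbed one gets $\widehat{Q}\simeq R$, hence $P_b\simeq R$, so $P_a\mathrel{\mathcal{S}}R$ answers by the first disjunct of clause 2; otherwise the matching computation out of $P_k$ ends with a step $\widehat{Q}\stackrel{\epsilon}{\longrightarrow}Q'$ where $\widehat{Q}\simeq P_b$ and $Q'\simeq R$, so $\widehat{Q}\mathrel{\mathcal{S}}P_b$ and $Q'\mathrel{\mathcal{S}}R$ as required by the second disjunct. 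Once this case is discharged, $\mathcal{S}$ is a branching bisimulation and the lemma follows.
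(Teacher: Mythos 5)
The paper does not prove this lemma at all: it is imported as the ``Computation Lemma'' with a citation to van Glabbeek and Weijland, so there is no in-paper argument to compare against. Your proof is correct and is essentially the classical one from that source: exhibit the $\simeq$-closure of the pairs $(P_a,P_b)$ as a branching bisimulation, use the loop $P_a\Longrightarrow P_k\simeq P_0\Longrightarrow P_b$ when the simulator is ahead of the challenger, and fall back on the ``do nothing'' disjunct of clause~2 whenever a silent step is absorbed --- which is exactly what closing $\mathcal{S}$ under $\simeq$ on both sides licenses, since absorption yields $R\simeq P_b$ and hence $(P_a,R)\in\mathcal{S}$. The one place your write-up is looser than it should be is the opening reduction ``it suffices to treat a challenge $P_b\stackrel{\ell}{\longrightarrow}R$ answered from $P_a$'': after transferring a challenge $T\stackrel{\ell}{\longrightarrow}T'$ through $T\simeq P_b$ one obtains a \emph{weak} move $P_b\Longrightarrow B''\stackrel{\ell}{\longrightarrow}B'$ rather than a single step, and the final transfer from $P_a$ back to $S$ reintroduces the same absorption issue you flag in the hard case. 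Both compositions go through by the very same two tools you already deploy (weak transfer preserving the pre-action state for visible actions, and the $\simeq$-closure fallback when a silent step is absorbed), so this is a matter of spelling out detail rather than a gap; it would be worth writing out, since pre/post-composition with $\simeq$ is precisely where branching-bisimulation ``up to'' arguments usually fail.
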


A silent transition $P\stackrel{\epsilon}{\longrightarrow}P'$ is {\em state-preserving}, notation $P\rightarrow P'$, if $P\simeq P'$.
It is a {\em change-of-state}, notation $P\stackrel{\iota}{\longrightarrow}P'$, if $P\not\simeq P'$.
We write $\rightarrow^{*}$ for the reflexive and transitive closure of $\rightarrow$.
The notation $P\nrightarrow$ stands for the fact that $P\not\simeq P'$ for all $P'$ such that $P\stackrel{\epsilon}{\longrightarrow}P'$.
Let $\jmath$ range over $\mathcal{L}\cup\{\iota\}$.
We will find it necessary to use the notation $\stackrel{\jmath}{\longrightarrow}$.
The transition $P\stackrel{\jmath}{\longrightarrow}P'$ refers to either $P\stackrel{a}{\longrightarrow}P'$ for some $a\in\mathcal{L}$ or $P\stackrel{\iota}{\longrightarrow}P'$.
Lemma~\ref{computation-lemma} implies that if $P_{0}\stackrel{\jmath}{\longrightarrow}P_{1}$ is bisimulated by $Q_{0}\stackrel{\epsilon}{\longrightarrow}Q_{1}\stackrel{\epsilon}{\longrightarrow}\ldots\stackrel{\epsilon}{\longrightarrow}Q_{k}\stackrel{\jmath}{\longrightarrow}Q_{k+1}$,
then $Q_{0}\rightarrow Q_{1}\rightarrow\ldots\rightarrow Q_{k}$.
This property of the branching bisimilarity will be used extensively.

Given a PDA process $P$, the {\em norm} of $P$, denoted by $\|P\|$, is a function from $\mathbb{N}$ to $\mathbb{N}\cup\{\bot\}$ such that the following holds:
\begin{itemize}
\item $\|P\|(h)=\bot$ if and only if there does not exist any $\ell^{*}$ such that $P\stackrel{\ell^{*}}{\longrightarrow}h$.
\item $\|P\|(h)$ is the least number $i$ such that $\exists \jmath_{1}\ldots\jmath_{i}.\;P\rightarrow^{*}\stackrel{\jmath_{1}}{\longrightarrow}\ldots\rightarrow^{*}\stackrel{\jmath_{i}}{\longrightarrow}\rightarrow^{*}h$.
\end{itemize}
It is easy to see that $\mathrm{def}\,\|P\|=\{h\mid \|P\|(h)\ne\bot\}$ is finite.
A process $P$ is {\em normed} if $\mathrm{def}\,\|P\|\ne\emptyset$.
It is {\em unnormed} otherwise.
It is clear that every process of the form $V_{[n]}(i)$ is unnormed, where $i\in[n]$.
For normed process $P$ we write $\min\|P\|$ for $\min\{\|P\|(h)\mid h\in\mathrm{def}\,\|P\|\}$.

We shall use the following convention in the rest of the paper.
\begin{itemize}
\item $\mathfrak{r}=\max\left\{|\eta| \left|\; pX\stackrel{\ell}{\longrightarrow}q\eta\in\Delta\ \mathrm{for}\ \mathrm{some}\ p,q,X\right\}\right.$;
\item $\mathfrak{m}=\max\left\{\min_{i}\{i\mid \exists h.\exists\ell_{1}\ldots\ell_{i}. pX\stackrel{\ell_{1}}{\longrightarrow}\ldots\stackrel{\ell_{i}}{\longrightarrow}h\} \mid pX\ \mathrm{is}\ \mathrm{normed}\right\}$.
\end{itemize}
Both $\mathfrak{r}$ and $\mathfrak{m}$ can be effectively calculated.
By definition $\|pX\|(i)\le\mathfrak{m}$ for all $p,X$ and all $i\in\mathrm{def}\,\|pX\|$.

If $qX\stackrel{\epsilon}{\longrightarrow}q_{1}X_{1}\stackrel{\epsilon}{\longrightarrow}\ldots\stackrel{\epsilon}{\longrightarrow}q_{k}X_{k}\stackrel{\epsilon}{\longrightarrow}pX$, then $qXC\simeq q_{1}X_{1}C\simeq\ldots\simeq q_{k}X_{k}C$ for all $C$.
In this case we get an equivalent PDA by replacing $X_{1},\ldots,X_{k}$ by $X$.
From now on we assume that such circularity does not occur in our PDA.
This means that in a PDA with $\mathfrak{n}$ variables and $\mathfrak{q}$ states the length of a silent transition sequence of the form $qX\stackrel{\epsilon}{\longrightarrow}q_{1}X_{1}\stackrel{\epsilon}{\longrightarrow}\ldots\stackrel{\epsilon}{\longrightarrow}q_{k}X_{k}$ is less than $\mathfrak{n}\mathfrak{q}$.

In some proofs to be given later we need to use the game theoretical interpretation of bisimulation.
A {\em bisimulation game}~\cite{Stirling1998,JancarSrba2008} for a pair of processes $(P_0,P_1)$, called a {\em configuration}, is played between Attacker and Defender in an alternating fashion.
It is played according to the following rules:
\begin{enumerate}
\item Beginning with the configuration $(P,Q)$, Attacker picks up some $P_i$, where $i\in\{0,1\}$, to start with.
It then chooses some transition $P_i\stackrel{\ell}{\longrightarrow}P_i'$.
\item Defender must respond in the following manner:
\begin{enumerate}
\item Do nothing. This option is available if $\ell=\epsilon$.
\item Choose a transition sequence $P_{1-i}\stackrel{\epsilon}{\longrightarrow}P_{1-i}^{1}\stackrel{\epsilon}{\longrightarrow}\ldots\stackrel{\epsilon}{\longrightarrow}P_{1-i}^{k-1}\stackrel{\ell}{\longrightarrow}P_{1-i}^{k}$.
\end{enumerate}
\item If case 2(a) happens the new configuration is $(P_i,P_{1-i})$.
If case 2(b) happens Attacker chooses one of $\{(P_i,P_{1-i}^{1}),\ldots,(P_i,P_{1-i}^{k-1}),(P_i',P_{1-i}^{k})\}$ as the new configuration.
\item The game continues with the new configuration.
\end{enumerate}
Attacker wins a bisimulation game if Defender gets stuck in the game.
Defender wins a bisimulation game if Attacker does not win the game.
Attacker/Defender has a winning strategy if it can win no matter how its opponent plays.
The effectiveness of the bisimulation game is enforced by the following lemma.

\begin{lemma}
$P\simeq Q$ if and only if Defender has a winning strategy for the bisimulation game starting with the configuration $(P,Q)$.
\end{lemma}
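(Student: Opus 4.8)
The plan is to prove the two implications separately, exploiting the standard correspondence between a coinductively defined equivalence and the existence of a winning strategy in the associated game. Throughout, I treat a configuration $(R,S)$ whose two components carry mismatched accepting labels (one is some $j$ while the other is not $j$) as an immediate Attacker win, so that clause~3 of Definition~\ref{2014-03-30} is reflected in the game; every other way for Defender to lose is ``getting stuck'' as in the statement of the rules.

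For the direction $P\simeq Q\Rightarrow$ Defender wins, I would let Defender play so as to maintain the \emph{invariant} that the current configuration $(R,S)$ satisfies $R\simeq S$, starting from $(P,Q)$. Given an Attacker move I argue, by symmetry of $\simeq$, for the case where Attacker fires $R\stackrel{\ell}{\longrightarrow}R'$. If $\ell=a$ is visible, clause~1 of Definition~\ref{2014-03-30} supplies a matching run $S\Longrightarrow S''\stackrel{a}{\longrightarrow}S'$ with $S''\simeq S$ and $R'\simeq S'$; Defender answers with option~2(b) along this run. If $\ell=\epsilon$, clause~2 gives either $R'\simeq S$, in which case Defender answers with option~2(a), or a matching run $S\Longrightarrow S''\stackrel{\epsilon}{\longrightarrow}S'$, handled as before. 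The key point is that whichever configuration Attacker then selects in rule~3, the invariant survives: the final pair is bisimilar by the clause just used, and for every \emph{intermediate} pair $(R,S_{j})$ produced by the silent prefix $S\Longrightarrow S''$ the Computation Lemma (Lemma~\ref{computation-lemma}) forces $S_{j}\simeq S\simeq R$, since $S\stackrel{\epsilon}{\longrightarrow}\ldots\stackrel{\epsilon}{\longrightarrow}S''\simeq S$. Because the invariant also guarantees clause~3 at every configuration, Defender is never stuck and never reaches a label mismatch, so Defender wins.

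For the converse I would set $\mathcal{R}=\{(R,S)\mid\text{Defender has a winning strategy from }(R,S)\}$ and show it is a branching bisimulation; as $(P,Q)\in\mathcal{R}$ this yields $P\simeq Q$. Symmetry of $\mathcal{R}$ is immediate because Attacker may open on either component. To verify clauses~1 and~2, suppose $(R,S)\in\mathcal{R}$ and $S\stackrel{\ell}{\longrightarrow}S'$; this is a legal Attacker move, so the winning strategy prescribes a response. If it is option~2(b), it exhibits a run $R\Longrightarrow R''\stackrel{\ell}{\longrightarrow}R'$, and since the strategy is winning \emph{against every} subsequent Attacker choice in rule~3, each resulting configuration is itself a Defender win, hence lies in $\mathcal{R}$; picking the last intermediate pair gives $R''\mathrel{\mathcal{R}}S$ and picking the final pair gives $R'\mathrel{\mathcal{R}}S'$, exactly clause~1 (for $\ell=a$) or the second disjunct of clause~2 (for $\ell=\epsilon$). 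If instead $\ell=\epsilon$ and the strategy uses option~2(a), the resulting configuration pairs the moved $S'$ with the stationary $R$, which is again a Defender win, giving $R\mathrel{\mathcal{R}}S'$ and the first disjunct of clause~2. Clause~3 holds because a label-mismatched configuration is, by the convention above, an Attacker win and therefore excluded from $\mathcal{R}$. Here I would invoke the routine \emph{subgame} argument that the tail of a winning strategy is a winning strategy from the successor configuration, which legitimises the claim that each resulting configuration lies in $\mathcal{R}$.

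The main obstacle, and the place where branching bisimilarity differs from weak bisimilarity, is the handling of the \emph{intermediate} configurations in rule~3. In the forward direction one must check that Attacker cannot escape by stopping partway through Defender's silent prefix, which is exactly what the Computation Lemma rules out; in the backward direction one must read off the branching side condition $R''\mathrel{\mathcal{R}}S$ from the fact that Attacker \emph{could have} stopped at $R''$. Making these two alignments watertight, together with installing clause~3 into the game's winning condition for accepting processes, is the only delicate part; the remaining bookkeeping is routine.
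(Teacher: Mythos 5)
Your proof is correct. The paper itself states this lemma without proof, treating it as a standard fact about bisimulation games, so there is no argument of the authors' to compare against; your write-up supplies exactly the standard argument one would expect them to have in mind. Both halves are sound: in the forward direction the invocation of the Computation Lemma is precisely what is needed to show that Attacker gains nothing by halting inside Defender's silent prefix (the intermediate configurations $(R,S_j)$ remain bisimilar because $S\stackrel{\epsilon}{\longrightarrow}\cdots\stackrel{\epsilon}{\longrightarrow}S''\simeq S$ forces $S_j\simeq S$), and in the backward direction reading the branching side condition $R''\,\mathcal{R}\,S$ off the last intermediate configuration offered to Attacker in rule~3 is the right move. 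One genuinely useful point you add: the paper's game rules, read literally, never make a label mismatch ($P=j$, $Q\ne j$, both transition-less) a loss for Defender, so clause~3 of Definition~\ref{2014-03-30} is not reflected in "Defender gets stuck"; your explicit convention that such configurations are immediate Attacker wins is needed for the lemma to be true as stated, and you were right to install it rather than gloss over it.
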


This lemma is the basis for game theoretical proofs of process equality.
It is also the basis for game constructions using Defender's Forcing.

\section{Finite Branching Property}\label{sec-Finite-Branching-Property}

Generally bisimilarity is undecidable for models with infinite branching transitions.
For the branching bisimilarity the finite branching property can be described by the following statement:
\begin{quote}
For each $P$ there is a finite set of processes $\{P_{i}\}_{i\in I}$ such that whenever $P\rightarrow^{*}\stackrel{\ell}{\longrightarrow}P'$ there is some $i\in I$ such that $P'\simeq P_{i}$.
\end{quote}
We prove in this section that both $\mathrm{nPDA}^{\epsilon+}$ and $\mathrm{PDA}^{\epsilon-}$ enjoy the finite branching property.
Let's take a look at the former first.

\begin{lemma}\label{2013-07-13}
In $\mathrm{nPDA}^{\epsilon+}$, $\min|pXC|\le\min\|pXC\|$ holds for all $pXC$.
\end{lemma}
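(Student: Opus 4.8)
The plan is to track how the minimal height $\min|\cdot|$ evolves along a transition sequence and to exploit the $\epsilon$-pushing hypothesis to show that silent moves can never lower it; the statement then falls out of a telescoping argument over a path witnessing $\min\|pXC\|$.

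First I would record the effect of a single transition on $\min|\cdot|$. Up to grammar equality every process that can move is a sequential process $rYD$, and from the tree reading of Section~\ref{sec-Syntax-and-Semantics-of-PDA} the children of the root $rY$ are exactly $D(1),\ldots,D(\mathfrak{q})$, so
\[
\min|rYD| \;=\; 1+\min_{i\in[\mathfrak{q}]}\min|D(i)| .
\]
A transition $rYD\stackrel{\ell}{\longrightarrow}s\gamma D$ is induced by a rule $rY\stackrel{\ell}{\longrightarrow}s\gamma$. If $|\gamma|\ge 1$ then $\gamma$ contributes $|\gamma|$ extra levels on top of the copies of $D$, giving $\min|s\gamma D| = |\gamma|+\min_{i}\min|D(i)| \ge \min|rYD|$; if $\gamma=\epsilon$ the move lands inside a single component $D(k)$, so $\min|s\gamma D| = \min|D(k)| \ge \min_{i}\min|D(i)| = \min|rYD|-1$. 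Hence every transition obeys $\min|P'|\ge\min|P|-1$, while every \emph{silent} transition obeys $\min|P'|\ge\min|P|$. The second inequality is precisely where $\mathrm{nPDA}^{\epsilon+}$ is used: being $\epsilon$-pushing forces $|\gamma|\ge 1$ on every silent rule, so a silent step can never pop $Y$ away completely and therefore never decreases the minimal height.

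With these two facts the conclusion is a short computation. Writing $i=\min\|pXC\|$, fix a witnessing path
\[
pXC \;\rightarrow^{*}\stackrel{\jmath_{1}}{\longrightarrow}\rightarrow^{*}\stackrel{\jmath_{2}}{\longrightarrow}\cdots\stackrel{\jmath_{i}}{\longrightarrow}\rightarrow^{*} h
\]
and expand each $\rightarrow^{*}$ into single state-preserving steps. Exactly $i$ of the resulting transitions carry a label $\jmath_{t}\in\mathcal{L}\cup\{\iota\}$; each such significant step lowers $\min|\cdot|$ by at most $1$ by the first fact, whereas every remaining step is a state-preserving silent move and so, being $\epsilon$-pushing, does not lower $\min|\cdot|$ at all by the second fact. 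Since the path ends at the leaf $h$ with $\min|h|=0$, summing the per-step bounds yields $\min|pXC| = \min|pXC|-\min|h| \le i = \min\|pXC\|$, as required. (Only normed $pXC$ need be considered, since otherwise $\min\|pXC\|$ is not defined and there is nothing to prove.)

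The main obstacle is the per-step analysis of the second paragraph rather than the final telescoping: one must verify the height formula $\min|s\gamma D| = |\gamma|+\min_{i}\min|D(i)|$ for $|\gamma|\ge1$ and confirm that every intermediate configuration along the path is genuinely of the sequential form $rYD$, so that the formula applies uniformly at each step. Once that is in place, the role of $\epsilon$-pushing, namely that $|\gamma|\ge1$ on every silent move and hence that only the at most $i$ significant moves can ever shrink $\min|\cdot|$, makes the bound immediate.
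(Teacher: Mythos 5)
Your proof is correct and is essentially the paper's argument in expanded form: the paper's entire proof is the single sentence ``Only external action can remove a symbol from an $\mathrm{nPDA}^{\epsilon+}$ process,'' which is exactly your key observation that $\epsilon$-pushing silent steps never decrease $\min|\cdot|$ while any step decreases it by at most one. The per-step height computation and the telescoping over a witnessing path for $\min\|pXC\|$ are the routine bookkeeping the paper leaves implicit.
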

\begin{proof}
Only external action can remove a symbol from an $\mathrm{nPDA}^{\epsilon+}$ process.
\qed \end{proof}

Using the simple property stated in Lemma~\ref{2013-07-13}, one can show that there is a constant bound for the length of the state-preserving transitions in $\mathrm{nPDA}^{\epsilon+}$.

\begin{lemma}\label{2013-07-15}
If $qXC\rightarrow q_{1}\beta_{1}C\rightarrow\ldots\rightarrow q_{k}\beta_{k}C$ for an $\mathrm{nPDA}^{\epsilon+}$ process $qXC$,
then $k<\mathfrak{q}\mathfrak{n}\mathfrak{r}(\mathfrak{m}+1)^{\mathfrak{q}}$.
\end{lemma}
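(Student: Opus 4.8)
The plan is to read off from the chain a rigid geometric structure and then bound its size by combining three ingredients: invariance of the norm under $\simeq$, the height/norm inequality of Lemma~\ref{2013-07-13}, and the no-circularity bound on single-symbol silent sequences.

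First I would record the invariants of the chain $qXC\rightarrow q_1\beta_1C\rightarrow\ldots\rightarrow q_k\beta_kC$. Since each step is state-preserving we have $q_i\beta_iC\simeq qXC$ for every $i$, so by the consequence of Lemma~\ref{computation-lemma} noted above (an essential transition must be matched up to state-preserving silent steps) all these processes share one and the same norm function; in particular $\min\|q_i\beta_iC\|$ is constant along the chain. Because the automaton is $\epsilon$-pushing, every silent step has $|\alpha|\ge1$, so the stack length $|\beta_i|$ is non-decreasing and no step ever pops. Consequently, once a symbol ceases to be the top symbol it is frozen for the rest of the chain: to touch it again one would first have to expose it by popping, which never happens. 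Thus at each moment only the top symbol is active, and $\beta_{i+1}$ arises from $\beta_i=X_i\gamma_i$ by rewriting $X_i$ into $\alpha_i$ while the tail $\gamma_i$ is left untouched.

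Next I would slice the chain at the \emph{push} steps, those with $|\alpha_i|\ge2$ where the length strictly grows; the remaining steps keep the length constant. A maximal constant-length segment merely rewrites the top symbol over a fixed tail, so it is exactly a single-symbol silent sequence $q_iX_i\stackrel{\epsilon}{\longrightarrow}q_{i+1}X_{i+1}\stackrel{\epsilon}{\longrightarrow}\ldots$, and by the no-circularity assumption every such segment has length $<\mathfrak{n}\mathfrak{q}$. Since the length is non-decreasing, each stack length hosts at most one such segment, so the chain is covered by as many segments as there are distinct stack lengths, and $k$ is at most that number times $\mathfrak{n}\mathfrak{q}$ (plus the push steps themselves). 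It therefore suffices to bound the number of push steps, equivalently the maximal stack length.

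The heart of the argument is this last bound, and it is where Lemma~\ref{2013-07-13} enters. To each configuration I attach the norm-profile of its active top symbol, the vector $(\|q_iX_i\|(p))_{p\in\mathcal{Q}}$; since $\|pX\|(p')\le\mathfrak{m}$, each entry lies in $\{0,\ldots,\mathfrak{m}\}\cup\{\bot\}$, giving at most $(\mathfrak{m}+1)^{\mathfrak{q}}$ possible profiles. By Lemma~\ref{2013-07-13}, $|\beta_i|\le\min|q_i\beta_iC|\le\min\|q_i\beta_iC\|$, so the frozen tail below the top genuinely contributes to the norm. I would then argue that two push steps entering strictly deeper levels cannot carry the same state together with the same top-profile: between them the frozen pile has grown by a non-empty string, and exposing the common tail now costs strictly more essential transitions, while identical top-profiles cannot absorb the surplus, so the norm functions of the two (bisimilar!) configurations would differ. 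Hence distinct levels carry distinct state-plus-profile data, the number of pushes is controlled by $(\mathfrak{m}+1)^{\mathfrak{q}}$ up to the factor $\mathfrak{r}$ bounding the width $|\alpha_i|$ of a single push, and assembling the counts yields $k<\mathfrak{q}\mathfrak{n}\mathfrak{r}(\mathfrak{m}+1)^{\mathfrak{q}}$.

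The main obstacle is precisely this profile-collision step: turning ``a taller pile costs more to expose'' into a clean contradiction with the equality of norm functions, while keeping every constant independent of $C$. The delicate point is that the two configurations share the common tail only as a suffix, so one must compare the lower envelopes $\min_p(\tau(p)+\|p\gamma C\|)$ and $\min_p(\kappa(p)+\|p\gamma C\|)$, where $\tau$ is the shared top-profile and $\kappa$ folds in the cost of popping the extra pile, and show that $\kappa$ strictly dominating $\tau$ drives these envelopes apart on some reachable leaf. This is the calculation I expect to require the most care.
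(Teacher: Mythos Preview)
Your overall strategy is the paper's: use norm invariance and Lemma~\ref{2013-07-13} to bound the maximal stack height reached along the chain, then multiply by the $\mathfrak{n}\mathfrak{q}$ bound on single-symbol silent runs. The gap is in the profile-collision step, and it is not merely a delicate calculation---the hypothesis you rely on is false.

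You assume that the profile $\kappa$ for emptying the block $X_j\mu$ \emph{strictly dominates} the shared top-profile $\tau=\|qX_i\|=\|qX_j\|$, so that the two lower envelopes $\min_p(\tau(p)+\|p\gamma_iC\|)$ and $\min_{p'}(\kappa(p')+\|p'\gamma_iC\|)$ must differ. But $\kappa(p')=\min_p\bigl(\tau(p)+\|p\mu\|(p')\bigr)$ need not dominate $\tau$: the extra pile $\mu$ can open up an exit state $p'$ that $X_i$ alone could not reach. Concretely, with two states, $\tau(p_1)=1$, $\tau(p_2)=\bot$, $|\mu|=2$, $\|p_1\mu\|(p_2)=2$, $\min\|p_1\gamma_iC\|=5$, $\min\|p_2\gamma_iC\|=3$, one gets $\kappa(p_2)=3<\bot=\tau(p_2)$ and both envelopes equal $6$. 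The surplus cost of popping $\mu$ is exactly absorbed by switching to a state from which the frozen tail is cheaper. So matching (state, top-profile) at different depths does not force a norm discrepancy, and your pigeonhole does not close.

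The paper sidesteps top-profiles entirely. It samples the chain at \emph{geometrically} spaced depths in the ranges $(\mathfrak{r}\mathfrak{m}(\mathfrak{m}{+}1)^{i-1},\mathfrak{r}(\mathfrak{m}{+}1)^{i}]$ and pigeonholes on a single coordinate per sample: the exit state $r_i$, i.e.\ the state in which a minimum-norm path from $q_i\beta_iC$ first exposes $C$. The factor $\mathfrak{m}{+}1$ is precisely what bridges the slack between ``each symbol needs at least one essential pop'' and ``each symbol can be popped in at most $\mathfrak{m}$ essential steps'': writing $j_i$ for the cost of exposing $C$, one has $j_i\le\mathfrak{m}\cdot\mathrm{depth}_i<\mathrm{depth}_{i+1}\le j_{i+1}$, so the $j_i$ strictly increase. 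Two samples with the same exit state $r$ then share the suffix cost from $rC$, forcing $\min\|q_{t_1}\beta_{t_1}C\|\ne\min\|q_{t_2}\beta_{t_2}C\|$, which contradicts bisimilarity. With only $\mathfrak{q}$ exit states one gets $|\gamma|<\mathfrak{r}(\mathfrak{m}{+}1)^{\mathfrak{q}}$, and your constant-height segment count then yields the bound.
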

\begin{proof}
Now suppose $qXC\rightarrow q_{1}Z_{1}\delta_{1}C$.
Let $k_{1}=\min\|q_{1}Z_{1}\delta_{1}C\|$ and let
\[q_{1}Z_{1}\delta_{1}C\rightarrow^{*}\stackrel{\jmath_{1}^{1}}{\longrightarrow}\ldots\rightarrow^{*}\stackrel{\jmath_{j_{1}}^{1}}{\longrightarrow}\rightarrow^{*}r_{1}C
\rightarrow^{*}\stackrel{\jmath_{j_{1}+1}^{1}}{\longrightarrow}\ldots\rightarrow^{*}\stackrel{\jmath_{j_{k_{1}}}^{1}}{\longrightarrow}\rightarrow^{*}h_{1}\]
be a transition sequence of minimal length that empties the stack.
Clearly $j_{1}\le \mathfrak{r}\mathfrak{m}$.
Now suppose $q_{1}Z_{1}\delta_{1}C\rightarrow^{*} q_{2}Z_{2}\delta_{2}\delta_{1}C$ such that
\begin{equation}\label{2013-07-18}
\mathfrak{r}\mathfrak{m}<|Z_{2}\delta_{2}\delta_{1}|\le \mathfrak{r}(\mathfrak{m}+1).
\end{equation}
Let $k_{2}=\min\|q_{2}Z_{2}\delta_{2}\delta_{1}C\|$ and let
\[q_{2}Z_{2}\delta_{2}\delta_{1}C\rightarrow^{*}\stackrel{\jmath_{1}^{2}}{\longrightarrow}\ldots\rightarrow^{*}\stackrel{\jmath_{j_{2}}^{2}}{\longrightarrow}\rightarrow^{*}r_{2}C
\rightarrow^{*}\stackrel{\jmath_{j_{2}+1}^{2}}{\longrightarrow}\ldots\rightarrow^{*}\stackrel{\jmath_{j_{k_{2}}}^{2}}{\longrightarrow}\rightarrow^{*}h_{2}\]
be a transition sequence of minimal length that empties the stack.
One must have $j_{2}>j_{1}$ according to (\ref{2013-07-18}).
By iterating the above argument one gets from
\begin{eqnarray*}
q_{1}Z_{1}\delta_{1}C &\rightarrow^{*}& q_{2}Z_{2}\delta_{2}\delta_{1}C \\
 &\rightarrow^{*}& \ldots \\
 &\rightarrow^{*}& q_{i+1}Z_{i+1}\delta_{i+1}\delta_{i}\ldots\delta_{1}C \\
 &\rightarrow^{*}& \ldots \\
 &\rightarrow^{*}& q_{\mathfrak{q}+1}Z_{\mathfrak{q}+1}\delta_{\mathfrak{q}+1}\delta_{\mathfrak{q}}\ldots\delta_{1}C
\end{eqnarray*}
with $\mathfrak{r}\mathfrak{m}(\mathfrak{m}+1)^{i-1}<|Z_{i+1}\delta_{i+1}\delta_{i}\ldots\delta_{1}|\le \mathfrak{r}(\mathfrak{m}+1)^{i}$ for all $i\in[\mathfrak{q}]$, some states $r_{1},\ldots,r_{\mathfrak{q}+1}$, some numbers $k_{1}<\ldots<k_{\mathfrak{q}+1}$ and $h_{1},\ldots,h_{\mathfrak{q}+1}$.
For each $i\in[\mathfrak{q}+1]$ there is some transition sequence
\[q_{i}Z_{i}\delta_{i}\ldots\delta_{1}C\rightarrow^{*}\stackrel{\jmath_{1}^{i}}{\longrightarrow}\ldots\rightarrow^{*}\stackrel{\jmath_{j_{i}}^{i}}{\longrightarrow}\rightarrow^{*}r_{i}C
\rightarrow^{*}\stackrel{\jmath_{j_{i}+1}^{i}}{\longrightarrow}\ldots\rightarrow^{*}\stackrel{\jmath_{j_{k_{i}}}^{i}}{\longrightarrow}\rightarrow^{*}h_{i}\]
where $k_{i}=\min\|q_{i}Z_{i}\delta_{i}\ldots\delta_{1}C\|$.
Since there are only $\mathfrak{q}$ states, there must be some $t_{1},t_{2}$ such that $0<t_{1}<t_{2}\le \mathfrak{q}+1$ and $r_{t_{1}}=r_{t_{2}}$.
It follows from the minimality that $j_{k_{t_{1}}}-j_{t_{1}}=j_{k_{t_{2}}}-j_{t_{2}}$.
But $j_{t_{2}}>j_{t_{1}}$.
Consequently $j_{k_{t_{1}}}<j_{k_{t_{2}}}$.
This inequality contradicts to the fact that $q_{t_{1}}Z_{t_{1}}\delta_{t_{1}}\ldots\delta_{1}C\simeq q_{t_{2}}Z_{t_{2}}\delta_{t_{2}}\ldots\delta_{1}C$.
We conclude that if $q XC\rightarrow^{*}q'\gamma C$ then $|\gamma| < \mathfrak{r}(\mathfrak{m}+1)^{\mathfrak{q}}$.
It follows from our convention that $k<\mathfrak{q}\mathfrak{n}\mathfrak{r}(\mathfrak{m}+1)^{\mathfrak{q}}$.
\qed \end{proof}

A proof of the following corollary can be read off from the above proof.

\begin{corollary}\label{2013-07-20}
Suppose $P$ is an $\mathrm{nPDA}^{\epsilon+}$ process.
There is a computable bound on the size of any $\mathrm{nPDA}^{\epsilon+}$ process $p\alpha$ such that $p\alpha\simeq P$.
\end{corollary}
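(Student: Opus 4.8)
The plan is to read off a size bound directly from the two ingredients already used in the proof of Lemma~\ref{2013-07-15}: the inequality $\min|pXC|\le\min\|pXC\|$ of Lemma~\ref{2013-07-13}, and the fact --- exploited in the closing contradiction of that proof --- that branching bisimilar processes have equal minimal norms. First I would dispose of a trivial case. Every standard $\mathrm{nPDA}^{\epsilon+}$ process $p\alpha$ is normed, since the underlying PDA is normed and norms compose along concatenation. Hence if $P$ is unnormed no $p\alpha$ can satisfy $p\alpha\simeq P$ (branching bisimilarity preserves normedness, as it preserves the whole norm function), and the statement holds vacuously. So I may assume $P$ is normed and set $N=\min\|P\|$, a fixed natural number determined by $P$.

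Next I would observe that for a standard process the minimal height coincides with the stack length. When $p\alpha$ is expanded as a $\mathfrak{q}$-branching tree, each of the $|\alpha|$ stack symbols contributes exactly one internal level, every internal node has $\mathfrak{q}$ children, and every leaf $p_{i}\epsilon=i$ sits at depth $|\alpha|$; hence $\min|p\alpha|=|\alpha|$. Combining this with Lemma~\ref{2013-07-13} gives $|\alpha|=\min|p\alpha|\le\min\|p\alpha\|$. If moreover $p\alpha\simeq P$, then the two processes have the same minimal norm, exactly the invariant used to derive the contradiction at the end of Lemma~\ref{2013-07-15}, so $\min\|p\alpha\|=\min\|P\|=N$. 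Therefore $|\alpha|\le N$, which already bounds $|\alpha|$, hence the size of $p\alpha$; and there are only finitely many such $p\alpha$ since $\mathcal{Q}$ and $\mathcal{V}$ are finite.

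The only remaining point, and the one requiring care, is the computability of $N=\min\|P\|$: the essential-transition norm $\|\cdot\|$ is defined through the state-preserving relation $\rightarrow$, hence a priori through $\simeq$ itself. I would bypass this apparent circularity by noting that every path $P\stackrel{\ell^{*}}{\longrightarrow}h$ has essential length at most its total length, so $\min\|P\|$ is bounded above by the length of a shortest ordinary transition sequence from $P$ to an accepting leaf. That quantity is computable by a standard PDA reachability argument, and any computable upper bound on $N$ serves equally well as the required bound on $|\alpha|$.

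The main obstacle is precisely this last step --- separating the clean combinatorial estimate $|\alpha|\le\min\|P\|$ from the need to present it as a genuinely \emph{computable} function of $P$ without presupposing a decision procedure for $\simeq$. The combinatorial core (tree depth equals $|\alpha|$, Lemma~\ref{2013-07-13}, and norm invariance under $\simeq$) is routine once assembled; the computability wrapper is what the phrase ``computable bound'' in the statement is really about.
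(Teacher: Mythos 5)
Your argument is correct and is essentially the proof the paper has in mind when it says the corollary ``can be read off'' from the proof of Lemma~\ref{2013-07-15}: you combine Lemma~\ref{2013-07-13} (minimal height bounded by minimal norm, which for a standard process $p\alpha$ gives $|\alpha|\le\min\|p\alpha\|$) with the invariance of the norm under $\simeq$, exactly the two facts driving the contradiction in that proof. Your extra care about replacing $\min\|P\|$ by a computable overestimate (the length of a shortest ordinary accepting run) is a legitimate and welcome explicit treatment of a point the paper leaves implicit.
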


Using Lemma~\ref{2013-07-15} one can define for $\mathrm{nPDA}^{\epsilon+}$ the approximation relation $\simeq_{n}$, the branching bisimilarity up to depth $n$, in the standard fashion.
The infinite approximation sequence approaches to $\simeq$ in the sense that $\bigcap_{i\in\mathbb{N}}\simeq_{i}$ coincides with $\simeq$ on $\mathrm{nPDA}^{\epsilon+}$ processes.
The following theorem follows from the fact that $\simeq_{i}$ is decidable for all $i\in\mathbb{N}$.

\begin{theorem}\label{2013-07-16}
The relation $\not\simeq_{\mathrm{nPDA}^{\epsilon+}}$ is semidecidable.
\end{theorem}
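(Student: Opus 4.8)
The plan is to leverage the approximation characterization $\simeq\,=\,\bigcap_{i\in\mathbb{N}}\simeq_{i}$ recorded just above this statement, which turns the claim into a routine dovetailing over the decidable approximants. First I would exhibit the semidecision procedure. Given two $\mathrm{nPDA}^{\epsilon+}$ processes $P$ and $Q$, the procedure runs through the stages $i=0,1,2,\ldots$ and, at stage $i$, decides whether $P\simeq_{i}Q$ using the decidability of $\simeq_{i}$. It outputs ``$P\not\simeq Q$'' and halts at the first $i$ for which $P\not\simeq_{i}Q$, and otherwise advances to stage $i+1$.

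Next I would verify correctness. Since $\bigcap_{i}\simeq_{i}$ coincides with $\simeq$ on $\mathrm{nPDA}^{\epsilon+}$ processes, we have $P\not\simeq Q$ if and only if $P\not\simeq_{i}Q$ for some $i\in\mathbb{N}$. Hence the procedure halts exactly when $P\not\simeq Q$: if $P\not\simeq Q$ there is a witnessing level $i$, which the enumeration eventually reaches (and by monotonicity of the decreasing chain $\simeq_{0}\supseteq\simeq_{1}\supseteq\cdots$ every later level separates them as well), whereas if $P\simeq Q$ then $P\simeq_{i}Q$ for all $i$ and the procedure runs forever. This is precisely the definition of semidecidability of $\not\simeq_{\mathrm{nPDA}^{\epsilon+}}$.

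I would then justify that each stage is effective, i.e. that $\simeq_{i}$ is decidable, and here I would point back to the finite branching property developed in this section. By Lemma~\ref{2013-07-15} the length of any state-preserving sequence is bounded by the computable constant $\mathfrak{q}\mathfrak{n}\mathfrak{r}(\mathfrak{m}+1)^{\mathfrak{q}}$, so when one unfolds a single level of $\simeq_{i+1}$ from $\simeq_{i}$ the weak transitions $\rightarrow^{*}\stackrel{\ell}{\longrightarrow}$ that must be inspected range over a finite, effectively computable set of derivatives; the continuation tests appearing in the defining clauses are then resolved by the inductively decided $\simeq_{i}$. Thus the approximant can be checked mechanically at every stage.

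The main obstacle is not the dovetailing, which is standard, but the equality $\bigcap_{i}\simeq_{i}=\simeq$ on which correctness rests. The inclusion $\simeq\,\subseteq\,\simeq_{i}$ for every $i$ is a straightforward induction on $i$, but the converse, that $\bigcap_{i}\simeq_{i}$ is itself a branching bisimulation, can fail for infinitely branching transition systems; it is exactly the finite branching property (Lemmas~\ref{2013-07-13} and~\ref{2013-07-15}) that secures it. Were that property to fail, the intersection of the approximants could be strictly coarser than $\simeq$, and the characterization $P\not\simeq Q\iff\exists i.\,P\not\simeq_{i}Q$ driving the procedure would collapse. This is why the finite branching analysis had to precede the present theorem, which is essentially its payoff.
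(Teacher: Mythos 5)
Your proposal matches the paper's own argument: the paper likewise defines the approximants $\simeq_{i}$ using Lemma~\ref{2013-07-15}, notes that $\bigcap_{i\in\mathbb{N}}\simeq_{i}$ coincides with $\simeq$ on $\mathrm{nPDA}^{\epsilon+}$ thanks to the finite branching property, and derives semidecidability of $\not\simeq$ from the decidability of each $\simeq_{i}$. Your additional remarks on why finite branching is essential for the converse inclusion are correct and consistent with the paper's intent.
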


In $\mathrm{PDA}^{\epsilon-}$ one could have equality like $pY\simeq qX^{n}Y$ where an action of $pY$ is bisimulated by a sequence of transitions whose length depends on the size of $qX^{n}Y$.
There is no constant bound for the length of a state preserving silent transitions in $\mathrm{PDA}^{\epsilon-}$.
However the finite branching property clearly holds for the normed $\mathrm{PDA}^{\epsilon-}$ processes.
For an unnormed $\mathrm{PDA}^{\epsilon-}$ process $pX$ notice that due to the restriction on the silent transitions and our convention, the silent transition sequences $pX$ can induce must be of the form $pX\stackrel{\epsilon}{\longrightarrow}q_{1}Y_{1}\stackrel{\epsilon}{\longrightarrow}\ldots\stackrel{\epsilon}{\longrightarrow}q_{m}Y_{m}$ with $m$ bounded by $\mathfrak{n}\mathfrak{q}$.
We conclude that all $\mathrm{PDA}^{\epsilon-}$ processes enjoy the finite branching property.

\begin{theorem}\label{2014-01-02}
The relation $\not\simeq_{\mathrm{PDA}^{\epsilon-}}$ is semidecidable.
\end{theorem}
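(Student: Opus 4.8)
The plan is to establish Theorem~\ref{2014-01-02} as the $\mathrm{PDA}^{\epsilon-}$ analogue of Theorem~\ref{2013-07-16}, the only difference being that the constant bound supplied by Lemma~\ref{2013-07-15} in the $\mathrm{nPDA}^{\epsilon+}$ setting is unavailable and must be replaced by the finite branching property recorded in the discussion preceding the theorem. First I would pin that property down precisely for every $\mathrm{PDA}^{\epsilon-}$ process $P$: there is a computable finite set $\{P_{i}\}_{i\in I}$ of representatives such that whenever $P\rightarrow^{*}\stackrel{\ell}{\longrightarrow}P'$ one has $P'\simeq P_{i}$ for some $i\in I$. For a composite process I would split the leading behaviour into its normed and unnormed parts. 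A normed head admits a computable size bound on the configurations equivalent to a given one, so only finitely many successor classes arise and they can be enumerated; an unnormed head $pX$ can only emit silent transitions of the form $pX\stackrel{\epsilon}{\longrightarrow}q_{1}Y_{1}\stackrel{\epsilon}{\longrightarrow}\ldots\stackrel{\epsilon}{\longrightarrow}q_{m}Y_{m}$ with $m<\mathfrak{n}\mathfrak{q}$, by the $\epsilon$-popping restriction together with the no-circularity convention, so its silent reach is already a computable finite set.

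Second I would define the stratified approximants $\simeq_{n}$ for $\mathrm{PDA}^{\epsilon-}$ exactly in the standard fashion used for $\mathrm{nPDA}^{\epsilon+}$, taking $\simeq_{0}$ to identify processes with the same accepting status and letting $\simeq_{n+1}$ impose the matching conditions of Definition~\ref{2014-03-30} with the intermediate and residual processes required to be related by $\simeq_{n}$. Using the finite branching property I would prove $\bigcap_{n\in\mathbb{N}}\simeq_{n}=\simeq$: the inclusion $\simeq\subseteq\bigcap_{n}\simeq_{n}$ is routine, and for the converse the finiteness up to $\simeq$ of the successor set is exactly what rules out the usual image-infinite counterexample and lets one assemble $\bigcap_{n}\simeq_{n}$ into a branching bisimulation.

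Third, I would argue that each $\simeq_{n}$ is decidable by induction on $n$. The base case tests accepting status; at the inductive step the finite, computable representative set $\{P_{i}\}_{i\in I}$ reduces the matching quantifier over the a priori unboundedly many weak successors to finitely many checks of $\simeq_{n-1}$, decidable by hypothesis. Semidecidability of $\not\simeq_{\mathrm{PDA}^{\epsilon-}}$ then follows at once, since $P\not\simeq Q$ holds iff $P\not\simeq_{n}Q$ for some $n$, which is a semidecidable condition.

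I expect the main obstacle to be the finite branching property itself, and in particular making it effective. Because state-preserving silent sequences in $\mathrm{PDA}^{\epsilon-}$ admit no constant length bound---witness the equality $pY\simeq qX^{n}Y$ mentioned in the text---the set of syntactic weak successors of a process is genuinely infinite, so the whole argument hinges on showing that these collapse to finitely many $\simeq$-classes and, crucially, that representatives of those classes can be computed. The delicate case is a composite process whose normed prefix interacts with an unnormed tail; here I would need a computable bound on the size of configurations equivalent to a normed process, the $\mathrm{PDA}^{\epsilon-}$ counterpart of Corollary~\ref{2013-07-20}, together with a careful use of the composition structure to confine the unbounded silent activity to the bounded, unnormed head.
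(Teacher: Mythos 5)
Your overall architecture --- finite branching property, stratified approximants $\simeq_{n}$ with $\bigcap_{n}\simeq_{n}$ equal to $\simeq$, decidability of each $\simeq_{n}$, hence semidecidability of $\not\simeq$ --- is exactly the paper's route: Theorem~\ref{2014-01-02} is obtained by transplanting the proof of Theorem~\ref{2013-07-16} once the finite branching property is secured for $\mathrm{PDA}^{\epsilon-}$. The genuine gap is in how you propose to secure it. You assert that ``the set of syntactic weak successors of a process is genuinely infinite'' and that one must therefore collapse it to finitely many $\simeq$-classes with \emph{computable representatives}, relying on ``a computable bound on the size of configurations equivalent to a normed process, the $\mathrm{PDA}^{\epsilon-}$ counterpart of Corollary~\ref{2013-07-20}''. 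No such counterpart exists: the paper states at the opening of Section~\ref{sec-Decomposition} that for a $\mathrm{PDA}^{\epsilon-}$ process $P$ there is \emph{no} bound on the size of processes branching bisimilar to $P$ (witness $pY\simeq qX^{n}Y$ for all $n$), and Corollary~\ref{2013-07-20} is read off from Lemmas~\ref{2013-07-13} and~\ref{2013-07-15}, both of which rest essentially on the $\epsilon$-pushing hypothesis. So the tool you lean on for your ``delicate case'' is provably unavailable; moreover, computing representatives of $\simeq$-classes would come uncomfortably close to presupposing the very relation you are trying to semidecide.

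The gap is avoidable because its premise is false: in $\mathrm{PDA}^{\epsilon-}$ the set of weak successors of a \emph{fixed} process is syntactically finite and computable, so no collapsing up to $\simeq$ is needed at all. An $\epsilon$-popping silent step replaces the top symbol by at most one symbol, so silent sequences never grow the process; combined with the no-circularity convention (at most $\mathfrak{n}\mathfrak{q}$ moves before the process must descend a level) and the regularity of recursive constants (only finitely many distinct subterms $QV_{[n]}$ arise), the set $\{P''\mid P\Longrightarrow P''\}$ is finite and effectively enumerable, and appending one visible or change-of-state transition keeps it so. What the example $pY\simeq qX^{n}Y$ actually defeats is only the \emph{uniform} length bound of Lemma~\ref{2013-07-15}: the length of state-preserving silent sequences depends on the process, but for each individual process the reachable set is finite. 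Once you replace your class-representative construction by this direct observation --- which is precisely the paper's one-paragraph argument preceding the theorem --- the remainder of your proposal (approximants, decidability of each $\simeq_{n}$ by induction over the now finite computable successor sets, semidecidability of $\not\simeq$) goes through as written.
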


\section{Decidability of $\mathrm{PDA}^{\epsilon-}$}\label{sec-Decidability of PDAminus}

The basic idea of our proof of the decidability of $\mathrm{PDA}^{\epsilon-}$ is drawn from Stirling's proof for the strong bisimilarity on PDA~\cite{Stirling1996-nPDA-decidability,Stirling1998-nPDA-decidability,Stirling2000-PDA-decidability}.
To explain the key technical tool of Stirling's proof, it is helpful to recall the proof of the semidecidability of the strong bisimilarity of BPA~\cite{HuttelStirling1991}.
To check if $X\alpha\sim Y\beta$, we decompose the goal $X\alpha=Y\beta$ into say subgoals $\alpha=\gamma\beta$ and $X\gamma\beta=Y\beta$ derivable from the bisimulation property.
The latter can be simplified to $X\gamma=Y$ by cancellation.
Now the size of $\gamma$ is small as it were because $\gamma$ is derived in a computationally bounded number of steps.
It follows that the subgoal $X\gamma=Y$ is small and the subgoal $\alpha=\gamma\beta$ is smaller than $X\alpha=Y\beta$ in the sense that $\alpha$ is smaller than $X\alpha$ and the size of $\gamma$ is under control.
Using this `smallness' property we can build a finite tree of subgoals, called a tableau, in an organized fashion.
A semidecidable procedure is then designed by enumerating all finite tableaux and checking if any one of them giving rise to a strong bisimulation.
The unnormed BPA processes enjoy the following weak cancelation property:
If there is an infinite family of pairwise nonbisimilar BPA processes $\{\delta_{i}\}_{i\in\mathbb{N}}$ such that $\alpha\delta_{i}\sim\beta\delta_{i}$ for all $i\in\mathbb{N}$, then $\alpha\sim\beta$.
This weak cancelation guarantees that in a tree of subgoals there cannot be a path containing an infinite number of subgoals $\{\alpha\delta_{i}\sim\beta\delta_{i}\}_{i\in\mathbb{N}}$, where $\alpha\not\sim\beta$, without producing equivalent subgoals.
It follows from K\"{o}nig that only finite tableaux need be considered.
So the same semidecidable procedure works for the unnormed BPA.

Given BPA processes $\alpha,\beta$ with $\alpha\not\sim\beta$, we say that $\{\gamma_i\}_{i\in I}$ is a {\em minimal set of fixpoints} for $\alpha,\beta$ if the following hold:
\begin{itemize}
\item
For each $i\in I$ the process $\gamma_i$ is a fixpoint for $\alpha,\beta$, i.e. $\alpha\gamma_{i}\sim\beta\gamma_{i}$.
\item
For all $i,j\in I$ if $i\ne j$ then $\gamma_i\not\sim\gamma_j$.
\item
If $\alpha\gamma\sim\beta\gamma$ then $\gamma\sim\gamma_{i}$ for some $i\in I$.
\end{itemize}
Both the strong and the weak cancellation properties of BPA can be reiterated in the following more enlightening manner.
\begin{lemma}\label{2014-02-25}
Let $\alpha,\beta$ be BPA processes.
If $\alpha\not\sim\beta$ then the minimal set of fixpoints for $\alpha,\beta$ is finite.
\end{lemma}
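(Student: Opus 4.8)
The plan is to read the statement as the contrapositive of the weak cancellation property quoted just above the lemma, so that the bulk of the work has already been done. First I would make the existence of a minimal set of fixpoints explicit. The set $F=\{\gamma\mid \alpha\gamma\sim\beta\gamma\}$ of fixpoints is closed under $\sim$, and selecting one representative from each $\sim$-class contained in $F$ yields a family $\{\gamma_i\}_{i\in I}$ satisfying the three defining clauses: each $\gamma_i$ is a fixpoint, the $\gamma_i$ are pairwise non-bisimilar, and every fixpoint is $\sim$-equivalent to some $\gamma_i$. Any two such families are in $\sim$-respecting bijection, since both are complete irredundant systems of representatives, so it is legitimate to speak of \emph{the} minimal set of fixpoints and to ask whether it is finite.

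The finiteness is then immediate. Suppose, for contradiction, that $I$ is infinite. Since every infinite set contains a countably infinite subset, I would select a subfamily and re-index it as $\{\gamma_i\}_{i\in\mathbb{N}}$; by construction these are pairwise non-bisimilar BPA processes with $\alpha\gamma_i\sim\beta\gamma_i$ for every $i$. The weak cancellation property now forces $\alpha\sim\beta$, contradicting the hypothesis $\alpha\not\sim\beta$. Hence $I$ must be finite.

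It is worth recording that the normed case is subsumed as a degenerate instance. When $\alpha,\beta$ are normed, the strong cancellation property gives $\alpha\gamma\sim\beta\gamma\Rightarrow\alpha\sim\beta$, so under the assumption $\alpha\not\sim\beta$ the set $F$ of fixpoints is empty and the minimal set of fixpoints is $\emptyset$, which is trivially finite. In this way the single statement uniformly reiterates both cancellation laws, which is presumably the point of the reformulation.

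The only genuine difficulty is external to this argument: it lies in establishing the weak cancellation property for unnormed BPA in the first place, which is exactly the ingredient we are permitted to assume. Granting it, the proof is essentially a matter of phrasing, and the one step requiring a little care is the set-theoretic passage from an arbitrary infinite index set $I$ to a countable subfamily, so that the quoted form of weak cancellation, indexed by $\mathbb{N}$, applies verbatim.
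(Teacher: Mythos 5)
Your argument is correct and is exactly the intended one: the paper gives no explicit proof, treating the lemma as an immediate restatement of the weak cancellation property quoted just above it, and your contrapositive argument (an infinite minimal set of fixpoints would yield an infinite pairwise non-bisimilar family $\{\gamma_i\}_i$ with $\alpha\gamma_i\sim\beta\gamma_i$, forcing $\alpha\sim\beta$) is precisely how that restatement is meant to be justified. Your additional remarks on the existence of the minimal set and the degenerate normed case are consistent with the paper's framing.
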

The property described in Lemma~\ref{2014-02-25}, called the finite representation property in this paper, is the prime reason for the semidecidability of $\sim$ on BPA.
Stirling's remarkable observation is that the property described in Lemma~\ref{2014-02-25} is also valid for the strong bisimilarity on PDA.
What is subtle about PDA is that the fixpoints are stacks rather than processes due to the nonstructural definition of PDA processes.
In fact they must be extended stacks if they are able to code up recursive behaviours.
Hence the recursive constants.

What we will prove in this section is that the property described in Lemma~\ref{2014-02-25} continues to be valid for the branching bisimilarity on $\mathrm{PDA}^{\epsilon-}$ and that the cancellation property stated in the lemma is sufficient for us to design a semidecidable procedure for the equivalence checking problem.

\subsection{Decomposition}\label{sec-Decomposition}

Given a $\mathrm{PDA}^{\epsilon-}$ process $P$ there is no bound on the size of a $\mathrm{PDA}^{\epsilon-}$ process $Q$ that is branching bisimilar to $P$.
Fortunately there is a special class of processes which we can use to bypass the problem.
The following definition applies to all PDA processes.
\begin{definition}
A simple process $P$ is {\em normal in} $PC$, notation $P\vartriangleleft PC$, if
\begin{itemize}
\item either $P=k$ for some $k\in \mathbb{N}$;
\item or $P=pX(P_{1},\ldots,P_{n})$ such that, for each $i\in[n]$, $P_{i}$ is normal in $P_{i}C$ and it does not hold that $pX(P_{1},\ldots,P_{n})C\rightarrow^{*}P_{i}C$.
\end{itemize}
A constant $(P_{1},\ldots,P_{n})$ is normal in $D$ if $P_{i}\vartriangleleft P_{i}D$  for each $i\in[n]$.
A simple process/constant is {\em normal} if it is normal in itself.
A recursive constant $V_{[n]}$ defined by $V_{[n]}=(P_{1},\ldots,P_{n})V_{[n]}$ is normal if $P_{i}\vartriangleleft P_{i}V_{[n]}$ for all $i\in[n]$.
\end{definition}

A typical application scenario of normal process is the following: Given a process $qYB$, convert $B$ into $DD'$ where $D\vartriangleleft DD'$ and $|D| =\mathfrak{m}$.
If $pXA\simeq qYB$ then we can decompose $pXA$ against the decomposition $qYDD'$ in such a way that $D'$ becomes a common suffix of a pair of bisimilar processes.
Before stating the next lemma, we remark that it is decidable to check if $\exists h.P\simeq h$.
Consequently we only need to focus on $pXA$ such that $\forall k.pXA\not\simeq k$.

\begin{lemma}\label{2013-07-31}
Let $pXA,qYB$ be $\mathrm{PDA}^{\epsilon-}$ processes such that $\forall h.pXA\not\simeq h\not\simeq qYB$.
If $pXA\simeq qYB$, there are simple constants $U_{[\mathfrak{q}]}^{l},U_{[\mathfrak{q}]}^{r}$ and constant $D$ such that
\begin{enumerate}
\item $qYU_{[\mathfrak{q}]}^{r}D\simeq qYB$ and $U_{[\mathfrak{q}]}^{r}=(U_{[\mathfrak{q}]}^{r}D){\upharpoonright_{\mathfrak{m}}}$ and $U_{\mathfrak{q}}^{r} \vartriangleleft U_{\mathfrak{q}}^{r}D$;
\item $pXA\simeq pXU_{[\mathfrak{q}]}^{l}D$ and $|U_{[\mathfrak{q}]}^{l}|\le \mathfrak{r}\mathfrak{m}+1$.
\end{enumerate}
\end{lemma}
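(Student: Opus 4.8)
The plan is to establish the two items separately: first build the right-hand normal form $U^{r}_{[\mathfrak{q}]}$ together with the common tail $D$, and then extract $U^{l}_{[\mathfrak{q}]}$ by matching $pXA$ against the decomposition $qYU^{r}_{[\mathfrak{q}]}D$.

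For item~1 I would begin by driving $B$ into normal form. The guiding observation is that normalisation preserves $\simeq$: whenever a subprocess $R=pX(P_{1},\ldots,P_{n})C$ of $B$ satisfies $R\rightarrow^{*}P_{i}C$, the transition $\rightarrow$ is state-preserving, so $R\simeq P_{i}C$, and replacing $R$ by $P_{i}C$ strictly reduces the number of nodes while keeping $\simeq$ intact, since condition~3 of Definition~\ref{2014-03-30} makes $\simeq$ a congruence for composition. Each replacement shrinks the tree, so the procedure terminates and yields a normal constant $B'$ with $qYB'\simeq qYB$. I then set $U^{r}_{[\mathfrak{q}]}:=B'{\upharpoonright}_{\mathfrak{m}}$ and take $D$ to be the residual constant in the decomposition $B'=(B'{\upharpoonright}_{\mathfrak{m}})D$ guaranteed at the end of Section~\ref{sec-Syntax-and-Semantics-of-PDA}. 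All three requirements now follow directly: $qYU^{r}_{[\mathfrak{q}]}D=qYB'\simeq qYB$; the identity $(U^{r}_{[\mathfrak{q}]}D){\upharpoonright}_{\mathfrak{m}}=B'{\upharpoonright}_{\mathfrak{m}}=U^{r}_{[\mathfrak{q}]}$ holds because $U^{r}_{[\mathfrak{q}]}D=B'$ by construction; and $U^{r}_{[\mathfrak{q}]}\vartriangleleft U^{r}_{[\mathfrak{q}]}D$ because the level-$\mathfrak{m}$ prefix inherits normality, the condition $pX(\cdots)C\not\rightarrow^{*}P_{i}C$ for a node above level $\mathfrak{m}$ being literally the one already satisfied in the normal $B'$ that the composite reconstructs.

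For item~2 I now have $pXA\simeq qYU^{r}_{[\mathfrak{q}]}D$ with $U^{r}_{[\mathfrak{q}]}$ normal of height $\mathfrak{m}$, and I must replace the tail beneath $pX$ by the common $D$ while keeping the head simple and of height at most $\mathfrak{r}\mathfrak{m}+1$. On the right the tail $D$ sits at depth at most $\mathfrak{m}+1$ (the height of $U^{r}_{[\mathfrak{q}]}$ plus the top symbol $Y$). I would run the bisimulation game letting Attacker drive $qYU^{r}_{[\mathfrak{q}]}D$ downwards along a norm-realising play until $D$ is exposed; the hypothesis $\forall h.\,qYB\not\simeq h$ guarantees this descent is genuine and that the number of visible/change-of-state moves $\stackrel{\jmath}{\longrightarrow}$ needed is controlled by $\mathfrak{m}$. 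By Definition~\ref{2014-03-30} together with the Computation-Lemma consequence recorded after it, every matching response of $pXA$ interposes only state-preserving transitions between $\jmath$-moves; since each such move pushes at most $\mathfrak{r}$ symbols, the left play blows the right depth $\mathfrak{m}+1$ up by a factor of at most $\mathfrak{r}$, so the structure of $A$ above $D$ is confined to level $\mathfrak{r}\mathfrak{m}+1$. Reading off $U^{l}_{[\mathfrak{q}]}:=A{\upharpoonright}_{\mathfrak{r}\mathfrak{m}+1}$ and decomposing $A=U^{l}_{[\mathfrak{q}]}A''$, the bound $|U^{l}_{[\mathfrak{q}]}|\le\mathfrak{r}\mathfrak{m}+1$ is immediate from the cut, and it remains to prove $pXU^{l}_{[\mathfrak{q}]}A''\simeq pXU^{l}_{[\mathfrak{q}]}D$, i.e. that swapping the tail $A''$ for $D$ below the cut preserves $\simeq$.

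The main obstacle is precisely this last tail swap, which is the common-suffix cancellation hidden inside the lemma. I would prove it by showing that each frontier selection of $U^{l}_{[\mathfrak{q}]}$ leads to a residual of $A$ that is $\simeq$-interchangeable, as a continuation under $pX(\cdots)$, with the corresponding residual of $D$, the correspondence being forced by $pXA\simeq qYU^{r}_{[\mathfrak{q}]}D$ and the fact that on the right $D$ is reached at depth at most $\mathfrak{m}+1$. The delicate point, special to $\mathrm{PDA}^{\epsilon-}$, is that state-preserving silent transitions are popping and admit no constant length bound, so the descent to $D$ may interleave arbitrarily long $\rightarrow^{*}$ segments with the $\mathfrak{m}$-bounded $\jmath$-moves; keeping the visible/change-of-state budget separate from these free pops is what both justifies the swap and pins the head height at $\mathfrak{r}\mathfrak{m}+1$ rather than something unbounded. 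I expect this interplay, rather than the normalisation of item~1, to carry the whole weight of the argument.
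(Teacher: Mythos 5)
Your construction for item~1 is essentially the paper's: normalise $B$ by collapsing state-preserving pops, then cut at level $\mathfrak{m}$. One caveat: your termination argument (``each replacement shrinks the tree'') only works for simple constants; for a recursive constant $V_{[n]}$ the tree is infinite, and the paper needs a separate normal-unfolding step that first kills the circular silent cycles $P_{j_0}V_{[n]}\Longrightarrow\ldots\Longrightarrow P_{j_0}V_{[n]}$ and then normalises each component of the defining tuple. That is repairable.

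Item~2 is where the proposal has a genuine gap. You define $U^{l}$ as the prefix $A{\upharpoonright}_{\mathfrak{r}\mathfrak{m}+1}$ of the \emph{left-hand} constant and are then forced to prove the tail swap $pXU^{l}A''\simeq pXU^{l}D$, which you correctly identify as the main obstacle but do not prove --- and the premise behind it, that ``the structure of $A$ above $D$ is confined to level $\mathfrak{r}\mathfrak{m}+1$,'' is false in general. In $\mathrm{PDA}^{\epsilon-}$ the state-preserving segments $\rightarrow^{*}$ that $pXA$ interposes when answering moves of the other side are popping and admit no constant length bound (this is exactly the $pY\simeq qX^{n}Y$ phenomenon the paper flags), so the matching play can descend arbitrarily deep into $A$; cutting $A$ at any fixed level leaves a tail $A''$ that need have nothing to do with $D$, and the cancellation you would need is essentially the finite-representation machinery of Lemma~\ref{2013-08-01}, which is not available at this point. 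The paper's proof goes the other way round: for each $h\in\mathrm{def}\,\|pX\|$ it takes a minimal sequence $pXA\rightarrow^{*}\stackrel{\jmath_{1}}{\longrightarrow}\ldots\rightarrow^{*}\stackrel{\jmath_{k}}{\longrightarrow}\rightarrow^{*}A(h)$ on the \emph{left} (with $k\le\mathfrak{m}$ because $\simeq$ is closed under composition) and reads $U^{l}(h)$ off as the residual $Q_{h}$ of the \emph{right-hand} process $qYU^{r}D$ in the matching play. Normality of $U^{r}$ together with the popping discipline guarantees that $D$ stays intact and that $|Q_{h}|\le\mathfrak{r}\mathfrak{m}+1$ (starting height $\mathfrak{m}+1$, growth at most $\mathfrak{r}-1$ per $\jmath$-move, at most $\mathfrak{m}$ such moves), and the crucial equivalence $A(h)\simeq Q_{h}D$ is simply the endpoint of matching plays, so no cancellation argument is needed. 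In short, the bound $\mathfrak{r}\mathfrak{m}+1$ lives on the right-hand side, not on $A$, and the tail swap you were trying to establish never arises.
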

\begin{proof}
1. We explain how to decompose $B$ into $U_{\mathfrak{q}}^{r}D$ with $U_{\mathfrak{q}}^{r} \vartriangleleft U_{\mathfrak{q}}^{r}D$.
Firstly we construct a constant $U_{[\mathfrak{q}]}^{1}$ in the following fashion:
Suppose $B$ is an $m$-dimensional constant.
\begin{itemize}
\item
If $i\in[\mathfrak{q}]$ and $i\le m$ and $B(i)\rightarrow^{*}{\bf 0}$, let $U_{[\mathfrak{q}]}^{1}(i)={\bf 0}$.
\item
If $i\in[\mathfrak{q}]$ and $i>m$, let $U_{[\mathfrak{q}]}^{1}(i)=i$.
\item
If $i\in[\mathfrak{q}]$ and $i\le m$ and $B(i)\rightarrow^{*}k$, let $U_{[\mathfrak{q}]}^{1}(i)=k$.
\item
If $i\in[\mathfrak{q}]$ and $i\le m$ and $B(i)\rightarrow^{*}q_{i}Y_{i}B^{i}\nrightarrow$ for some $q_{i}Y_{i}B^{i}$, let $U_{[\mathfrak{q}]}^{1}(i)=q_{i}Y_{i}B^{i}$.
\end{itemize}
The constant $U_{[\mathfrak{q}]}^{1}$ is not unique.
But this is not a problem.
For each component of $U_{[\mathfrak{q}]}^{1}$ that is of the shape $q'Y'B'$ we continue to apply the decomposition.
This procedure will eventually stop if no recursive constant ever occurs.

For a recursive constant $V_{[n]}=(P_1,\ldots,P_n)V_{[n]}$ in $B$ we carry out the following transformation.
\begin{itemize}
\item [(i)] If there is a circular silent transition sequence of the form $P_{j_0}V_{[n]}\Longrightarrow P_{j_1}V_{[n]}\Longrightarrow\ldots\Longrightarrow P_{j_m}V_{[n]}\Longrightarrow P_{j_0}V_{[n]}$, then let $P_{j_0}'={\bf 0}$, $P_{j_1}'={\bf 0}$, \ldots, $P_{j_m}'={\bf 0}$.
\item [(ii)] After all circular silent transition cycles have been dealt with, we apply the above construction to transform $P_jV_{[n]}$, for each remaining $P_j$, to some $P_j'V_{[n]}$ such that either $P_j'={\bf 0}$ or $P_j'\vartriangleleft P_j'V_{[n]}$ with $|P_j'|>0$.
    This construction must terminate since $P_j$ is a simple process.
\end{itemize}
In the end we get a simple constant $(\overline{P}_1,\ldots,\overline{P}_n)$ where $\overline{P}_j$ is $j$ if $P_j'={\bf 0}$ and is $P_j'$ otherwise.
It is clear that $\overline{P}_j\vartriangleleft \overline{P}_jV_{[n]}$ and $V_{[n]}(j)\simeq\overline{P}_jV_{[n]}$ for all $j\in[n]$.
We will say that $\overline{P}_jV_{[n]}$ is a {\em normal unfolding} of $V_{[n]}$ at $j$ and $\{\overline{P}_jV_{[n]}\}_{j\in[n]}$ is a {\em normal unfolding} of $V_{[n]}$.

Our overall construction terminates when we reach either ${\bf 0}$ or a selection process or some $V_{[n]}=(P_1,\ldots,P_n)V_{[n]}$ for which we have already derived the normal unfolding.
Furthermore we need to make sure that normal unfolding has been applied for just enough number of times so that one can apply the $(\_){\upharpoonright_{\mathfrak{m}}}$ operation.
Let $U_{[\mathfrak{q}]}$ be the final constant obtained after all the decomposition.
According to our construction the constant $U_{[\mathfrak{q}]}$ satisfies $qYB\simeq qYU_{[\mathfrak{q}]}$.
Now decompose $U_{[\mathfrak{q}]}$ into $U_{[\mathfrak{q}]}{\upharpoonright_{\mathfrak{m}}}\cdot D$ and let $U_{[\mathfrak{q}]}^{r}=U_{[\mathfrak{q}]}{\upharpoonright_{\mathfrak{m}}}$.

2. Suppose $pXA\simeq qYB$.
Let
\begin{equation}\label{2013-08-07}
pXA\rightarrow^{*}\stackrel{\jmath_{1}}{\longrightarrow}\rightarrow^{*}\ldots\rightarrow^{*}\stackrel{\jmath_{k}}{\longrightarrow}\rightarrow^{*}A(h)
\end{equation}
be a sequence reaching $A(h)$ with minimal $k$.
Since $\simeq$ is closed under composition, $k$ cannot be greater than $\mathfrak{m}$.
It follows from $\forall h.pXA\not\simeq h$ that $k>0$.
The action sequence (\ref{2013-08-07}) must be bisimulated by $qYU_{[\mathfrak{q}]}^{r}D$ in the following manner:
\begin{equation}\label{2014-02-19}
qYU_{[\mathfrak{q}]}^{r}D\rightarrow^{*}\stackrel{\jmath_{1}}{\longrightarrow}Q_{1}D\rightarrow^{*}\stackrel{\jmath_{2}}{\longrightarrow}Q_{2}D\ldots\rightarrow^{*}\stackrel{\jmath_{k}}{\longrightarrow}Q_{h}D.
\end{equation}
Since $U_{[\mathfrak{q}]}^{r}$ is thick enough as it were, the constant $D$ remains intact throughout the transitions in (\ref{2014-02-19}).
Moreover $|Q_{h}|\le \mathfrak{r}\mathfrak{m}+1$.
Let $U_{[\mathfrak{q}]}^{l}(h)=Q_{h}$.
For $i\notin {\rm def}\|pX\|$ we let $U_{[\mathfrak{q}]}^{l}(i)$ be ${\bf 0}$.
It is clear that $pXA\simeq pXU_{[\mathfrak{q}]}^{l}D$.
\qed\end{proof}

\subsection{Finite Representation}\label{sec-Finite-Representation}

We now establish for $\mathrm{PDA}^{\epsilon-}$ the finite representation property.
In the following lemma the equivalence (\ref{2014-02-19a}) is the fixpoint property while the equivalence (\ref{2014-02-19b}) is the minimality property.

\begin{lemma}\label{2013-08-01}
Suppose $P,Q$ are simple $\mathrm{PDA}^{\epsilon-}$ processes with $ln(P)\subseteq[n]\supseteq ln(Q)$.
A finite set of recursive constant $\left\{V_{[n]}^{k}=(L_{1}^{k},\ldots,L_{n}^{k})V_{[n]}^{k}\right\}_{k\in K}$ exists such that
\begin{eqnarray}
PV_{[n]}^{k} &\simeq& QV_{[n]}^{k} \label{2014-02-19a}
\end{eqnarray}
for all $k\in K$ and for each $D_{[n]}$ satisfying $PD_{[n]} \simeq QD_{[n]}$ there is some $k\in K$ rendering true the following equivalence.
\begin{eqnarray}
D_{[n]} &\simeq& (L_{1}^{k},\ldots,L_{n}^{k})D_{[n]}. \label{2014-02-19b}
\end{eqnarray}
\end{lemma}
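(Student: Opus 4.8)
The plan is to characterise the fixpoint set $\{D_{[n]} : PD_{[n]}\simeq QD_{[n]}\}$ by a finite collection of self-referential constraints read off from the bisimulation game, and to take the canonical recursive solutions of those constraints as the $V_{[n]}^{k}$. First I would dispose of the trivial situation: if $P\simeq Q$ then every $D_{[n]}$ is a fixpoint, and the identity pattern $I_{[n]}=(1,\ldots,n)$ (equivalently $({\bf 0},\ldots,{\bf 0})$) already works, since $D_{[n]}\simeq I_{[n]}D_{[n]}$ holds trivially and $PI_{[n]}\simeq QI_{[n]}$ follows from $P\simeq Q$ and closure of $\simeq$ under composition. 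So I assume $P\not\simeq Q$ and, using the remark preceding Lemma~\ref{2013-07-31}, restrict attention to $D_{[n]}$ with $PD_{[n]}\not\simeq h\not\simeq QD_{[n]}$ for all $h$.

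For such a fixpoint $D$ I would play the game on $PD\simeq QD$. Because $P,Q$ are simple of bounded height, after a bounded number of productive moves the game descends to the leaves of $P$ and $Q$ and begins to read $D$; by the decomposition Lemma~\ref{2013-07-31} together with the normal-process machinery, the residuals sitting above $D$ can be kept normal and of height $\le\mathfrak{m}$ and width $\le\mathfrak{q}$. Matching the entry points $D(i)$ against the responding residuals yields, for each $i\in[n]$, an equation $D(i)\simeq L_iD$ with $L_i$ a normal simple process of bounded size; writing $L=(L_1,\ldots,L_n)$ this is exactly $D\simeq LD$, where recursive constants inside $D$ are first replaced by their normal unfoldings as in the proof of Lemma~\ref{2013-07-31}, so that the extraction is well defined even for unnormed $D$. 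Bounded normal simple processes over the fixed signature $\mathcal{Q},\mathcal{V},\mathcal{L}$ with selections in $[n]$ fall into finitely many $\simeq$-classes, so only finitely many tuples $L=(L_1,\ldots,L_n)$ arise up to componentwise $\simeq$; keeping one representative $(L_1^k,\ldots,L_n^k)$ for each tuple that actually comes from a genuine fixpoint, and setting $V_{[n]}^k=(L_1^k,\ldots,L_n^k)V_{[n]}^k$, gives the finite index set $K$. The minimality property~(\ref{2014-02-19b}) is then immediate: any fixpoint $D$ produces a tuple $L$ with $D\simeq LD$, and choosing $k$ with $(L_1^k,\ldots,L_n^k)\simeq L$ gives $D\simeq(L_1^k,\ldots,L_n^k)D$.

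It remains to verify the fixpoint property~(\ref{2014-02-19a}), $PV_{[n]}^k\simeq QV_{[n]}^k$, which I would prove by approximation. Let $D$ be the fixpoint that induced $(L^k)$, so $D\simeq(L^k)D$. Iterating and using that $\simeq$ is a congruence for composition gives $D\simeq(L^k)^jD$ for every $j$, whence $P(L^k)^jD\simeq PD\simeq QD\simeq Q(L^k)^jD$. Because the nontrivial components of $(L^k)$ have positive height, guaranteed by the normality condition which forbids an immediate self-collapse $pX(P_1,\ldots,P_n)C\rightarrow^{*}P_iC$, the processes $(L^k)^jD$ agree with $V_{[n]}^k$ up to a depth tending to infinity with $j$; hence for every depth $d$ one has $PV_{[n]}^k\simeq_{d}P(L^k)^jD\simeq Q(L^k)^jD\simeq_{d}QV_{[n]}^k$ once $j$ is large. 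Since $\bigcap_{d}\simeq_{d}\,=\,\simeq$ by the finite branching property established in Section~\ref{sec-Finite-Branching-Property} (Theorem~\ref{2014-01-02}), this yields $PV_{[n]}^k\simeq QV_{[n]}^k$.

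The hard part is the extraction in the second step. Reading a bounded self-referential constraint $D\simeq LD$ off the game requires a pumping argument over the finitely many residual states: one must show that as the game reads deeper into $D$ some residual configuration recurs after consuming only a bounded chunk of $D$, and that the bisimulation then forces genuine self-similarity $D(i)\simeq L_iD$ rather than merely a one-sided simulation. Handling recursive constants inside $D$ while preserving $\simeq$, and guaranteeing that the extracted pattern is self-consistent, i.e. that its canonical recursive solution $V_{[n]}^k$ is a fixpoint of $(P,Q)$ and not merely of the operator $L(\cdot)$, is the delicate point; this is exactly where the normal-process technology of Section~\ref{sec-Decomposition} and the Computation Lemma (Lemma~\ref{computation-lemma}) do the real work.
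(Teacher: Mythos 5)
Your overall plan---extract a self-referential constraint $D\simeq LD$ from the game on $PD_{[n]}\simeq QD_{[n]}$ and then quotient the finitely many bounded candidate tuples---is in the right spirit, but the two steps that carry all the weight are not established, and one of them is argued by a method that fails for $\mathrm{PDA}^{\epsilon-}$. First, the extraction itself: you assert that playing the game yields, for \emph{each} $i\in[n]$, an equation $D(i)\simeq L_iD$ with $L_i$ normal and uniformly bounded, and then concede in your final paragraph that proving this (and that the resulting system is self-consistent, i.e.\ that its canonical solution is a fixpoint of $(P,Q)$ and not merely of the operator $L(\cdot)$) is ``the hard part'' and ``the delicate point''. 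That is precisely the content of the lemma, so the proposal defers rather than proves it. The paper avoids your one-shot extraction entirely: it starts from $V^{0}=I_{[n]}$ and iteratively refines $V^{d}$ to $V^{d+1}$ by tracing the mismatch between the true equivalence $PD_{[n]}\simeq QD_{[n]}$ and the failure $PV^{d}\not\simeq_{m}QV^{d}$ at the least level $m$; each refinement extracts a single equation $D_{[n]}(i)\simeq LD_{[n]}$ together with the companion inequality $V^{d}(i)\not\simeq_{m'}LV^{d}$ that guarantees progress, preserves the invariant $D_{[n]}\simeq(L_{1}^{d},\ldots,L_{n}^{d})D_{[n]}$, and can only stop when $PV^{d}\simeq QV^{d}$, so the fixpoint property holds by construction. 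Finiteness of $K$ then comes from the $\frac{n(n-1)}{2}$ bound on the number of refinements together with the finite branching property---not from a uniform size bound on the $L_i$, which you would still have to prove and which is not obviously available.

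Second, your verification of (\ref{2014-02-19a}) by depth approximation is unsound for this model. You argue that $(L^{k})^{j}D_{[n]}$ agrees with $V_{[n]}^{k}$ up to a syntactic depth growing with $j$ and conclude $PV_{[n]}^{k}\simeq_{d}P(L^{k})^{j}D_{[n]}$ for large $j$. But $\simeq_{d}$ is not determined by any bounded syntactic prefix of a $\mathrm{PDA}^{\epsilon-}$ process: as Section~\ref{sec-Finite-Branching-Property} points out, one can have $pY\simeq qX^{n}Y$ where a single move is matched by a silent sequence whose length depends on the whole stack, so there is no constant bound on state-preserving silent sequences and a single round of the approximant game may probe arbitrarily deep. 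Hence ``agreement to depth $d$'' does not yield $\simeq_{d}$, and the limit argument collapses. It also silently assumes every component of $L^{k}$ has positive height, which fails for components that are selection indices or ${\bf 0}$, and it needs $\simeq$ to be preserved when a tuple inside a recursive constant is replaced by a componentwise-equivalent representative, a congruence step you never address.
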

\begin{proof}
Suppose $PD_{[n]} \simeq QD_{[n]}$.
We will construct $V_{[n]}^{k}$ by induction such that at each step (\ref{2014-02-19b}) is maintained.
Let $V^{0}$ be $I_{[n]}$.
Thus $V^{0}$ is defined by $V^{0} = (1,\ldots,n)V^{0}$.
The recursive constant $V^{0}$ trivially validates (\ref{2014-02-19b}).
If it also satisfies (\ref{2014-02-19a}), we are done.
Otherwise we refine $V^{0}$ to some $V^{1}$ by the following induction.
Suppose $V^{d}=(L_{1}^{d},\ldots,L_{n}^{d})V^{d}$ has been constructed such that
\begin{eqnarray}
PV^{d} &\not\simeq& QV^{d}, \nonumber \\
D_{[n]} &\simeq& (L_{1}^{d},\ldots,L_{n}^{d})D_{[n]}. \label{2014-02-20b}
\end{eqnarray}
Let $m$ be the least number such that $P V^{d}\not\simeq_{m}Q V^{d}$.
We refine $V^{d}$  to $V^{d+1}$ by exploring the mismatch between the following equality and inequality:
\begin{eqnarray}
PD_{[n]} &\simeq& QD_{[n]}, \label{2013-08-11-eq} \\
P V^{d} &\not\simeq_{m}& Q V^{d}. \label{2013-08-11-ineq}
\end{eqnarray}
It follows from (\ref{2013-08-11-ineq}) that some transition $PV^{d}\stackrel{\jmath}{\longrightarrow}P'V^{d}$ exists such that for all transition sequence $QV^{d}\stackrel{\epsilon}{\longrightarrow}O_{1}V^{d}\stackrel{\epsilon}{\longrightarrow}\ldots\stackrel{\epsilon}{\longrightarrow}O_{e}V^{d}\stackrel{\jmath}{\longrightarrow}O'V^{d}$ with $PV^{d}\simeq_{m-1}O_{1}V^{d}$, \ldots, $PV^{d}\simeq_{m-1}O_{e}V^{d}$ one must have
\begin{eqnarray}\label{2014-02-26a}
P'V^{d} &\not\simeq_{m-1}& O'V^{d}.
\end{eqnarray}
According to (\ref{2013-08-11-eq}) however the transition $PD_{[n]}\stackrel{\jmath}{\longrightarrow}P'D_{[n]}$ must be matched by some transition sequence $QD_{[n]}\stackrel{\epsilon}{\longrightarrow}Q_{1}D_{[n]}\stackrel{\epsilon}{\longrightarrow}\ldots\stackrel{\epsilon}{\longrightarrow}Q_{e'}D_{[n]}\stackrel{\jmath}{\longrightarrow}Q'D_{[n]}$ such that $PD_{[n]}\simeq Q_{1}D_{[n]}$, \ldots, $PD_{[n]}\simeq Q_{e'}D_{[n]}$ and
\begin{eqnarray}\label{2014-02-26b}
P'D_{[n]} &\simeq& Q'D_{[n]}.
\end{eqnarray}
A special instance of (\ref{2014-02-26a}) is
\begin{eqnarray}\label{2014-02-26c}
P'V^{d} &\not\simeq_{m-1}& Q'V^{d}.
\end{eqnarray}
The above construction takes us from (\ref{2013-08-11-eq},\ref{2013-08-11-ineq}) to (\ref{2014-02-26b},\ref{2014-02-26c}).
By repeating the construction, we eventually get the following equality and inequality for some $L$:
\begin{eqnarray}
D_{[n]}(i) &\simeq& LD_{[n]}, \label{2013-08-11-equa} \\
V^{d}(i) &\not\simeq_{m'}& LV^{d}. \label{2013-08-11-ineq-m}
\end{eqnarray}
We continue the proof by looking at the shape of $L$.
\begin{itemize}
\item
$L=j\in[n]$ and $V^{d}(j)={\bf 0}$.
If $i<j$ then let
\[V^{d+1} = (L_{1}^{d},\ldots,L_{j-1}^{d},i,L_{j+1}^{d},\ldots,L_{n}^{d})V^{d+1}.\]
Otherwise let
\[V^{d+1} = (L_{1}^{d},\ldots,L_{i-1}^{d},j,L_{i+1}^{d},\ldots,L_{n}^{d})V^{d+1}.\]
Clearly $V^{d+1}$ validates (\ref{2014-02-20b}).
Notice that $V^{d+1}$ slightly violates the requirement on recursive constant.
This will be rectified in the end.
\item
$|L|>0$.
In this case let
$V^{d+1} = (L_{1}^{d},\ldots,L_{i-1}^{d},L,\overline{L_{i+1}^{d}},\ldots,\overline{L_{n}^{d}})V^{d+1}$, where for each $j\in\{i+1,\ldots,n\}$ the process $\overline{L_{j}^{d}}$ is defined as follows: If there are $j_{1}>\ldots>j_{g}>i$ such that $V^{d}(j)=j_{1}$, $V^{d}(j_{1})=j_{2}$, \ldots, $V^{d}(j_{g})=i$, then $\overline{L_{j}^{d}}=L$; otherwise $\overline{L_{j}^{d}}=L_{j}^{d}$.
Again $V^{d+1}$ trivially validates the equivalence (\ref{2014-02-20b}).
\end{itemize}
The construction must stop after at most $\frac{n(n-1)}{2}$ steps.
Eventually we get some $V=(L_{1},\ldots,L_{n})V$.
Modify the definition of $V$ as follows: For each $i\in[n]$ let $V(i)=i$ if $V(i)$ is a number.
What we get is the required $V_{[n]}^{k}$.
Starting with $I_{[n]}$ there are only finitely many such $V_{[n]}^{k}$ one can construct in $\frac{n(n-1)}{2}$ steps due to the finite branching property.
We are done.
\qed\end{proof}

Lemma~\ref{2013-07-31} allows one to create common suffix by introducing a normal constant, whereas Lemma~\ref{2013-08-01} helps to substitute a recursive constant for the suffix.
We get a more useful result if we combine these two lemmas.

\begin{lemma}\label{2014-02-20-lemma}
Fix simple processes $pX,qYU^{r}$, where $U^{r}$ is normal and $|U^{r}| =\mathfrak{m}$ and $ln(qYU^{r})\subseteq[n]$.
A finite family $\left\{V_{[n]}^{k}=(L_{1}^{k},\ldots,L_{n}^{k})V_{[n]}^{k}\right\}_{k\in K}$ of recursive constant exists such that for all $A$ with $ln(A)\subseteq[n]$ and all $D_{[n]}$ with $U^{r}\vartriangleleft U^{r}D_{[n]}$ and $pXAD_{[n]} \simeq qYU^{r}D_{[n]}$, there is some $k\in K$ rendering true the following.
\begin{eqnarray}
pXAV_{[n]}^{k} &\simeq& qYU^{r}V_{[n]}^{k}, \label{2014-02-21a}\\
D_{[n]} &\simeq& (L_{1}^{k},\ldots,L_{n}^{k})D_{[n]}. \label{2014-02-21b}
\end{eqnarray}
\end{lemma}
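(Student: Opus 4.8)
The plan is to combine the two preceding lemmas mechanically. Given the data $pX$, $qYU^r$ with $U^r$ normal, $|U^r|=\mathfrak{m}$, and $ln(qYU^r)\subseteq[n]$, I would set $P=pXA$ and instantiate Lemma~\ref{2013-08-01} with the pair of simple processes $P=pX$ (more precisely a suitable simple prefix absorbing $A$) and $Q=qYU^r$, producing the finite family $\{V_{[n]}^k=(L_1^k,\ldots,L_n^k)V_{[n]}^k\}_{k\in K}$. The key point is that this family must be constructed \emph{uniformly}, independent of the particular $A$ and $D_{[n]}$; the whole force of the statement is that one finite set of fixpoints works for \emph{all} admissible $A$ and $D_{[n]}$ simultaneously. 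Since Lemma~\ref{2013-08-01} already delivers a finite family depending only on the two simple processes $P,Q$, and since $pX$, $qYU^r$ are fixed here, the family is fixed once and for all.

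The main step is to reconcile the role of $A$. Given any $A$ with $ln(A)\subseteq[n]$ and any $D_{[n]}$ with $U^r\vartriangleleft U^rD_{[n]}$ and $pXAD_{[n]}\simeq qYU^rD_{[n]}$, I would view $AD_{[n]}$ as a single constant $D'_{[n]}$ (legitimate because $ln(A)\subseteq[n]$ guarantees composing $A$ with $D_{[n]}$ lands in an $n$-ary constant), so that the hypothesis reads $pXD'_{[n]}\simeq qYU^rD'_{[n]}$. Applying Lemma~\ref{2013-08-01} to $P=pX$ and $Q=qYU^r$ with this $D'_{[n]}$ yields some $k\in K$ with $pXV_{[n]}^k\simeq qYU^rV_{[n]}^k$ and $D'_{[n]}\simeq(L_1^k,\ldots,L_n^k)D'_{[n]}$. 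From the first equivalence I would compose on the right with $A$—using that $\simeq$ is closed under composition (Condition~3 of Definition~\ref{2014-03-30})—but here I must be careful: what I actually want is $pXAV_{[n]}^k\simeq qYU^rV_{[n]}^k$, so the family should be built for the pair $pXA$, $qYU^r$ rather than $pX$, $qYU^r$. The cleaner route is therefore to fold $A$ into $P$ from the outset, taking $P$ to be the simple process $pXA{\upharpoonright}_{\ast}$ and noting that the finite-branching property (Theorem~\ref{2014-01-02}) bounds the finitely many relevant $A$ up to $\simeq$, so only finitely many choices of $P$ arise and their families can be unioned.

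The obstacle I expect is precisely this quantifier juggling: Lemma~\ref{2013-08-01} produces a family tailored to a \emph{single} fixed pair of simple processes, whereas here $A$ ranges over an infinite collection. I would resolve this by invoking the finite branching property to argue that $\{pXA : ln(A)\subseteq[n]\}$ has only finitely many $\simeq$-classes of simple prefixes up to level $\mathfrak{m}$ relevant to the $\jmath$-transitions, so that the union $\bigcup$ of the finitely many families supplied by Lemma~\ref{2013-08-01} is again finite and serves uniformly. Equivalence~\eqref{2014-02-21b} then follows immediately from the corresponding minimality clause of Lemma~\ref{2013-08-01} applied to $D_{[n]}$ itself (since $A$ was absorbed into the left process, not into the suffix), and \eqref{2014-02-21a} follows from the fixpoint clause after the composition with $A$. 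The normality hypothesis $U^r\vartriangleleft U^rD_{[n]}$ and $|U^r|=\mathfrak{m}$ is what guarantees, via the decomposition machinery of Lemma~\ref{2013-07-31}, that the matching transitions keep $D_{[n]}$ intact so that the cancellation underlying Lemma~\ref{2013-08-01} is applicable; I would cite this explicitly to justify that the hypotheses of Lemma~\ref{2013-08-01} are met.
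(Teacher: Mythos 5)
Your proposal targets the right difficulty (uniformity over $A$ is indeed the crux) but the mechanisms you offer for achieving it do not work, and neither is the paper's mechanism. Folding $AD_{[n]}$ into a single suffix $D'_{[n]}$ makes the minimality clause of Lemma~\ref{2013-08-01} deliver $D'_{[n]}\simeq(L_1^k,\ldots,L_n^k)D'_{[n]}$, a fixpoint equation for $AD_{[n]}$ rather than the required $D_{[n]}\simeq(L_1^k,\ldots,L_n^k)D_{[n]}$; and the fixpoint clause delivers $pXV_{[n]}^k\simeq qYU^rV_{[n]}^k$, from which you cannot recover $pXAV_{[n]}^k\simeq qYU^rV_{[n]}^k$ by composition, since composition only appends on the right and cannot reinsert $A$ between $pX$ and $V_{[n]}^k$. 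Your fallback --- folding $A$ into the left simple process and invoking finite branching to claim only finitely many $\simeq$-classes of $pXA$ arise --- is also unsound: the finite branching property bounds the successors of a single process, not the number of equivalence classes in the infinite family $\{pXA : ln(A)\subseteq[n]\}$, and the paper explicitly notes there is no size bound on processes branching bisimilar to a given $\mathrm{PDA}^{\epsilon-}$ process. Moreover any truncation $pXA{\upharpoonright}_{k}$ pushes the residue of $A$ into the suffix, which reintroduces the first problem.

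The paper's proof supplies the two ideas that are missing here. First, it applies (the proof of) Lemma~\ref{2013-07-31} to replace $A$ by one of finitely many simple constants $U^j=(G_1^j,\ldots,G_{\mathfrak{q}}^j)$ of height at most $\mathfrak{r}\mathfrak{m}+1$ satisfying $pXU^jD_{[n]}\simeq qYU^rD_{[n]}$ and $A(i)D_{[n]}\simeq G_i^jD_{[n]}$; the height bound is what makes the index set $J$ finite, so that Lemma~\ref{2013-08-01} can be applied to each \emph{fixed} pair $pXU^j,qYU^r$ and the resulting finite families unioned over $j\in J$. Second --- and this is the step your proposal has no counterpart for --- one must still convert $pXU^jV_{[n]}^k\simeq qYU^rV_{[n]}^k$ into $pXAV_{[n]}^k\simeq qYU^rV_{[n]}^k$. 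The paper does this with a game-theoretic copycat argument: Defender in the game on $(pXAV_{[n]}^k, qYU^rV_{[n]}^k)$ simulates the Defenders of the games over $D_{[n]}$, and whenever the play exposes a component $A(i)$ it switches to the strategy witnessing $A(i)D_{[n]}\simeq G_i^jD_{[n]}$, invoking the fixpoint equivalence $D_{[n]}\simeq(L_1^k,\ldots,L_n^k)D_{[n]}$ to trade $V_{[n]}^k$ for $D_{[n]}$ as needed. Without this transfer argument the lemma does not follow from the two cited lemmas, however they are instantiated.
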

\begin{proof}
By the proof of Lemma~\ref{2013-07-31} there is a finite set $\left\{U^{j}=(G_{1}^{j},\ldots,G_{\mathfrak{q}}^{j})\right\}_{j\in J}$ such that for each pair $A,D_{[n]}$ with $ln(A)\subseteq[n]$ and $pXAD_{[n]} \simeq qYU^{r}D_{[n]}$ there is some $U^{j}=(G_{1}^{j},\ldots,G_{\mathfrak{q}}^{j})$ validating the following.
\begin{eqnarray}
pXU^{j}D_{[n]} &\simeq& qYU^{r}D_{[n]}, \label{boston-1} \\
A(i)D_{[n]} &\simeq& G_{i}^{j}D_{[n]}\ \mathrm{for}\ \mathrm{all}\ i\in[n]. \label{boston-2}
\end{eqnarray}
For each $j\in J$ let $\nabla^{j}$ be the set of pairs $A,D_{[n]}$ that satisfy $ln(A)\subseteq[n]$ and $U^{r}\vartriangleleft U^{r}D_{[n]}$ and $pXAD_{[n]} \simeq qYU^{r}D_{[n]}$ and (\ref{boston-1}) and (\ref{boston-2}).
It follows from (\ref{boston-1} ) and Lemma~\ref{2013-08-01} that there is a finite family of recursive constants $\left\{V_{[n]}^{k}=(L_{1}^{k},\ldots,L_{n}^{k})V_{[n]}^{k}\right\}_{k\in K}$ such that for each pair $A,D_{[n]}$ in $\nabla^{j}$ there is some $k\in K$ rendering true the following.
\begin{eqnarray}
pXU^{j}V_{[n]}^{k} &\simeq& qYU^{r}V_{[n]}^{k}, \label{boston-3} \\
D_{[n]} &\simeq& (L_{1}^{k},\ldots,L_{n}^{k})D_{[n]}. \label{boston-4}
\end{eqnarray}
We now prove the following equivalence using a game theoretical argument.
\begin{eqnarray}
pXAV_{[n]}^{k} &\simeq& qYU^{r}V_{[n]}^{k}. \label{boston-5}
\end{eqnarray}
So suppose the bisimulation game starts with the configuration
\begin{equation} \label{2014-04-02}
(pXAV_{[n]}^{k}, qYU^{r}V_{[n]}^{k}).
\end{equation}
No matter how Attacker plays the Defender of the game (\ref{2014-04-02}) mimics the Defender of the game (\ref{boston-3}), who in turn mimics the Defender of the game (\ref{boston-1}).
If a configuration of the following form
\begin{equation} \label{boston-6}
(A(i)V_{[n]}^{k}, PV_{[n]}^{k})
\end{equation}
is reached, then the game of (\ref{boston-3}) would reach the configuration
\begin{eqnarray}
G_i^j V_{[n]}^{k} &\simeq& PV_{[n]}^{k}, \label{boston-7}
\end{eqnarray}
and the game of (\ref{boston-1}) would reach the configuration
\begin{eqnarray}
G_i^j D_{[n]} &\simeq& PD_{[n]}. \label{boston-8}
\end{eqnarray}
It follows from (\ref{boston-2}) and (\ref{boston-8}) that
\begin{eqnarray}
A(i)D_{[n]} &\simeq& PD_{[n]}. \label{boston-9}
\end{eqnarray}
Now the Defender of the game (\ref{boston-6}) simply copycats the Defender's strategy of the game (\ref{boston-9}), invoking the Defender's strategy of the game (\ref{boston-4}) whenever necessary.
What we have described is a winning strategy for the Defender of the game (\ref{2014-04-02}).
We conclude that (\ref{boston-5}) is valid.
\qed\end{proof}

What Lemma~\ref{2014-02-20-lemma} says is that the order of an application of Lemma~\ref{2013-07-31} followed by an immediate application of Lemma~\ref{2013-08-01} can be swapped without sacrificing the finite representation property.
This reordering would be very convenient in guaranteeing the termination of tableau construction.

\subsection{Tableau System}\label{sec-Tableau-System}

A straightforward way to prove bisimilarity between two processes is to construct a finite binary relation containing the pair of processes that can be extended to a bisimulation.
Such a finite relation is called a bisimulation base, originally due to Caucal~\cite{BurkartCaucalMollerSteffen2001}.
The tableau approach~\cite{HuttelStirling1991,Huttel1992} can be seen as an effective way of generating a bisimulation base.
Lemma~\ref{2013-07-31} and Lemma~\ref{2013-08-01} suggest the first three tableau rules for $\mathrm{PDA}^{\epsilon-}$ given in Figure~\ref{Tableau-Rules-4-Popping}.
To define the fourth tableau rule, we need the notion of {\em match}.
A match for an equality $P=Q$ is a {\em finite} set $\{P_{i}=Q_{i}\}_{i=1}^{k}$ containing those and only those equalities accounted for in the following statements:
\begin{enumerate}
\item
For each transition $P\stackrel{\ell}{\longrightarrow}P'$, one of the following holds:
\begin{itemize}
\item $\ell=\epsilon$ and $P'=Q\in\{P_{i}=Q_{i}\}_{i=1}^{k}$;
\item
there is a transition sequence $Q\stackrel{\epsilon}{\longrightarrow}Q_{1}\stackrel{\epsilon}{\longrightarrow}\ldots\stackrel{\epsilon}{\longrightarrow}Q_{n}\stackrel{\ell}{\longrightarrow}Q'$ such that $\{P=Q_{1},\ldots,P=Q_{n},P'=Q'\}\subseteq\{P_{i}=Q_{i}\}_{i=1}^{k}$.
\end{itemize}

\item
For each transition $Q\stackrel{\ell}{\longrightarrow}Q'$, one of the following holds:
\begin{itemize}
\item $\ell=\epsilon$ and $P=Q'\in\{P_{i}=Q_{i}\}_{i=1}^{k}$;
\item
there is a transition sequence $P\stackrel{\epsilon}{\longrightarrow}P_{1}\stackrel{\epsilon}{\longrightarrow}\ldots\stackrel{\epsilon}{\longrightarrow}P_{n}\stackrel{\ell}{\longrightarrow}P'$ such that $\{P_{1}=Q,\ldots,P_{n}=Q,P'=Q'\}\subseteq\{P_{i}=Q_{i}\}_{i=1}^{k}$.
\end{itemize}
\end{enumerate}
We remark that a match could be empty.
An empty match is a {\em successful match}.

\begin{figure*}[t]
\begin{center}
\begin{tabular}{|l|} \hline
 \\
Rdcp$^{\epsilon-}$ \ \ \
$\inference{rXA=sYB}{\left\{o_{j}Z_{j}N_{j}=N_{j}(k_{j})\right\}_{j\in J}\ \ \ \
rXA=sYU^{r}D}$ \
$\begin{array}{l}
U^{r}\vartriangleleft U^{r}D\ \mathrm{and}\ |U^{r}|\le\mathfrak{m}; \\
o_{j}Z_{j}N_{j}\stackrel{\epsilon}{\longrightarrow}N_{j}(k_{j})\ \mathrm{for}\ \mathrm{all}\ j\in J; \\
U^{r}D=B\ \mathrm{is}\ \mathrm{deducible}\ \mathrm{from} \ \left\{o_{j}Z_{j}N_{j}=N_{j}(k_{j})\right\}_{j\in J}; \\
\left\{o_{j}Z_{j}N_{j}=N_{j}(k_{j})\right\}_{j\in J} \ \mathrm{supports}\ \mathrm{a}\ \mathrm{normal}\ \mathrm{unfolding}\\
\mathrm{of}\ \mathrm{every}\ \mathrm{recursive}\ \mathrm{constant}\ \mathrm{appearing}\ \mathrm{in}\ B.
\end{array}$ \\
 \\
Ldcp$^{\epsilon-}$ \ \ \
$\inference{rXA=sYU^{r}D}{\left\{A(i)=G_{i}D\right\}_{i\in[\mathfrak{q}]}\ \ \ \
rXU^{l}D=sYU^{r}D}$ \ $U^{l}=\left(G_{i}\right)_{i\in[\mathfrak{q}]},\ |U^{l}|\le \mathfrak{r}\mathfrak{m}+1$. \\
 \\
Cancel$^{\epsilon-}$ \ \ \
$\inference{rXA D_{[n]}=sYBD_{[n]}}{\left\{L_{i}D_{[n]}=D_{[n]}(i)\right\}_{i\in[n]}\ \ \ \  rXAV_{[n]}=sYBV_{[n]}}$ \
$V_{[n]}=(L_{i})_{i\in[n]}V_{[n]}$. \\
 \\
Match$^{\epsilon-}$ \ \ \ \ \
$\inference{P=Q}{P_{1}=Q_{1}\ \ldots\ P_{l}=Q_{l}}$ \ $\{P_{1}=Q_{1},\ldots,P_{l}=Q_{l}\}\ \mathrm{is} \ \mathrm{a}\ \mathrm{match}\ \mathrm{for}\ P=Q$. \\
 \\
\hline
\end{tabular}
\end{center}
\caption{Tableau Rules for $\mathrm{PDA}^{\epsilon-}$ \label{Tableau-Rules-4-Popping}}
\end{figure*}

The rule Rdcp$^{\epsilon-}$ produces a decomposition of the right hand side of the goal.
A crucial point about this rule is that its side condition is semidecidable.
This is because to check $U^{r}\vartriangleleft U^{r}D$, one has to invoke a semidecidable procedure for $\not\simeq$.
The rule Ldcp$^{\epsilon-}$ decomposes the left hand side of the goal into a form that matches the right hand side, creating a common suffix $D$.
The Cancel$^{\epsilon-}$ rule harnesses the complexity by introducing a recursive subgoal.
The Match$^{\epsilon-}$ rule is standard.
All the rules are backward sound.

\subsection{Subtableau}

A subtableau is a building block for tableau.
Its chief role is to reduce a goal to a finite number of subgoals of controllable size.
A subtableau is manufactured using a strategy that applies Rdcp$^{\epsilon-}$, Ldcp$^{\epsilon-}$ and Cancel$^{\epsilon-}$ in an orderly manner.
Notice that Match$^{\epsilon-}$ is never used in the construction of any subtableaux.
The strategy is described as follows (confer Fig.~\ref{Subtableau-Construction-4-Popping}).

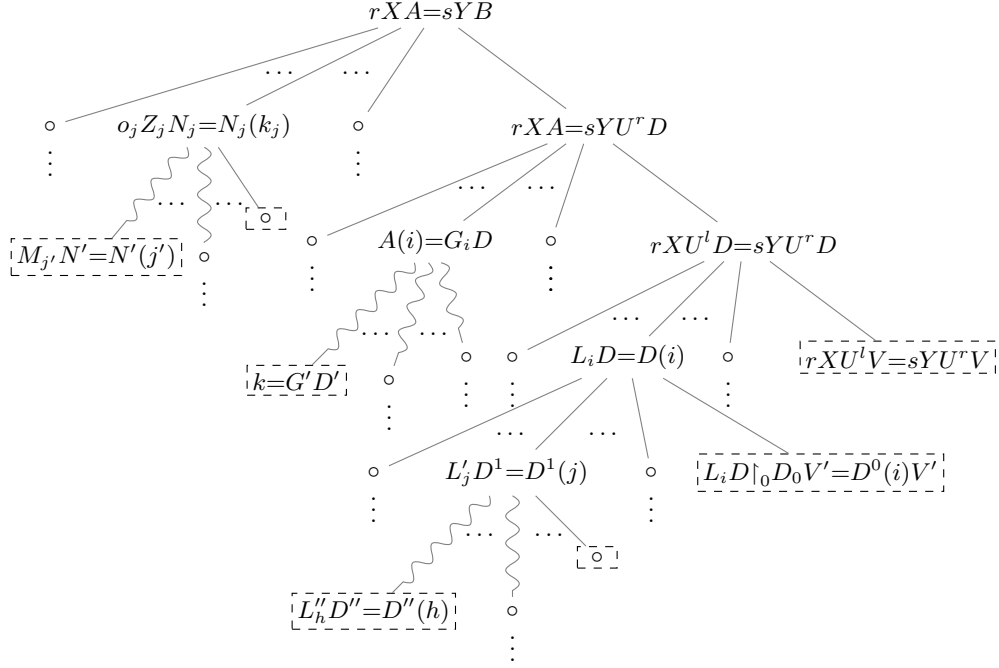
\begin{figure*}[t]
\usetikzlibrary{decorations.pathmorphing}
\begin{center}
   \resizebox{.8\textwidth}{!}{
    \begin{tikzpicture}
      \tikzstyle{leaf node}=[draw=black, dashed, inner sep=.5ex]
      \tikzstyle{ss}=[gray, -]
      \node(11) at (5,0) {$rXA{=}sYB\  $};
      \node(21) at (2,-1.5) {$o_jZ_jN_j{=}N_j(k_j)$};
      \node(22) at (7,-1.5) {$rXA{=}sYU^{r}D$};
      \node (c21) at (0,-1.5) {$\circ$};
      \node [below] at (c21) {$\vdots$};
      \node (c22) at (4,-1.5) {$\circ$};
      \node [below] at (c22) {$\vdots$};
      \node (dots21) at (3, -.8){$\dots$};
      \node (dots22) at (4, -.8) {$\dots$};

      \node(31) at (5,-3) {$ A(i){=}G_iD $};
      \node(32) at (9,-3) {$rXU^{l}D{=}sYU^{r}D$};
      \node(c33) at (3.4,-3) {$\circ$};
      \node(c34) at (6.5,-3) {$\circ$};
      \node [below] at (c33) {$\vdots$};
      \node [below] at (c34) {$\vdots$};

      \node[leaf node] (c31) at(.6,-3.2) {$M_{j'}N'{=}N'(j')$};
      \node(dots32) at (5.5,-2.3) {$\dots$};
      \node(dots33) at (6.4, -2.3) {$\dots$};
      \node (lf31) at (2,-3.2) {$\circ$};
      \node [below] at (lf31) {$\vdots$};
      \node [leaf node] (lf32) at (2.8, -2.7) {$\  \circ \ $};
      \node (dots_31) at (1.6, -2.5) {$\dots$};
      \node (dots_32) at (2.35, -2.5) {$\dots$};
      \node(41) at (7.5,-4.5) {$ L_iD {=} D(i)$};
      \node[leaf node] (42) at (11, -4.5) {{$rXU^{l}V {=} sYU^{r}V$}};
      \node(c43) at (6,-4.5) {$\circ$};
      \node(c44) at (8.8,-4.5) {$\circ$};
      \node [below] at (c43) {$\vdots$};
      \node [below] at (c44) {$\vdots$};

      \node (c41) at (4.4,-4.8) {$\circ$};
      \node(c42) at (5.4,-4.5) {$\circ$};
      \node[leaf node] (lf41) at (3.2,-4.8) {$k{=}G'D'$};
      \node(dots42) at (7.5, -4) {$\dots$};
      \node(dots43) at (8.4, -4) {$\dots$};
      \node [below] at (c41) {$\vdots$};
      \node [below] at (c42) {$\vdots$};
      \node (dots_41) at (4.25,-4.2) {$\dots$};
      \node (dots_42) at (5,-4.2) {$\dots$};
      \node(51) at (6,-6) {$ \ L_j'D^{1}{=} D^{1}(j)  $};
      \node[leaf node] (52) at (10, -6) {{$L_iD{\upharpoonright}_{0}D_0V'{=}D^{0}(i)V'$}};
      \node(c51) at (4.2, -6) {$\circ$};
      \node(c52) at (7.8,-6) {$\circ$};
      \node(dots51) at (6, -5.5) {$\dots$};
      \node(dots52) at (7.2, -5.5) {$\dots$};
      \node [leaf node] (c61) at (4.2,-7.8) {$L''_hD''{=}D''(h)$};
      \node  (c62) at (6,-7.8) {$\circ$};
      \node [below] at (c62) {$\vdots$};
      \node [leaf node] (c64) at (7.1, -7.1) {$\  \circ \  $};
      \node (dots61) at (5.6, -6.8) {$\dots$};
      \node (dots62) at (6.5, -6.8) {$\dots$};

      \path [ss] (11) edge (21);
      \path [ss] (11) edge (c21);
      \path [ss] (11) edge (c22);
      \path [ss] (11) edge (22);
      \path [ss] (22) edge (31);
      \path [ss] (22) edge (c33);
      \path [ss] (22) edge (c34);
      \path [ss] (22) edge (32);
      \draw[ss,decorate, decoration={coil,aspect=0} ] (21) -- (c31);
      \draw[ss] (21) -- (lf32);
      \draw[ss,decorate, decoration={coil,aspect=0} ] (21) -- (lf31);

      \path [ss] (32) edge (41);
      \path [ss] (32) edge (42);
      \draw[ss,decorate, decoration={coil,aspect=0} ] (31) -- (c41);
      \draw[ss,decorate,decoration={coil,aspect=0}] (31) -- (c42);
      \draw[ss,decorate, decoration={coil,aspect=0} ] (31) -- (lf41);

      \path [ss] (32) edge (c43);
      \path [ss] (32) edge (c44);

      \path [ss] (41) edge (51);
      \path [ss] (41) edge (52);
      \path [ss] (41) edge (c51);
      \path [ss] (41) edge (c52);

      \draw[ss,decorate,decoration={coil,aspect=0}]  (51) -- (c61);
      \draw[ss] (51) -- (c64);
      \draw[ss,decorate,decoration={coil,aspect=0}]  (51) -- (c62);

      \node [below] at (c34) {$\vdots$};
      \node [below] at (c51) {$\vdots$};
      \node [below] at (c52) {$\vdots$};

    \end{tikzpicture}}
  \end{center}
\caption{Subtableau Construction for $\mathrm{PDA}^{\epsilon-}$ \label{Subtableau-Construction-4-Popping}}
\end{figure*}

\begin{enumerate}
\item Given a nontrivial goal $rXA=sYB$, apply Rdcp$^{\epsilon-}$ first.
Two kinds of subgoals are produced.
\begin{enumerate}
\item
$o_{j}Z_{j}N_{j}=N_{j}(k_{j})$.
Go to Step 4.
\item
$rXA=sYU^{r}D$.
Go to Step 2.
\end{enumerate}

\item
Apply Ldcp$^{\epsilon-}$ to $rXA=sYU^{r}D$.
We get two classes of subgoals.
\begin{enumerate}
\item
$A(i)=G_{i}D$.
The size of $A(i)$ is strictly smaller than the size of $rXA$.
In this case go to Step 1 and carry out the construction inductively.
\item
$rXU^{l}D=sYU^{r}D$.
Go to Step 3.
\end{enumerate}

\item
Apply Cancel$^{\epsilon-}$ to $rXU^{l}D=sYU^{r}D$.
Two types of subgoals are generated.
\begin{enumerate}
\item
$L_{i}D=D(i)$.
Go to Step 4.
\item
$rXU^{l}V=sYU^{r}V$.
Since $|rXU^{l}|$ is bounded by $\mathfrak{r}\mathfrak{m}+2$ and $|sYU^{r}|$ is bounded by $\mathfrak{m}+1$, we take $rXU^{l}V=sYU^{r}V$ as a {\em leaf} of the subtableau.
\end{enumerate}

\item
$L_{i}D=D(i)$.
There are three subcases.
\begin{enumerate}
\item
If $L_{i}=i$, then $L_{i}D=D(i)$.
In this case $L_{i}D=D(i)$ is a {\em successful leaf}.
\item
If $D(i)$ is a simple process and $|D(i)|\le\mathfrak{m}$ and $L_{i}\ne i$, we take $L_{i}D=D(i)$ as a {\em leaf} of the subtableau.
\item
If $D(i)$ is a simple process and $|D(i)|>\mathfrak{m}$ and $L_{i}\ne i$, then $|D|>\mathfrak{m}+1$.
We cut $D$ at depth $1$, producing a decomposition $D=D{\upharpoonright}_{0}D'$.
We then cut $D'$ at the depth $\mathfrak{m}$, producing a decomposition $D'=(D'){\upharpoonright}_{\mathfrak{m}}D^{1}$.
Let $D^{0}=(D'){\upharpoonright}_{\mathfrak{m}}$.
Under this decomposition $D(i)=D^{0}(i)D^{1}$.
Now apply Cancel$^{\epsilon-}$ to $L_{i}D{\upharpoonright}_{0}D^{0}D^{1}=D^{0}(i)D^{1}$.
We get two types of subgoal.
\begin{enumerate}
\item
$L_{j}'D^{1}=D^{1}(j)$.
Repeat Step 4 inductively.
\item
$L_{i}D{\upharpoonright}_{0}D^{0}V'=D^{0}(i)V'$.
We take $L_{i}D{\upharpoonright}_{0}D^{0}V'=D^{0}(i)V'$ as a {\em leaf} since $D{\upharpoonright}_{0}D^{0}$ and $D^{0}(i)$ are bounded in size by $\mathfrak{m}+1$.
\end{enumerate}
\item If $D(i)$ contains recursive constants that cannot be decomposed away as it were, then apply normal unfolding to the the recursive constants if necessary for just an enough number of times so that we can apply the construction in 4(c) also in this case.
\end{enumerate}
\end{enumerate}
The key question about the construction of a subtableau using the above strategy is if it always terminates.
This is answered by the next lemma.

\begin{lemma}\label{2013-12-27}
Every subtableau is finite.
\end{lemma}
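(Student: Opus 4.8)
The plan is to establish finiteness by means of K\"{o}nig's Lemma: I would argue that every subtableau is a finitely branching tree that contains no infinite path, and is therefore finite. Since the strategy never invokes Match$^{\epsilon-}$, only Rdcp$^{\epsilon-}$, Ldcp$^{\epsilon-}$ and Cancel$^{\epsilon-}$ create children, so the first task is to verify that each of these spawns only finitely many subgoals at a single node. For Ldcp$^{\epsilon-}$ and Cancel$^{\epsilon-}$ this is immediate, since they produce exactly $\mathfrak{q}$, respectively $n$, left subgoals together with a single right subgoal. For Rdcp$^{\epsilon-}$ the index set $J$ is finite because the accompanying normal unfolding of each of the finitely many recursive constants occurring in $B$ is finite (as shown in the proof of Lemma~\ref{2013-07-31}), and because, by our standing convention ruling out circular silent sequences, the silent transitions $o_{j}Z_{j}N_{j}\stackrel{\epsilon}{\longrightarrow}N_{j}(k_{j})$ issuing from the relevant configurations form a finite set.

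Next I would isolate the two well-founded measures that correspond to the two recursions built into the strategy. The main recursion runs through Steps 1--3: whenever Step 2(a) returns a goal to Step 1 the left hand process strictly shrinks, since $|A(i)|<|rXA|$; hence along any branch the successive left processes form a strictly decreasing sequence of sizes, which must terminate in a number or ${\bf 0}$, that is, in a trivial goal. Branching off this trunk are the Step 4 subtrees, which are entered only from Step 1(a) and Step 3(a) and which never return to Steps 1--3. For the Step 4 recursion I would take the height of the right hand constant as the measure: in case 4(c) the goal $L_{i}D=D(i)$ is reduced, by a further application of Cancel$^{\epsilon-}$ to the decomposition $D=D{\upharpoonright}_{0}D^{0}D^{1}$, to subgoals $L_{j}'D^{1}=D^{1}(j)$ whose right constant $D^{1}$ is obtained from $D$ by discarding its topmost levels, so that $|D^{1}|<|D|$.

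Assembling these facts, every path of a subtableau first descends only finitely often along the main trunk (a bound given by the size of the initial left process) and then, should it enter a Step 4 subtree, descends only finitely often there (a bound given by the height of the right constant). Since every node is finitely branching and no path is infinite, K\"{o}nig's Lemma yields that the subtableau as a whole is finite.

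The hard part will be the termination of the Step 4 recursion once recursive constants are present, namely subcase 4(d). One must check that the interleaving of normal unfolding with the height-reducing step of 4(c) cannot loop: each normal unfolding is finite by Lemma~\ref{2013-07-31}, yet one still has to confirm that after unfolding the resulting right constant again admits the $(\_){\upharpoonright}_{\mathfrak{m}}$ decomposition that strictly lowers its height, so that unfolding and cutting together effect a genuine decrease of a single well-founded measure on the right hand side. Securing this interaction, rather than the largely routine verification of finite branching, is where the real work lies.
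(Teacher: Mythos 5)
Your overall architecture (K\"{o}nig's Lemma, finite branching of each rule, a size measure on the left for the Step 1--3 trunk, and a size measure on the right for the Step 4(c) recursion) matches the paper's argument, which handles Step 2(a) by the strict decrease $|A(i)|<|rXA|$ and Step 4(c) by the strict decrease from $|D(i)|$ to $|D^{1}(j)|$. The finite-branching verification you carry out explicitly is left implicit in the paper but is routine, as you say.

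The genuine gap is Step 4(d), which you correctly single out as the crux but then leave open --- and the route you sketch for closing it would fail. You propose to find ``a genuine decrease of a single well-founded measure on the right hand side'' after interleaving normal unfolding with the $(\_){\upharpoonright}_{\mathfrak{m}}$ cut. No such measure exists: unfolding a recursive constant $V_{[n]}=(P_{1},\ldots,P_{n})V_{[n]}$ reproduces $V_{[n]}$ in the tail, so the right hand side can be unfolded and cut indefinitely without any size or height parameter decreasing; the recursive constant regenerates itself. The paper's termination argument for 4(d) is of a different nature: because the root of the subtableau contains only finitely many recursive constants on its right hand side, and the normal unfolding of each of them is fixed once and for all throughout the subtableau, the subgoals produced by repeated applications of 4(d) range over a \emph{finite} set; by pigeonhole a subgoal must eventually repeat along any branch, and the construction stops at the first repetition, declaring that node a successful leaf (this is sound by the coinductive nature of bisimulation, in the same way potentially successful leaves are handled at the tableau level). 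So the missing idea is repetition detection over a finite subgoal space, not a descending measure; without it your induction has no base case for branches that enter 4(d).
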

\begin{proof}
Step 4(c) cannot be repeated infinitely often.
This is because in the recursive invocation in 4(c)(i) the size of the right hand side of the subgoal has strictly decreased from $|D(i)|$ to $|D^{1}(j)|$.
Step 4(d) cannot be repeated infinitely often.
This is because in the recursive invocation in 4(d), similar to the recursive invocation in 4(c)(i), the right hand side of the subgoal must repeat for there are only a finite number of recursive constants in the right hand side of the root of the subtableau and the normal unfolding of each of them is fixed throughout the subtableau.
We get a {\em successful leaf} the first time a repetition occurs.
Step 2 can be repeated for only a finite number of time because of the the strict size decrease effect in case 2(a).
This implies that in the construction of the subtrees rooted at $A(i)=G_{i}D$, Step 3 is only invoked for a finite number of times.
It also implies that Step 1 can only be applied for a finite number of times.
\qed\end{proof}

\subsection{Tableau}

We are now in a position to explain how to produce a tableau for $rXA=sYB$.
To start with we construct the subtableau for $rXA=sYB$.
For each leaf of the subtableau that is neither successful nor unsuccessful, we try to apply Match$^{\epsilon+}$.
If it turns out that Match$^{\epsilon+}$ is not applicable, then the leaf is an unsuccessful leaf of the tableau.
If Match$^{\epsilon+}$ is applicable and the resulting match is empty, then the leaf is a successful leaf of the tableau; otherwise we repeat the subtableau construction for each subgoal of Match$^{\epsilon+}$.
In this way we obtain a {\em quasi tableau} for $rXA=sYB$.
The construction of a subtableau ends on a leaf $F$ of a subtableau if either it is a successful/unsuccessful leaf of the subtableau or it coincides with the leaf $F'$ of an ancestor subtableau with $F'$ staying in the path from the root of the quasi tableau to $F$.
In view of the coinductive nature of bisimulation we may think of $F$ as a {\em potentially successful} leaf.
The success of a potentially successful leaf depends on the (potential) success of other leaves.
\begin{definition}
A {\em tableau} is a quasi tableau in which only a finite number of recursive constants are introduced.
A tableau is {\em successful} if its leaves are either successful or potentially successful.
\end{definition}
A finite number of recursive constant must have a computable size bound.
Consequently all the leaves of all subtableaux appearing in a tableau have a computable size bound.
It follows that a leaf of a subtableau in a tableau is either successful/unsuccessful or is potentially successful.
Hence the following lemma.

\begin{lemma}\label{2014-12-22}
Every tableau is finite.
\end{lemma}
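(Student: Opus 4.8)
The plan is to regard a tableau not as a flat tree of subgoals but as a tree of \emph{subtableaux}: its nodes are the individual subtableaux, and an edge links a subtableau to a child subtableau exactly when the child is generated by an application of Match$^{\epsilon-}$ at one of the non-terminal leaves of the parent. Since each subtableau is finite by Lemma~\ref{2013-12-27}, the whole tableau is finite as soon as this tree of subtableaux is shown to be finite. By K\"{o}nig's lemma it therefore suffices to establish two things: that the tree of subtableaux is finitely branching, and that it contains no infinite path.

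Finite branching is the easy half. A subtableau is finite by Lemma~\ref{2013-12-27}, so it has only finitely many leaves. At each leaf that is neither successful nor unsuccessful we apply Match$^{\epsilon-}$, and by definition a match is a \emph{finite} set of equalities; each resulting equality is the root of exactly one child subtableau. Hence a subtableau spawns only finitely many children.

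The heart of the argument is the absence of an infinite path, and here I would lean on the defining restriction of a tableau, namely that only a finite number of recursive constants are ever introduced. Because every leaf equality is of one of the bounded shapes produced by the subtableau strategy --- an equality between processes assembled from the finitely many states, the finitely many stack symbols, the simple constants $U^{l},U^{r}$ of bounded height, and these finitely many recursive constants $V_{[n]}$ --- there is a computable bound on the size of the two sides of any leaf, so that only finitely many distinct leaf equalities can occur anywhere in the tableau. The construction stops at a leaf $F$ the moment $F$ coincides with a leaf $F'$ of an ancestor subtableau lying on the path from the root to $F$, marking $F$ as potentially successful. Thus along any path no leaf equality may be repeated without the path terminating; since there are only finitely many possible leaf equalities, every path must terminate after boundedly many subtableaux. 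This rules out an infinite path, and K\"{o}nig's lemma then yields the finiteness of the tree of subtableaux, hence of the tableau itself.

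The main obstacle I anticipate is making the size bound on leaves fully rigorous: one must verify that each of the leaf forms produced in Steps~3 and~4 of the subtableau strategy --- in particular the cancellation leaves $L_{i}D=D(i)$ and $L_{i}D{\upharpoonright}_{0}D^{0}V'=D^{0}(i)V'$ --- really does have size controlled solely by $\mathfrak{r},\mathfrak{m}$ and the fixed finite stock of recursive constants, so that the set of distinct leaf equalities is genuinely finite and the repetition-triggered halting is guaranteed.
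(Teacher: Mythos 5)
Your proposal is correct and follows essentially the same route as the paper: the paper likewise argues that the finitely many recursive constants admitted in a tableau give a computable size bound on all leaves of all subtableaux, so only finitely many distinct leaf equalities can occur, the repetition-detection rule then forces every path to terminate, and finiteness of each subtableau (Lemma~\ref{2013-12-27}) together with finiteness of matches supplies the finite branching needed for K\"{o}nig's lemma. Your explicit organisation of the tableau as a finitely branching tree of subtableaux is just a cleaner packaging of the same argument.
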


The above lemma is reassuring in that it guarantees that a tableau is either successful or unsuccessful.

\begin{lemma}\label{2013-08-12}
Suppose $\forall h\in\mathbb{N}.\,rXA\not\simeq h \not\simeq sYB$.
Then $rXA\simeq sYB$ if and only if $rXA=sYB$ has a successful tableau.
\end{lemma}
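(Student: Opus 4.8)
The plan is to prove the two directions separately: the \emph{soundness} direction, that a successful tableau for $rXA=sYB$ forces $rXA\simeq sYB$, and the \emph{completeness} direction, that $rXA\simeq sYB$ yields a successful tableau for $rXA=sYB$. Since every rule of Figure~\ref{Tableau-Rules-4-Popping} has already been observed to be backward sound, the real work of the soundness direction is to turn the finite, possibly circular tableau into an honest branching bisimulation, while the work of the completeness direction is to show that the subtableau strategy of Section~\ref{sec-Tableau-System}, driven by a genuine bisimilarity, never stalls, terminates, and introduces only finitely many recursive constants.

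For soundness, suppose $T$ is a successful tableau for $rXA=sYB$. First I would collect the set $\mathcal{E}$ of all equations $P=Q$ labelling the nodes of $T$ and define the candidate relation $\mathcal{R}$ to be the closure of $\mathcal{E}$, together with all trivial identities, under symmetry, under composition by constants, and under the recursive-constant substitutions declared by the Cancel$^{\epsilon-}$ applications. The goal is to verify that $\mathcal{R}$ is a branching bisimulation, equivalently, by the bisimulation game characterisation of $\simeq$, that Defender wins every configuration in $\mathcal{R}$. I would argue locally at each node according to the rule that expands it: at a Match$^{\epsilon-}$ node the defining conditions of a match are exactly conditions~1--3 of Definition~\ref{2014-03-30}, with every matched pair being a child and hence in $\mathcal{R}$; at an Rdcp$^{\epsilon-}$, Ldcp$^{\epsilon-}$ or Cancel$^{\epsilon-}$ node the parent equation is a semantic consequence of its children by Lemma~\ref{2013-07-31}, Lemma~\ref{2013-08-01} and Lemma~\ref{2014-02-20-lemma} respectively, so the Defender's obligation at the parent is discharged by reducing to the children. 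The only circularity arises at a potentially successful leaf $F$, which duplicates an ancestor equation already in $\mathcal{E}$; because $\mathcal{R}$ is defined as a set rather than by well-founded recursion, such a leaf is automatically covered, and Lemma~\ref{computation-lemma} guarantees that the silent moves interposed along a match respect $\simeq$. Assembling these local checks yields that $\mathcal{R}$ is a branching bisimulation containing the pair $(rXA,sYB)$.

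For completeness, assume $rXA\simeq sYB$ and run the strategy of Figure~\ref{Subtableau-Construction-4-Popping}. I would maintain as an invariant that every goal produced is a true equality between branching-bisimilar processes. This invariant is preserved at each step: Lemma~\ref{2013-07-31}(1) supplies a decomposition $sYU^{r}D$ with $U^{r}\vartriangleleft U^{r}D$ and $|U^{r}|\le\mathfrak{m}$ realising Rdcp$^{\epsilon-}$ on a true goal; Lemma~\ref{2013-07-31}(2) supplies the bounded left decomposition $U^{l}$ realising Ldcp$^{\epsilon-}$; and Lemma~\ref{2014-02-20-lemma} supplies, for the Cancel$^{\epsilon-}$ step, a recursive constant $V_{[n]}^{k}$ drawn from a \emph{fixed finite family} such that both the fixpoint equivalence~(\ref{2014-02-21a}) and the residual equivalence~(\ref{2014-02-21b}) hold. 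Because each goal is a genuine bisimilarity, the Match$^{\epsilon-}$ rule always produces a match, so no leaf is ever unsuccessful; and when a recursive subgoal or a leaf configuration recurs it becomes a potentially successful leaf pointing back to its ancestor. Finiteness of the resulting object is delivered by Lemma~\ref{2013-12-27}, that each subtableau is finite, together with Lemma~\ref{2014-12-22}, that the whole tableau is finite, the latter resting precisely on the finiteness of the family $\{V_{[n]}^{k}\}_{k\in K}$. Hence the strategy yields a successful tableau for $rXA=sYB$.

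The step I expect to be the crux is the control of recursive constants in the completeness direction: without a guarantee that the Cancel$^{\epsilon-}$ rule can always draw its recursive constant from one globally finite pool, the quasi tableau could keep inventing new constants forever and fail to be a genuine tableau. This is exactly the finite-representation content of Lemma~\ref{2014-02-20-lemma}, and checking that the pool assembled there suffices for every Cancel$^{\epsilon-}$ application along every branch, so that only finitely many recursive constants are ever introduced, is the delicate point. A secondary subtlety, on the soundness side, is justifying that the composition closure used to define $\mathcal{R}$ keeps the relation a branching bisimulation, which relies on condition~3 of Definition~\ref{2014-03-30} ensuring that $\simeq$ is closed under composition.
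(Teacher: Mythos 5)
Your completeness direction is essentially the paper's: starting from a true equality, Lemmas~\ref{2013-07-31}, \ref{2013-08-01} and \ref{2014-02-20-lemma} keep every generated goal semantically valid, Match$^{\epsilon-}$ never fails on a valid goal, and the finite pool of recursive constants from Lemma~\ref{2014-02-20-lemma} together with Lemmas~\ref{2013-12-27} and \ref{2014-12-22} gives a finite successful tableau. That half is fine.

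The soundness direction is where there is a genuine gap. You propose to collect all equations of the tableau into a relation $\mathcal{R}$, close it under symmetry, composition and the recursive-constant substitutions, and verify directly that $\mathcal{R}$ is a branching bisimulation. Two things go wrong. First, ``backward soundness'' of Rdcp$^{\epsilon-}$, Ldcp$^{\epsilon-}$ and Cancel$^{\epsilon-}$ is a statement about $\simeq$ (if the children are branching bisimilar then so is the parent); it does not by itself let you discharge the Defender's transfer obligation for the parent pair \emph{inside the candidate relation} $\mathcal{R}$, because the children of these rules are not obtained by matching transitions of the parent. What you are implicitly using is a ``branching bisimulation up to substitution and composition'' technique, and for branching (as for weak) bisimilarity such up-to techniques are exactly the kind of thing that must be proved sound rather than assumed. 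Second, and decisively, your treatment of the circularity at a potentially successful leaf --- ``because $\mathcal{R}$ is defined as a set rather than by well-founded recursion, such a leaf is automatically covered'' --- does not close the argument: the matching at the repeated leaf depends on the matching at its ancestor, which depends on the leaf again, and membership in a set provides no progress measure to break this cycle. (Your appeal to Lemma~\ref{computation-lemma} at this point is also circular, since it presupposes the very $\simeq$-facts you are trying to establish.) The paper's proof supplies precisely the missing progress measure: it inducts on the approximants $\simeq_{k}$, observing that every path from a potentially successful leaf back to its ancestor passes through at least one application of Match$^{\epsilon-}$, so soundness for $\simeq_{k}$ at the leaves propagates to soundness for $\simeq_{k+1}$ at their ancestors; by induction every equation in the tableau is sound for $\simeq_{k}$ for all $k$, and by the finite branching property of Section~\ref{sec-Finite-Branching-Property} the intersection $\bigcap_{k}\simeq_{k}$ is $\simeq$. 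To repair your version you would either have to adopt this stratified induction, or first state and prove the soundness of the up-to technique you are using --- which would itself most naturally be done via the same approximants.
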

\begin{proof}
If $pXA\simeq q YB$, a successful tableau can be easily constructed using Lemma~\ref{2013-07-31}, Lemma~\ref{2013-08-01} and Lemma~\ref{2014-02-20-lemma}.
To prove the converse implication, assume that all the leaves of a successful tableau for $pXA=qYB$ are sound for $\simeq_{k}$.
The ancestor of a potentially successful leaf must be sound for $\simeq_{k+1}$ because all the rules are backward sound and there is at least one application of Match$^{\epsilon+}$ between a subtableau and its parent subtableau.
It follows from induction that all the equalities appearing in the tableau are sound for $\simeq_{k}$ for all $k$.
\qed\end{proof}

Lemma~\ref{2013-08-12} provides the following semidecidable procedure for checking $\simeq$ on $\mathrm{PDA}^{\epsilon-}$ processes:
Given input $rXA,sYB$, check if $rXA\simeq h$ or $qYB\simeq h$, for some $h\in\mathbb{N}$, or $rXA\simeq {\bf 0}$ or $qYB\simeq {\bf 0}$.
If the answer is positive, we use brutal force to check if $rXA\simeq sYB$.
Otherwise we enumerate all the tableaux for $rXA=sYB$ and at the same time check if any of them is successful.
This is possible because the side conditions of the rules used to generate a tableau are either decidable or semidecidable.
Together with Theorem~\ref{2014-01-02} we get the main result of the section.

\begin{theorem}\label{popping-PDA-decidability}
The branching bisimilarity on $\mathrm{PDA}^{\epsilon-}$ processes is decidable.
\end{theorem}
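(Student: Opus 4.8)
The plan is to establish decidability by exhibiting two semidecision procedures, one for $\simeq$ and one for its complement $\not\simeq$, and then dovetailing them. The latter is already in hand: Theorem~\ref{2014-01-02} tells us that $\not\simeq_{\mathrm{PDA}^{\epsilon-}}$ is semidecidable. So the only remaining work is to package the tableau machinery developed in this section into a semidecision procedure for $\simeq$ itself.

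First I would dispose of the degenerate configurations excluded by the hypothesis of Lemma~\ref{2013-08-12}. Given an input pair $rXA, sYB$, I would check whether $rXA \simeq h$ or $sYB \simeq h$ for some $h \in \mathbb{N}$, or whether either side is branching bisimilar to ${\bf 0}$. As remarked just before Lemma~\ref{2013-07-31}, the question $\exists h.\,P \simeq h$ is decidable, and the analogous check against ${\bf 0}$ is as well; in any of these cases the two processes behave in an essentially finite-state manner and $rXA \simeq sYB$ can be settled by brute force.

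In the principal case $\forall h.\,rXA \not\simeq h \not\simeq sYB$, I would invoke Lemma~\ref{2013-08-12}, which says that $rXA \simeq sYB$ holds if and only if the goal $rXA = sYB$ admits a successful tableau. Hence the procedure enumerates all tableaux for $rXA = sYB$ and halts the moment a successful one is found. Each tableau is finite by Lemma~\ref{2014-12-22}, so testing a single tableau for success terminates; and the whole enumeration is effective because the side conditions of the four tableau rules are decidable or semidecidable. The only subtle one is Rdcp$^{\epsilon-}$, whose condition $U^{r} \vartriangleleft U^{r} D$ unfolds into a semidecidable test for $\not\simeq$. This yields a genuine semidecision procedure for $\simeq$.

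Running the two semidecision procedures in parallel, exactly one will eventually halt, since $\simeq$ and $\not\simeq$ partition the set of pairs, and this gives a decision. I expect the main obstacle to lie not in this final assembly, which is routine dovetailing, but in justifying that the tableau enumeration is genuinely effective: all the real content is hidden in the finiteness of tableaux (Lemma~\ref{2014-12-22}) and in the semidecidability of the rule side conditions, which in turn rest on the finite representation property (Lemma~\ref{2013-08-01}, Lemma~\ref{2014-02-20-lemma}) and on the finite branching property (Theorem~\ref{2014-01-02}) established earlier.
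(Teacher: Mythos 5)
Your proposal is correct and follows essentially the same route as the paper: dispose of the cases where a side is bisimilar to some $h$ or to ${\bf 0}$, then use Lemma~\ref{2013-08-12} to turn tableau enumeration into a semidecision procedure for $\simeq$ (noting, as the paper does, that the side condition of Rdcp$^{\epsilon-}$ is only semidecidable), and combine it with the semidecidability of $\not\simeq$ from Theorem~\ref{2014-01-02}. No substantive differences.
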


\section{Decidability of $\mathrm{nPDA}^{\epsilon+}$}\label{sec-Decidability-of-nPDAplus}

The decidability proof of Section~\ref{sec-Decidability of PDAminus} can be repeated for $\mathrm{nPDA}^{\epsilon+}$.
The difference is that $\mathrm{nPDA}^{\epsilon+}$ is simpler as far as equivalence checking is concerned.
Properties similar to those described in Lemma~\ref{2013-07-31}, Lemma~\ref{2013-08-01} and Lemma~\ref{2014-02-20-lemma} hold for $\mathrm{nPDA}^{\epsilon+}$.
We choose not to repeat the precise statements of these lemmas and their proofs.
What we are going to do is to present the tableau construction.
This would be informative enough for the reader to work out the details.

Lemma~\ref{2013-07-13} can be exploited to simplify the constructions of tableaux; it renders a rule like Rdcp$^{\epsilon-}$ unnecessary.
The tableau rules for $\mathrm{nPDA}^{\epsilon+}$ are defined in Figure~\ref{Tableau-Rules-4-Pushing}.
It follows from Lemma~\ref{2013-07-15} that the size of the $\mathfrak{q}$-ary simple constant $U$ introduced in Decmp$^{\epsilon+}$ is loosely bounded by $\mathfrak{q}\mathfrak{n}\mathfrak{r}^{2}(\mathfrak{m}+1)^{(\mathfrak{q}+1)}$ whenever $|B|\le\mathfrak{m}$.

\begin{figure*}[t]
\begin{center}
\begin{tabular}{|l|} \hline
 \\
Decmp$^{\epsilon+}$ \ \ \ \ \
$\inference{rXA=sYBD}{\{A(i)=G_{i}D\}_{i\;\in\;\mathrm{def}\,\|rX\|}\ \ \ \ \ rXUD=sYBD}$ \ $\begin{array}{l}
|B| =\mathfrak{m},\ |D|>0, \\
U=(G_{1},\ldots,G_{\mathfrak{q}}).
 \end{array}$ \\
 \\
Cancel$^{\epsilon+}$ \ \ \ \ \
$\inference{rXAD_{[n]}=sYBD_{[n]}}{\{L_{i}D_{[n]}=D_{[n]}(i)\}_{i\in[n]}\ \ \ \ \ rXAV_{[n]}=sYBV_{[n]}}$ \ $\begin{array}{l}
|B| =\mathfrak{m},\ |D|>0, \\
V_{[n]}=\left(L_{i}\right)_{i\in[n]}V_{[n]}.
 \end{array}$ \\
 \\
Match$^{\epsilon+}$ \ \ \ \ \
$\inference{P=Q}{P_{1}=Q_{1}\ \ldots\ P_{l}=Q_{l}}$ \ $\{P_{1}=Q_{1},\ldots,P_{l}=Q_{l}\}\ \mathrm{is} \ \mathrm{a}\ \mathrm{match}\ \mathrm{for}\ P=Q$. \\
 \\
\hline
\end{tabular}
\end{center}
\caption{Tableau Rules for $\mathrm{nPDA}^{\epsilon+}$. \label{Tableau-Rules-4-Pushing}}
\end{figure*}

We now explain how to construct subtableau for a pair of $\mathrm{nPDA}^{\epsilon+}$ processes.
Subtableaux for $\mathrm{nPDA}^{\epsilon+}$ are simpler than those for $\mathrm{PDA}^{\epsilon-}$.
Branching bisimilarity between an atomic process $k$ and any other process $P$ is easy to check algorithmically.
Now suppose the goal is to prove $rXA\simeq sYB$.
Without loss of generality we assume that $|rXA|\le|sYB|$.
If $|B|\le\mathfrak{m}$, the subtableau for $rXA=sYB$ is a single node tree labeled by $rXA=sYB$.
The subtableau for $rXA=sYBD$, where $|B| =\mathfrak{m}$ and $|D|>0$, is of the shape pictured in Fig.~\ref{Subtableau-Construction-4-Pushing}.
\begin{figure*}[t]
\begin{center}
\small
\[\inference{rXA=sYBD}{\ldots \inference{A(i)=G_{i}D}{\vdots}\ldots\ \ \ \inference{rXUD=sYBD}{\inference{\ldots\ \ \ \ \ \ \ L_{j}D=D(j)\ \ \ \ \ \ \ \ldots}{\ldots \inference{L_{j'}'D'=D'(j')}{\vdots}\ldots\ \ \ L_{j}B'V'=B'V'}\ \ \ rXUV=sYB V}}\]
\normalsize
\end{center}
\caption{Subtableau Construction for $\mathrm{nPDA}^{\epsilon+}$ \label{Subtableau-Construction-4-Pushing}}
\end{figure*}
It is generated inductively in the following fashion:
\begin{enumerate}
\item
Apply Decmp$^{\epsilon+}$ to $rXA=sYBD$.
If $A(i)=\epsilon$ or $G_{i}D=\epsilon$ then the subgoal $A(i)=G_{i}D$ of Decmp$^{\epsilon+}$ is a leaf.
It is successful whenever $A(i)\simeq G_{i}D$ and unsuccessful whenever $A(i)\not\simeq G_{i}D$.
If neither $A=\epsilon$ nor $\epsilon=G_{i}D$ then there are following subcases:
\begin{itemize}
\item
If $|G_{i}D|>\mathfrak{m}$, apply Decmp$^{\epsilon+}$ inductively to the subgoal $A(i)=G_{i}D$.
Notice that Decmp$^{\epsilon+}$ cannot be invoked infinitely often since the size of the left hand side of the subgoal strictly decreases.
\item
If $|G_{i}D|\le\mathfrak{m}$ and $|A|>\mathfrak{m}$, then swap the position of $A(i)$ and $G_{i}D$.
From this point onwards swapping will not happen again.
\item
If both $|G_{i}D|\le\mathfrak{m}$ and $|A|\le\mathfrak{m}$, then the subgoal $A(i)=G_{i}D$ is a leaf.
\end{itemize}
It should be evident that the leftmost path of the subtableau is finite.
\item
If the subgoal $rXUD=sBD$ of Decmp$^{\epsilon+}$ is small in the sense that $D=\epsilon$, then it is deemed as a leaf of the subtableau.
Otherwise we apply Cancel$^{\epsilon+}$ to the subgoal.
\item
The subgoal $rXUV=sB V$ of Cancel$^{\epsilon+}$ is of small size.
It is a leaf of the subtableau.
\item
For each subgoal $L_{j}D=D(j)$ of Cancel$^{\epsilon+}$, we take it as a leaf of the subtableau if the subgoal has a small size in the sense that $|D|\le \mathfrak{m}+1$.
Otherwise we let $D=B'D'$ with $|B'| =\mathfrak{m}+1$ and apply Cancel$^{\epsilon+}$ inductively to the subgoal.
This is well defined since the right hand side $L_{j'}'D'=D'(j')$ of the subgoal of the second application of Cancel$^{\epsilon+}$ is strictly smaller than the right hand side of $L_{j}D=D(j)$ of the first application.
\item
The subgoal $L_{j}B'V'=B'V'$ of the second application of Cancel$^{\epsilon+}$ is a leaf of the subtableau.
\end{enumerate}
It is easy to see from the above account that every subtableau is finite.

A tableau for $rXA=sYB$ is composed by piecing together subtableaux.
The rule Match$^{\epsilon+}$ is applied to every potentially successful leaf of a subtableau.
The definition of successful tableaux is the same as in Section~\ref{sec-Decidability of PDAminus}.
Lemma~\ref{2013-08-12} remains valid for $\mathrm{nPDA}^{\epsilon+}$.
Without further ado we state the main result.

\begin{theorem}\label{pushing-nPDA-decidability}
The branching bisimilarity on $\mathrm{nPDA}^{\epsilon+}$ processes is decidable.
\end{theorem}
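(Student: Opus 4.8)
The plan is to obtain decidability by dovetailing two semidecision procedures, one for $\simeq$ and one for its complement, on $\mathrm{nPDA}^{\epsilon+}$ processes. The complement is already available: Theorem~\ref{2013-07-16} asserts that $\not\simeq_{\mathrm{nPDA}^{\epsilon+}}$ is semidecidable, so it remains only to produce a semidecision procedure for $\simeq$ itself. First I would dispose of the degenerate cases by checking, for the input $rXA=sYB$, whether either side is branching bisimilar to some atomic process $h$ or to ${\bf 0}$; as noted in the text these atomic checks are decidable, and whenever one succeeds the equivalence $rXA\simeq sYB$ can be settled directly. Thus the substantive case is $\forall h.\,rXA\not\simeq h\not\simeq sYB$, which is exactly the hypothesis under which the tableau machinery operates.

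For the semidecision procedure for $\simeq$ I would mirror Section~\ref{sec-Decidability of PDAminus}, now using the rules Decmp$^{\epsilon+}$, Cancel$^{\epsilon+}$ and Match$^{\epsilon+}$ of Figure~\ref{Tableau-Rules-4-Pushing}. The three preparatory facts needed are the $\mathrm{nPDA}^{\epsilon+}$ analogues of Lemma~\ref{2013-07-31}, Lemma~\ref{2013-08-01} and Lemma~\ref{2014-02-20-lemma}; these hold because Lemma~\ref{2013-07-15} supplies a constant bound on the length of state-preserving transition sequences, whence the computable size bound of Corollary~\ref{2013-07-20} on any $p\alpha$ with $p\alpha\simeq P$. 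Moreover Lemma~\ref{2013-07-13} makes a right-hand decomposition rule like Rdcp$^{\epsilon-}$ superfluous, so the subtableaux are the simpler objects of Figure~\ref{Subtableau-Construction-4-Pushing}. As the text already indicates, I would verify that each subtableau is finite: the left branch terminates because the size of the left-hand side strictly decreases across Decmp$^{\epsilon+}$, and the nested Cancel$^{\epsilon+}$ applications terminate because the right-hand side strictly shrinks. The procedure then enumerates all tableaux for $rXA=sYB$ and checks each for success; this is effective because the side conditions of the three rules are decidable or semidecidable, and because the finite-representation analogues guarantee that only finitely many recursive constants ever need be introduced, so each tableau is finite (the analogue of Lemma~\ref{2014-12-22}).

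Correctness of the enumeration is the statement that Lemma~\ref{2013-08-12} carries over to $\mathrm{nPDA}^{\epsilon+}$: under $\forall h.\,rXA\not\simeq h\not\simeq sYB$, one has $rXA\simeq sYB$ iff $rXA=sYB$ admits a successful tableau. The completeness direction, from a bisimulation to a tableau, is built by repeatedly applying the finite-representation analogues to select at each decomposition the recursive constant certifying the relevant fixpoint, exactly as in the $\mathrm{PDA}^{\epsilon-}$ case. The soundness direction is an approximation argument: all rules are backward sound, and since at least one Match$^{\epsilon+}$ step separates any subtableau from its parent, soundness for $\simeq_{k}$ of the leaves propagates to soundness for $\simeq_{k+1}$ of the ancestors, hence to soundness for every $\simeq_{k}$ and so for $\simeq=\bigcap_{k}\simeq_{k}$.

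I expect the main obstacle to be verifying that the finite-representation property genuinely survives the passage from $\mathrm{PDA}^{\epsilon-}$ to $\mathrm{nPDA}^{\epsilon+}$, namely that finitely many recursive constants suffice to represent all fixpoints of a nonbisimilar pair. Everything else is a routine adaptation, but this is the load-bearing step: it is what makes the tableaux finite and thereby guarantees that the enumeration halts on bisimilar instances. It is precisely here that one must exploit the fact that $\mathrm{nPDA}^{\epsilon+}$ is normed and $\epsilon$-pushing, so that only external actions remove stack symbols (Lemma~\ref{2013-07-13}), keeping the decompositions and the number of required recursive constants under computable control.
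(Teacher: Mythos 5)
Your proposal follows the paper's own argument essentially verbatim: it combines the semidecidability of $\not\simeq_{\mathrm{nPDA}^{\epsilon+}}$ from Theorem~\ref{2013-07-16} with a tableau-based semidecision procedure for $\simeq$ built from Decmp$^{\epsilon+}$, Cancel$^{\epsilon+}$ and Match$^{\epsilon+}$, relying on the $\mathrm{nPDA}^{\epsilon+}$ analogues of Lemmas~\ref{2013-07-31}, \ref{2013-08-01} and \ref{2014-02-20-lemma}, on Lemma~\ref{2013-07-13} to dispense with a right-decomposition rule, and on the transfer of Lemma~\ref{2013-08-12}. Your closing remark correctly identifies the finite representation property as the one step the paper itself leaves unproved for this model, so the proposal is faithful both in substance and in where the remaining work lies.
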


\section{High Undecidability of $\epsilon$-Nondeterminism}\label{sec-High-Undecidability}

In this section we show that the branching bisimilarity is highly undecidable on $\mathrm{PDA}^{\epsilon+}$.
This is done by a reduction from the $\Sigma_1^{1}$-complete problem \emph{rec-NMCM}.
A \emph{nondeterministic Minsky counter machin} $\mathcal{M}$ with two counters $c_1,c_2$ is a program of the form $1:
I_1;\ 2: I_2;\ \dots;\ n{-}1: I_{n-1};\ n: \textrm{halt}$, where for each $i\in\{1,\dots,n-1\}$
the instruction $I_i$ is in one of the following forms, assuming $1\leq j,k\leq n$ and $e\in\{1,2\}$.
\begin{itemize}
\item $c_{e}:= c_{e}+1$ and then goto $j$.
\item if $c_{e} = 0$ then goto $j$; otherwise $c_{e}:= c_{e}-1$ and then goto $k$.
\item goto $j$ or goto $k$;
\end{itemize}
The problem rec-NMCM asks if $\mathcal{M}$ has an infinite computation on $(c_1,c_2) = (0, 0)$ such that $I_1$ is executed infinitely often.
We shall use the following fact~\cite{Harel1986}.
\begin{proposition}
\emph{rec-NMCM} is $\Sigma_1^{1}$-complete.
\end{proposition}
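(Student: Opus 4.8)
The plan is to prove $\Sigma_1^1$-completeness in the two standard directions: first that \emph{rec-NMCM} lies in $\Sigma_1^1$, and then that it is $\Sigma_1^1$-hard. For membership I would observe that a witnessing infinite recurrent computation of $\mathcal{M}$ is a single second-order object, an infinite sequence of configurations. Concretely I would existentially quantify over a function $g:\mathbb{N}\to\mathbb{N}$ coding the configuration sequence $(\mathrm{pc}_t,c_1^t,c_2^t)_{t\in\mathbb{N}}$ (each triple packed into one natural number), and assert an arithmetic matrix stating: (i) $g(0)$ codes the initial configuration $(1,0,0)$; (ii) for every $t$, $g(t{+}1)$ is a legal successor of $g(t)$ under the instruction named by the program counter of $g(t)$; and (iii) the recurrence condition $\forall m\,\exists t\geq m$ such that the program counter of $g(t)$ equals $1$. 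Items (i) and (ii) are $\Pi^0_1$ and (iii) is $\Pi^0_2$, so the full formula is one function quantifier over an arithmetic matrix, hence $\Sigma_1^1$.

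For hardness I would reduce, following Harel~\cite{Harel1986}, from a known $\Sigma_1^1$-complete recurrence problem: does a nondeterministic Turing machine have an infinite computation from the empty tape that visits a designated recurring state $q_r$ infinitely often? Its $\Sigma_1^1$-completeness is the analytical-hierarchy analogue of B\"uchi acceptance and ultimately reflects that ``a recursive tree has an infinite path'' is $\Sigma_1^1$-complete. The reduction rests on Minsky's classical simulation of a Turing machine by a two-counter machine, with two adaptations. First, each nondeterministic transition of the Turing machine is realized by the instruction ``goto $j$ or goto $k$''; since the transition relation is finite, a short cascade of such instructions encodes any local nondeterministic choice. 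Second, the block of counter instructions that simulates a visit to $q_r$ is marked by one designated instruction, and after reindexing the program I place this instruction at position $1$, so that it becomes $I_1$ and the start configuration is $(0,0)$.

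The correctness argument must then show that the constructed machine $\mathcal{M}$ has an $I_1$-recurrent infinite run from $(0,0)$ if and only if the Turing machine has a $q_r$-recurrent infinite run from the empty tape. Both directions hinge on the fact that Minsky's macro-steps are individually terminating: the inner loops that multiply or divide a counter by a fixed constant always halt, so each macro-step consumes only finitely many instructions. Hence every infinite run of $\mathcal{M}$ decomposes uniquely into infinitely many completed macro-steps, i.e. an infinite Turing-machine run, and conversely; and since $I_1$ is entered exactly once per simulated visit to $q_r$, the two ``infinitely often'' conditions coincide. I expect this faithfulness check to be the main obstacle: one must verify that the two-counter simulation creates no spurious infinite or recurrent runs --- in particular that $\mathcal{M}$ cannot loop forever inside a single macro-step, and cannot hit $I_1$ infinitely often without genuinely revisiting $q_r$ --- and that the nondeterministic branching is neither lost nor spuriously enlarged. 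By comparison the membership direction and the coding of configurations are routine.
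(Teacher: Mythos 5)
The paper does not actually prove this proposition: it is quoted as a known fact with a citation to Harel~\cite{Harel1986}, so there is no in-paper argument to compare yours against. What you have written is essentially the standard proof that sits behind that citation. The membership half is right and routine: one existential function quantifier over a coded run, with an arithmetic (at worst $\Pi^0_2$) matrix expressing initialization, legal succession, and recurrence of $I_1$, giving a $\Sigma_1^1$ form. The hardness half, reducing from the recurrent nondeterministic Turing machine problem via Minsky's two-counter simulation, is also the standard route, and you correctly locate the only real content: each macro-step of the simulation (the multiply/divide loops) terminates, so infinite runs of the counter machine correspond exactly to infinite runs of the Turing machine and the two ``infinitely often'' conditions transfer.

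The one place your sketch is too glib is the claim that you can make the recurrence marker into $I_1$ ``after reindexing.'' The instruction $I_1$ must do double duty: it is the first instruction executed from $(0,0)$ and it is the instruction executed once per simulated visit to $q_r$, and control must flow correctly after it in both situations. Moving an instruction from the middle of the $q_r$-handling block to position $1$ does not obviously achieve this, since $(0,0)$ need not encode a configuration in which the simulated machine is at $q_r$, and $I_1$ has a fixed set of successors. The standard repair is to start from the version of the recurrence problem in which the recurring state is the \emph{initial} state of the Turing machine (which is the form Harel establishes as $\Sigma_1^1$-complete), let instruction $1$ be the unique entry point of the block that simulates a move from that state, and arrange the counter encoding (or a one-time special case in the decoding) so that $(0,0)$ is read as the initial tape. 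With that adjustment your outline is correct and matches the argument the paper is implicitly relying on.
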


Following~\cite{JancarSrba2008} we  transform a nondeterministic Minsky counter machine $\mathcal{M}$ with two counters $c_1$ and $c_2$ into a machine $\mathcal{M'}$ with three counters $c_1,c_2,c_3$.
The machine $\mathcal{M}'$ makes use of a new nondeterministic instruction of the form
\[i: c_e :=* \textrm{ and then goto } j.\]
The effect of this instruction is to set $c_e$ by a nondeterministically chosen number and then go to $I_j$.
Every instruction ``$i : I_i$'' of $\mathcal{M}$ is then replaced by two instructions in $\mathcal{M}'$, with respective labels $2i{-}1$ and $2i$.
\begin{itemize}
\item $1: I_1$ is replaced by
  \begin{itemize}
    \item [] $1: c_3 :=*$ and goto $2$;
    \item [] $2: I_1$.
  \end{itemize}
\item $i: I_i$, where $i \in \{2, \dots, n\}$, is replaced by
\begin{itemize}
\item [] $2i-1:$ if $c_3=0$ then goto $2n$; otherwise $c_3 := c_3-1$ and goto $2i$;
\item [] $2i: I_i$
\end{itemize}
\item Inside each $I_i$, where $i \in \{1,\dots, n\}$, every occurrence of ``goto $j$'' is replaced by ``goto $2j-1$''.
\end{itemize}

It is easy to see that $\mathcal{M}'$ has a infinite computation if and only if $\mathcal{M}$ has a infinite computation that uses instruction $1$ infinitely often.
Our goal is to construct a $\mathrm{PDA}^{\epsilon+}$ system $\mathcal{G}=\{\mathcal{Q}, \mathcal{L}, \mathcal{V}, \mathcal{R}\}$ in which we can define two processes $p_1X\bot$ and $q_1X\bot$ that render true the following equivalence.
\[p_1X\bot \beq q_1X\bot \textrm{ iff }\mathcal{M}'\ \textrm{has a infinite computation}.\]
The system $\mathcal{G}=\{\mathcal{Q}, \mathcal{L}, \mathcal{V}, \mathcal{R}\}$ contains the following key elements:
\begin{itemize}
\item Two states $p_i,q_i \in \mathcal{Q}$ are introduced for each instruction $I_i$.
\item $\mathcal{L}=\{a,b,c,c_1, c_2, c_3, f, f'\}$.
\item Three stack symbols $C_1,C_2,C_3\in\mathcal{V}$ are introduced for the three counters respectively.
A bottom symbol $\bot\in\mathcal{V}$ is also introduced.
\end{itemize}
Our construction borrows ideas from~\cite{Mayr2003-ICALP,JancarSrba2008,YinFuHeHuangTao2014}, making use of the game characterization of branching bisimulation and~\emph{Defender's Forcing} technique.
A configuration of $\mathcal{M}'$ that consists of instruction label $i$ and counter values $(c_1, c_2, c_3)=(n_1, n_2, n_3)$ is represented by the game configuration $(p_iXC_1^{n_1}C_2^{n_2}C_3^{n_3}\bot, q_iXC_1^{n_1}C_2^{n_2}C_3^{n_3}\bot)$.
In the rest of the section we shall complete the definition of $\mathcal{G}$ and explain its working mechanism.

\subsection{Test on Counter}

\begin{figure}[t]
\begin{itemize}
\item $tC_1\act{c_1}t$, \ $tC_2\act{c_2}t$, \ $tC_3\act{c_3}t$;

$t'C_1\act{c_1}t'$, \ $t'C_2\act{c_2}t'$,\ $t'C_3\act{b}t\bot$;
\vspace*{1.5mm}

\item $t(e,+)C_j\act{c_j}t(e,+)$ if $j < e$, \ $t(e,+)C_j\act{c_e}tC_j$ if $j\geq e$, \ $t(e,+)\bot\act{c_e}t\bot$;

$t'(e,+)C_j\act{c_j}t$;
\vspace*{1.5mm}
\item $t(e,*)C_1\act{c_1}t(e,*)$,\  $t(e,*)C_2\act{c_2}t(e,*)$,\ $t(e,*)C_3\act{b}t\bot$;

$t'(e,*)C_1\act{c_1}t(e,*)$,\  $t'(e,*)C_2\act{c_2}t(e,*)$,\ $t'(e,*)C_3\act{b}t\bot$;
\vspace{1.5mm}

\item $t(e,-)C_j\act{c_j}t$;

$t'(e,-)C_j\act{c_j}t'(e,-)$ if $j < e$, \ $t'(e,-)C_j\act{c_e}tC_j$ if $j\geq e$, \ $t'(e,-)\bot\act{c_e}t\bot$;
\vspace*{1.5mm}

\item $t(e,0)C_j\act{c_j}t(e,0)$ if $j \neq e$, \ $t(e,0)C_e \act{f}t(e,0)$;

$t'(e,0)C_j\act{c_j}t(e,0)$ if $j \neq e$, \ $t'(e,0)C_e \act{f'}t(e,0)$;
\vspace*{1.5mm}

\item $t(e,1)C_j\act{c_j} t(e,1)$ if $j < e$, \ $t(e,1)C_e\act{c_e}t$, \ $t(e,1)C_j\act{f}t$ if
  $j > e$; \ $t(e,1)\bot\act{f}t\bot$;

$t'(e,1)C_j\act{c_j} t'(e,1)$ if $j < e$, \ $t'(e,1)C_e\act{c_e}t$, \ $t'(e,1)C_j\act{f'}t$ if
$j> e$; \ $t'(e,1)\bot\act{f'}t\bot$;
\vspace*{1.5mm}

\item $p\bot\act{b}t\bot$ for every $p\in\{t,t',t(e,+),t'(e,+),t(e,-),t'(e,-),t(e,0),t'(e,0),t(e,1),t'(e,1)\}$.
\end{itemize}
\caption{Test on Counter}
\label{Testing-on-Counter}
\end{figure}

We need some rules to carry out testing on the counters.
In the rules given in Fig.~\ref{Testing-on-Counter}, $j$ and $e$ range over the set $\{1, 2, 3\}$.
These rules are straightforward.
The following proposition summarizes the correctness requirement on the equality test, the successor and predecessor tests, and the zero test.
Its routine proof is omitted.

\begin{proposition}\label{lm:checking}
Let $\alpha=C_1^{n_1}C_2^{n_2}C_3^{n_3}$ and $\beta=C_1^{m_1}C_2^{m_2}C_3^{m_3}$.
The following statements are valid.
\begin{enumerate}
\item $t\alpha\bot \beq t\beta\bot$ iff $n_e = m_e$ for  $e=1,2,3$.
\item $t(3,*)\alpha\bot \beq t'(3,*)\beta\bot$ iff $n_e = m_e$ for  $e=1,2$.
\item $t(e,+)\alpha\bot \beq t'(e,+)\beta\bot$ iff $n_e+1= m_e$ and $n_j = m_j$ for $j\neq e$.
\item $t(e,-)\alpha\bot \beq t'(e,-)\beta\bot$ iff $n_e = m_e +1$ and $n_j = m_j$ for $j\neq e$.
\item $t(e,0)\alpha\bot \beq t'(e,0)\beta\bot$ iff $n_j = m_j $ for $j = 1,2,3$ and $n_e=0$.
\item $t(e,1)\alpha\bot \beq t'(e,1)\beta\bot$ iff $n_j = m_j $ for $j =1,2,3$ and $n_e > 0$.
\item  $p\alpha\bot \beq p\alpha\bot\beta$ for all $p \in \mathcal{Q}$ and all $\alpha, \beta \in\mathcal{V}^{*}$.
\end{enumerate}
\end{proposition}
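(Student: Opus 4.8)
The plan is to prove each of the seven statements by directly analysing the bisimulation games induced by the rules in Fig.~\ref{Testing-on-Counter}. Since each statement is an ``iff'', I would argue both directions: the ``if'' direction by exhibiting a Defender strategy (equivalently, producing an explicit branching bisimulation relating the two configurations), and the ``only if'' direction by exhibiting an Attacker strategy that exposes a mismatch whenever the arithmetic condition fails. The key observation throughout is that every rule in Fig.~\ref{Testing-on-Counter} is \emph{strong}: no $\epsilon$-transitions appear, so on these fragments branching bisimilarity $\beq$ coincides with strong bisimilarity, and each transition is uniquely determined by the current control state paired with the top-of-stack symbol. This determinism reduces each game to a forced lockstep comparison of the two stacks, letter by letter.

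First I would handle statement (1), the pure equality test, as the base case. The state $t$ simply pops each counter symbol $C_j$ emitting the matching label $c_j$ (rule $tC_j\act{c_j}t$), and $t\bot\act{b}t\bot$ loops at the bottom. Because the top symbol forces which label can be emitted, the only way $t\alpha\bot$ and $t\beta\bot$ can match move-for-move is if $\alpha$ and $\beta$ present their symbols in exactly the same order with the same multiplicities; the bottom-marker rule $p\bot\act{b}t\bot$ ensures both sides synchronise on reaching $\bot$. This yields $t\alpha\bot\beq t\beta\bot$ iff $n_e=m_e$ for $e=1,2,3$. Statement (7) is immediate from the same bottom loop: appending $\beta$ below $\bot$ is invisible because $\bot$ is never popped (there is no rule consuming $\bot$ except the self-loop $p\bot\act{b}t\bot$), so $p\alpha\bot$ and $p\alpha\bot\beta$ have identical transition trees. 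For statements (3)--(6) I would treat each paired gadget $t(e,\cdot)$ versus $t'(e,\cdot)$ in turn, observing that the asymmetry between the primed and unprimed rules is engineered so that one side absorbs (or skips) exactly one $C_e$ symbol relative to the other before both collapse to the common state $t$; after that collapse, statement (1) applies to the residual stacks. Concretely, for the successor test (3) the rule $t(e,+)C_j\act{c_e}tC_j$ for $j\ge e$ against $t'(e,+)C_j\act{c_j}t$ forces the primed side to carry one fewer $C_e$, matching $n_e+1=m_e$; the zero/positive tests (5),(6) use the distinguished labels $f,f'$ at $C_e$ to detect whether a $C_e$ is present at the critical position.

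The main obstacle I anticipate is \textbf{bookkeeping the control states when counters are represented in a fixed symbol order but the gadgets must test a specific counter $c_e$}. The rules stratify behaviour by the guard $j<e$ versus $j\ge e$ so that the machine ``walks past'' the lower-indexed counters (emitting $c_j$ and staying in the same dotted state) until it reaches the block of $C_e$ symbols; only there does the successor/predecessor/zero logic fire. Verifying that this walk-past is faithful---i.e. that emitting $c_j$ for $j<e$ preserves equivalence of the two sides and does not accidentally allow the Attacker to desynchronise---requires checking that at each such step both configurations offer \emph{exactly} the same menu of transitions. Because the rules are deterministic on $(\text{state},\text{top symbol})$ pairs, this check is mechanical but must be done case by case for each of $e\in\{1,2,3\}$ and each relative position of the top symbol; this is precisely the ``routine proof'' the authors elect to omit. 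Statement (2), comparing $t(3,*)$ with $t'(3,*)$, is the one genuinely different case, since the starred gadget nondeterministically guesses the value of $c_3$ and therefore must ignore $C_3$ entirely while still insisting $n_1=m_1$ and $n_2=m_2$; here I would check that the rules $t(3,*)C_3\act{b}t\bot$ on both sides let either configuration truncate at the $C_3$ block, making the $C_3$ contents irrelevant to equivalence, after which statement (1) restricted to the first two counters closes the argument.
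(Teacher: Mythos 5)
The paper offers no proof of this proposition at all (``Its routine proof is omitted''), so the only benchmark is the routine argument the authors had in mind, and your plan is exactly that argument: the states of Fig.~\ref{Testing-on-Counter} have no $\epsilon$-rules, so on this fragment $\beq$ collapses to strong bisimilarity; each configuration is deterministic up to the harmless coincidence of targets at $\bot$, so strong bisimilarity collapses to trace equality; and one reads off the traces case by case. Your handling of (1), (3), (4), (6) and (7) is sound --- for (7) the essential fact is that no rule of $\mathcal{G}$ ever pops a $\bot$, so $\{(p\gamma\bot\delta,\,p\gamma\bot\delta')\}$ is a bisimulation, which is what your ``identical transition trees'' amounts to. (Minor slip in (3): the primed side must carry one \emph{more} $C_e$, not one fewer; your equation $n_e+1=m_e$ is the correct one.)

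The gap is that the ``mechanical but must be done case by case'' check you explicitly defer is precisely where the argument fails against the rules as printed, for two of the seven items. For (5), the printed rule $t'(e,0)C_j\act{c_j}t(e,0)$ ($j\neq e$) drops the prime the moment the primed state walks past any non-$C_e$ symbol; taking $e=2$, both $t(2,0)C_1C_2\bot$ and $t'(2,0)C_1C_2\bot$ have the single transition $\act{c_1}t(2,0)C_2\bot$, hence $t(2,0)C_1C_2\bot\beq t'(2,0)C_1C_2\bot$ although $n_2=1\neq 0$, contradicting (5). (By analogy with the $t'(e,1)$ rules the intended rule is surely $t'(e,0)C_j\act{c_j}t'(e,0)$, and with that repair your $f/f'$ argument for (5) and (6) goes through.) For (2), the states $t(e,*),t'(e,*)$ are missing from the list carrying $p\bot\act{b}t\bot$, so $t(3,*)\alpha\bot$ deadlocks when $n_3=0$ while $t'(3,*)\beta\bot$ reaches the $b$-loop when $m_3>0$, falsifying (2) as printed even when $n_1=m_1$ and $n_2=m_2$. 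Neither glitch invalidates your method, but a proof whose substance is ``the case analysis is routine'' has to actually run the cases; doing so shows that (2) and (5) hold only after Fig.~\ref{Testing-on-Counter} is repaired, and your sketch as written would certify two false equivalences.
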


\subsection{Operation on Counter}

There are three basic operations on the counters, the increment operation, the decrement operation and the nondeterministic assignment operation.
Our task is to encode these operations in the branching bisimulation game $\mathcal{G}$.
To do that we use a technique from~\cite{YinFuHeHuangTao2014}, which is a refinement of Defender's Forcing technique~\cite{JancarSrba2008}, taking into account of the subtlety of the branching bisimulation.
The idea can be explained using the following system.
\begin{enumerate}
\item $P \act{a} P'$, $P \act{\epsilon} Q_0$. The latter is the only silent transition of $P$.
\item $Q\act{\epsilon} Q_0$. This is the only transition $Q$ may perform. Hence $Q\simeq Q_0$.
\item $Q_0\beq Q'$ whenever $Q_0\Longrightarrow Q'$.
\end{enumerate}
Condition 1 and condition 2 guarantee that $P\simeq Q$ if and only if $P\simeq Q_0$.
So the effectiveness of the Defender's Forcing the copycat rules $P \act{\epsilon} Q_0$, $Q\act{\epsilon} Q_0$ intend to achieve depends on how we define $Q_0$.
Condition 3 is forced upon us by the previous two conditions.
A standard approach to meet the requirement 3 is to make sure that everything that has been done to derive $Q_0\Longrightarrow Q'$ can be undone.
In our setting this is accomplished by starting all over again with the help of the bottom symbol $\bot$.
Once we know that condition 3 is indeed satisfied, the argument for the correctness of the bisimulation game can be simplified in the following sense:
In the game of $(P, Q)$ Attacker would play $P \act{a} P'$.
Defender's optimal response must be of the following form
\[Q \act{\epsilon} Q_0 \act{\epsilon} Q_1 \act{\epsilon}Q_2 \act{\epsilon} \dots \act{\epsilon} Q_k \act{a} Q'.\]
For both players only the configuration $(P',Q')$ need be checked.

\begin{figure}[t]
\begin{itemize}
\item $u(e,o,j)X \act{a} u_1(e,o,j)X$, \ $u(e,o,j)X \act{\epsilon} r'(e,o,j)X$;

$u'(e,o,j)X \act{\epsilon} r'(e,o,j)X$;
\vspace*{1.5mm}

\item $r'(e,o,j)X \act{\epsilon} g'(e,o,j)X\bot$;

$g'(e,o,j)X \act{\epsilon} g'(e,o,j)X_3$;

$g'(e,o,j)X_3 \act{\epsilon} g'(e,o,j)X_3C_3$, \ $g'(e,o,j)X_3 \act{\epsilon} g'(e,o,j)X_2$;

$g'(e,o,j)X_2 \act{\epsilon} g'(e,o,j)X_2C_2$, \ $g'(e,o,j)X_2 \act{\epsilon} g'(e,o,j)X_1$;

$g'(e,o,j)X_1 \act{\epsilon} g'(e,o,j)X_1C_1$, \ $g'(e,o,j)X_1 \act{\epsilon} r'(e,o,j)X$;
\vspace*{1.5mm}

\item $g'(e,o,j)X_1 \act{a} u_1'(e,o,j)X$;
\vspace*{1.5mm}

\item $u_1(e,o,j)\act{a} u_2(e,o,j)X$, \ $u_1(e,o,j)X\act{c} t(e,o)$;

$u_1'(e,o,j)\act{a} u_2'(e,o,j)X$, \ $u_1'(e,o,j)X\act{c} t'(e,o)$;
\vspace*{1.5mm}

\item $u_2(e,o,j)X \act{\epsilon} r(e,o,j)X$;

$u_2'(e,o,j)X\act{\epsilon} r(e,o,j)X$, \ $u_2'(e,o,j)X\act{a}u_3'(e,o,j)X$;
\vspace*{1.5mm}

\item $r(e,o,j)X \act{\epsilon} g(e,o,j)X\bot$; \ $g(e,o,j)X \act{\epsilon} g(e,o,j)X_3$;

$g(e,o,j)X_3 \act{\epsilon} g(e,o,j)X_3C_3$, \ $g(e,o,j)X_3 \act{\epsilon} g(e,o,j)X_2$;

$g(e,o,j)X_2 \act{\epsilon} g(e,o,j)X_2C_2$, \ $g(e,o,j)X_2 \act{\epsilon} g(e,o,j)X_1$;

$g(e,o,j)X_1 \act{\epsilon} g(e,o,j)X_1C_1$, \ $g(e,o,j)X_1 \act{\epsilon} r(e,o,j)X$;
\vspace*{1.5mm}

\item $g(e,o,j)X_1 \act{a} u_3(e,o,j)X$;
\vspace*{1.5mm}

\item $u_3(e,o,j)X\act{a} p_jX$, \ $u_3(e,o,j)X\act{c} t$;

$u_3'(e,o,j)X\act{a} q_jX$, \ $u_3'(e,o,j)X\act{c} t$.
\end{itemize}

\caption{Operation on Counter}
\label{Operation-on-Counter}
\end{figure}

With the above remark in mind we turn to the part of the game that implements the basic operations.
Let $e$ range over $\{1,2,3\}$, $o$ over $\{+,-,*\}$, and $j$ over $\{1,\dots, 2n\}$.
For each triple $(e,o,j)$ we introduce the rules given in Fig.~\ref{Operation-on-Counter}.
The following lemma identifies some useful state preserving silent transitions.

\begin{lemma}\label{lm:forcing}
$P\beq r(e,o,j)X\bot$ for all $P$ such that $r(e,o,j)X\bot \Longrightarrow P$.
Similarly $Q\beq r'(e,o,j)X\bot$ for all $Q$ such that $r'(e,o,j)X\bot \Longrightarrow Q$.
\end{lemma}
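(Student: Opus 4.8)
The plan is to collapse $r(e,o,j)X\bot$ together with all of its silent descendants into a single branching bisimilarity class. Concretely I would set
\[
\mathcal{R} \;=\; \{(P,Q) : r(e,o,j)X\bot \Longrightarrow P \ \text{and}\ r(e,o,j)X\bot\Longrightarrow Q\}\;\cup\;{\simeq},
\]
and verify the three clauses of Definition~\ref{2014-03-30}. Since $r(e,o,j)X\bot\Longrightarrow r(e,o,j)X\bot$ and $\simeq$ is itself a branching bisimulation, proving that $\mathcal{R}$ is a branching bisimulation yields $\mathcal{R}\subseteq{\simeq}$, whence $r(e,o,j)X\bot\simeq P$ for every descendant $P$. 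The primed statement is obtained verbatim by priming every symbol, the transition rules for $g'(e,o,j)$ being an exact copy of those for $g(e,o,j)$, so I would only spell out the unprimed case.

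First I would read off the silent-descendant graph from Figure~\ref{Operation-on-Counter}. Every $P$ with $r(e,o,j)X\bot\Longrightarrow P$ is either an $r$-configuration $r(e,o,j)X\gamma\bot\rho$, where $\gamma$ is a (possibly empty) string of counter symbols, or a $g$-configuration $g(e,o,j)Y\sigma$ with $Y\in\{X,X_1,X_2,X_3\}$; in every case a $\bot$ is present and the topmost symbol is never $\bot$. The one structural fact to isolate is that the sole non-silent transition anywhere in this graph is $g(e,o,j)X_1\tau\bot\sigma\act{a}u_3(e,o,j)X\tau\bot\sigma$, with $\tau=C_1^{n_1}C_2^{n_2}C_3^{n_3}$ a complete counter triple. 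I would then record two facts. Reachability: every descendant can silently reach an $r$-configuration, and from an $r$-configuration one can silently push a fresh $\bot$ and build above it a $g(e,o,j)X_1$-configuration whose top triple is any prescribed $\tau$. Sealing: by Proposition~\ref{lm:checking}(7) whatever sits below the first $\bot$ is irrelevant, so $u_3(e,o,j)X\tau\bot\sigma\simeq u_3(e,o,j)X\tau\bot$ for every $\sigma$.

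With these facts the verification is short. Clause 3 is vacuous, since no descendant is a selection process $j$. For clause 2, if $(P,Q)\in\mathcal{R}$ are descendants and $Q\act{\epsilon}Q'$, then $Q'$ is again a descendant, so $(P,Q')\in\mathcal{R}$ and the first disjunct applies. For clause 1, an observable move must be $Q=g(e,o,j)X_1\tau\bot\sigma\act{a}u_3(e,o,j)X\tau\bot\sigma=Q'$; by the reachability fact $P\Longrightarrow P''=g(e,o,j)X_1\tau\bot\sigma'$ for the same $\tau$, where $P''$ is a descendant so $P''\mathcal{R}Q$, and $P''\act{a}u_3(e,o,j)X\tau\bot\sigma'=:P'$. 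The sealing fact gives $P'\simeq Q'$, so $P'\mathcal{R}Q'$, as required.

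The hard part will be the reachability fact, which is exactly the ``everything can be undone'' mechanism underlying the gadget. A $g$-configuration that has already committed to some counter values can only increase them, so a prescribed triple $\tau$ cannot be matched directly; the escape is to slide back to an $r$-configuration through $g(e,o,j)X_1\act{\epsilon}r(e,o,j)X$, after which pushing a fresh $\bot$ re-opens the full freedom to build any triple, while Proposition~\ref{lm:checking}(7) buries the stale content beneath that $\bot$. Pinning down this reset-through-$\bot$ argument, and confirming that every intermediate state traversed during a rebuild is itself a descendant and therefore stays inside the class, is the delicate point; once it is settled, the branching-bisimulation clauses follow mechanically.
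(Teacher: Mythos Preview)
Your argument is correct, but it is considerably more laborious than the paper's. The paper observes that from any silent descendant $P$ one can silently reach a configuration of the form $r(e,o,j)X\bot\alpha$ (exactly your ``reachability'' fact), invokes Proposition~\ref{lm:checking}(7) to get $r(e,o,j)X\bot\alpha\simeq r(e,o,j)X\bot$, and then appeals to the Computation Lemma (Lemma~\ref{computation-lemma}): since $r(e,o,j)X\bot\Longrightarrow P\Longrightarrow r(e,o,j)X\bot\alpha\simeq r(e,o,j)X\bot$, every intermediate state, in particular $P$, is branching bisimilar to $r(e,o,j)X\bot$. That is the entire proof.

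What you do instead is build an explicit branching bisimulation containing all pairs of descendants and verify the clauses of Definition~\ref{2014-03-30} by hand. This works, but the case analysis on observable moves, the matching of counter triples $\tau$, and the ``sealing'' step are all unnecessary: you are in effect re-deriving a special instance of the Computation Lemma inside the proof. The paper's route is shorter and avoids any inspection of the $a$-labelled transitions or the shape of the counter contents; all that is needed is that every descendant can silently loop back to (a $\bot$-padded copy of) the starting configuration. If you simply cite Lemma~\ref{computation-lemma} after establishing reachability, your proof collapses to the paper's three lines.
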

\begin{proof}
Suppose $r(e,o,j)X\bot \Longrightarrow P$.
Then $P\Longrightarrow r(e,o,j)X\bot \alpha$ for some $\alpha$.
By (7) of Proposition~\ref{lm:checking} one has $r(e,o,j)X\bot\simeq r(e,o,j)X\bot\alpha$.
Consequently $r(e,o,j)X\bot \simeq P$.
\qed\end{proof}

The next lemma states the soundness property of the rules defined in Fig.~\ref{Operation-on-Counter}, in which we write $\mathbf{1}^{1}$, $\mathbf{1}^{2}$ and $\mathbf{1}^{3}$ respectively for $(1,0,0)$, $(0,1,0)$ and $(0,0,1)$.

\begin{lemma}\label{lm:op_counter}
Suppose $\alpha=C_1^{m_1}C_2^{m_2}C_3^{m_3}$.
The following statements are valid.
 \begin{enumerate}
 \item In the bisimulation of $(u(e,+,j)X\alpha\bot, u'(e,+,j)X\alpha\bot)$ Defender, respectively Attacker, has a strategy to win or at least push the game to $(P,Q)$ such that $P\beq
   p_jXC_1^{n_1}C_2^{n_2}C_3^{n_3}\bot$ and $Q\beq q_jXC_1^{n_1}C_2^{n_2}C_3^{n_3}\bot$ and $(n_1,n_2,n_3) = (m_1, m_2, m_3) {+} \mathbf{1}^{e}$.
\vspace*{1.5mm}

 \item If $m_e > 0$ then in the bisimulation game of $(u(e,-,j)X\alpha\bot,u'(e,-,j)X\alpha\bot)$ Defender, respectively Attacker, has a strategy to win or at least push the
   game to $(P,Q)$ such that $P\beq p_jXC_1^{n_1}C_2^{n_2}C_3^{n_3}\bot$ and $Q\beq q_jXC_1^{n_1}C_2^{n_2}C_3^{n_3}\bot$ and $(n_1,n_2,n_3) = (m_1, m_2, m_3) {-} \mathbf{1}^{e}$.
\vspace*{1.5mm}

 \item Suppose $n \geq 0$.
 In the bisimulation game of $(u(3,*,j)X\alpha\bot,u'(3,*,j)X\alpha\bot)$ Defender has a strategy to win or at least push the game to $(P,Q)$ such that $P \beq p_j XC_1^{n_1}C_2^{n_2}C_3^{n_3}\bot$ and $Q \beq q_jXC_1^{n_1}C_2^{n_2}C_3^{n_3}\bot$ and $(n_1,n_2,n_3) = (m_1, m_2, m_3)+ (n-m_3)\cdot \mathbf{1}^{3}$.
 \end{enumerate}
\end{lemma}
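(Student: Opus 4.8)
The plan is to analyse the bisimulation game on $(u(e,o,j)X\alpha\bot, u'(e,o,j)X\alpha\bot)$ directly, leaning on the Defender's Forcing scheme set up just before the lemma together with Lemma~\ref{lm:forcing}. Since the three parts are governed by the single family of rules in Fig.~\ref{Operation-on-Counter}, I would treat them uniformly and let the difference between $+$, $-$ and $*$ appear only at the two points where a counter test is triggered.

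First I would invoke the general scheme. The pair $u(e,o,j)X$, $u'(e,o,j)X$ has exactly the copycat shape: $u$ has the unique silent move $\act{\epsilon}r'(e,o,j)X$ plus an extra $a$-move, while $u'$ has only $\act{\epsilon}r'(e,o,j)X$. By Lemma~\ref{lm:forcing} the chain through $r'$ and the $g'$-states meets condition~3, so I may assume Attacker opens with the $a$-move on the $u$-side and Defender answers by a run $u'X\alpha\bot\act{\epsilon}r'X\alpha\bot\Longrightarrow g'(e,o,j)X_1\beta\bot\alpha\bot\act{a}u_1'X\beta\bot\alpha\bot$, where the generation loops let Defender choose $\beta=C_1^{k_1}C_2^{k_2}C_3^{k_3}$ freely. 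The scheme then says that only the resulting configuration $(u_1X\alpha\bot, u_1'X\beta\bot\alpha\bot)$ need be examined, every intermediate pair being neutralised by Lemma~\ref{lm:forcing} and clause~(7) of Proposition~\ref{lm:checking}, which makes whatever sits below the freshly pushed $\bot$ inert.

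Next I would read off the two branches at $u_1$. The $c$-branch sends the game to $(t(e,o)\alpha\bot, t'(e,o)\beta\bot\alpha\bot)$, and after discarding everything below the first $\bot$ by clause~(7), Proposition~\ref{lm:checking}(3),(4),(2) forces $\beta$ to be exactly $(m_1,m_2,m_3){+}\mathbf{1}^{e}$, $(m_1,m_2,m_3){-}\mathbf{1}^{e}$, or $(m_1,m_2,n)$ respectively, the last because the $*$-test ignores $c_3$ and thus leaves $n$ at Defender's discretion (and $m_e>0$ is exactly what makes the decrement test satisfiable). The $a$-branch advances to $(u_2X\alpha\bot, u_2'X\beta\bot\alpha\bot)$, where the Forcing roles are now swapped: $u_2'$ carries the $a$-move and $u_2$ only $\act{\epsilon}r(e,o,j)X$. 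Repeating the argument with the unprimed chain, Defender regenerates a second stack $\beta'$ on the opposite side and the game reaches $(u_3X\beta'\bot\alpha\bot, u_3'X\beta\bot\alpha\bot)$; here the $c$-branch runs the pure equality test $t$ versus $t$, so Proposition~\ref{lm:checking}(1) forces $\beta'=\beta$, while the $a$-branch yields $(p_jX\beta'\bot\alpha\bot, q_jX\beta\bot\alpha\bot)$, which by clause~(7) is bisimilar to $(p_jXC_1^{n_1}C_2^{n_2}C_3^{n_3}\bot, q_jXC_1^{n_1}C_2^{n_2}C_3^{n_3}\bot)$ with $\beta=\beta'=C_1^{n_1}C_2^{n_2}C_3^{n_3}$. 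The two stages are what symmetrise the operated value onto both sides.

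Finally I would assemble the two strategies. Defender's strategy is to generate honestly in both stages ($\beta$ the operated counter, then $\beta'=\beta$): every $c$-defection then lands in a bisimilar test configuration, a Defender win by Proposition~\ref{lm:checking}, and every intermediate generation pair is safe by Lemma~\ref{lm:forcing}, so the only non-winning continuation drives the game to the stated $(P,Q)$. Attacker's strategy is dual: keep selecting the $a$-moves, thereby forcing Defender to regenerate at each stage, and switch to the relevant $c$-branch the instant a generated stack is wrong, winning by the failed test; if Defender never cheats, Attacker is carried to the same $(P,Q)$. For part~3 only Defender has such a strategy, since it is Defender alone who fixes the number $n$ of $C_3$'s in $\beta$. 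I expect the main obstacle to be the careful bookkeeping of the swapped Forcing roles across the two generation stages, and the rigorous verification that the suffixes accumulated below the successive $\bot$'s are genuinely inert, i.e. that clause~(7) licenses the reduction ``only $(P',Q')$ need be checked'' at every configuration encountered along the way.
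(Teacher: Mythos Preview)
Your proposal is correct and follows essentially the same approach as the paper's proof: the paper proves only part~1 by tracing the optimal play through the four stages $(u,u')\to(u_1,u_1')\to(u_2,u_2')\to(u_3,u_3')\to(p_j,q_j)$, invoking Lemma~\ref{lm:forcing} for the intermediate silent configurations and Proposition~\ref{lm:checking} at the $c$-branches, then declares parts~2 and~3 similar. Your treatment is in fact a bit more explicit than the paper's, since you spell out Attacker's counter-strategy separately and note the asymmetry in part~3, whereas the paper bundles both players into a single ``optimal for both'' argument.
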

\begin{proof}
We prove the first statement.
The proof for the other two is similar.
Let $\beta=C_1^{n_1}C_2^{n_2}C_3^{n_3}$ such that $(n_1,n_2,n_3) = (m_1, m_2, m_3) {+} \mathbf{1}^{e}$.
In what follows we describe Defender and Attacker's step-by-step optimal strategy in the bisimulation game of $(u(e,+,j)X\alpha\bot, u'(e,+,j)X\alpha\bot)$.
\begin{enumerate}[(i)]
\item By Defender's Forcing, Attacker plays $u(e,+,j)X\alpha\bot \act{a}u_1(e,+,j)X\alpha\bot$.
Defender responds with  \[u'(e,+,j)X\alpha\bot \stackrel{\epsilon}{\Longrightarrow} g'(e,+,j)X_1\beta\bot\alpha\bot  \act{a} u_1'(e,+,j)X\beta\bot\alpha\bot.\]
According to Lemma \ref{lm:forcing} Attacker's optimal move is to continue the game from \[(u_1(e,+,j)X\alpha\bot,u_1'(e,+,j)X\beta\bot\alpha\bot).\]

\item It follows from Proposition~\ref{lm:checking} that $t(e,+)X\alpha\bot \beq t'(e,+)X\beta\bot\alpha\bot$.
If Attacker plays an action labeled $c$, Defender wins.
Attacker's optimal move is to play an action labeled $a$.
Defender then follows suit, and the game reaches the configuration $(u_2(e,+,j)X\alpha\bot,u_2'(e,+,j)X\beta\bot\alpha\bot)$.

\item Attacker's next move is $u_2'(e,+,j)X\beta\bot\alpha\bot \act{a}u_3'(e,+,j)X\beta\bot\alpha\bot$.
This is optimal by Proposition \ref{lm:checking}.
Defender responds with
  \[u_2(e,+,j)X\alpha\bot \stackrel{\epsilon}{\Longrightarrow} g(e,+,j)X_1\beta\bot\alpha\bot {\act{a}} u_3(e,+,j)X\beta\bot\alpha\bot.\]
By an argument similar to the one given in (i) Attacker would choose \[(u_3(e,+,j)X\beta\bot\alpha\bot, u_3'(e,+,j)X\beta\bot\alpha\bot)\]
as the next configuration.

\item If Attacker plays an action labeled $c$, Defender wins by Proposition \ref{lm:checking}.
So Attacker's best bet is to play an action labeled by $a$.
The game reaches the configuration $(p_jX\beta\bot\alpha\bot, q_jX\beta\bot\alpha\bot)$.
\end{enumerate}
The above argument shows that the configuration $(p_jX\beta\bot\alpha\bot, q_jX\beta\bot\alpha\bot)$ is optimal for both Attacker and Defender.
We are done.
\qed\end{proof}

\subsection{Control Flow}

We now encode the control flow of $\mathcal{M}'$ by the rules of the bisimulation game.
We will introduce a number of rules for each instruction in $\mathcal{M}'$.
\begin{enumerate}
\item The following rules are introduced in the game $\mathcal{G}$ for an instruction of the form ``$i:$ $c_e := c_e+1$ and then goto $j$''.
\[ p_iX\act{a} u(e,+,j)X, \quad q_iX\act{a} u'(e,+,j)X.\]
\item For each instruction of the form  ``$i: c_e :=*$ and then goto $j$'' the following two rules are added to $\mathcal{R}$.
\[p_iX\act{a} u(e,*,j)X, \quad q_iX\act{a} u'(e,*,j)X.\]
\item For each instruction of the form ``$i:$ goto $j$ or goto $k$'', we have the following.
\begin{itemize}
\vspace*{1.5mm}
\item $p_iX\act{a}p_i^{1}X$, \ $p_iX\act{a}q_i^{1}X$, \ $p_iX\act{a} q_i^{2}X$;

$q_iX\act{a}q_i^{1}X$, \ $q_iX\act{a} q_i^{2}X$;

\vspace*{1.5mm}

\item $p_i^{1}X \act{a}p_jX$, \ $p_i^{1}X\act{a}p_kX$;

$q_i^{1}X\act{a} q_jX$, \ $q_i^{1}X\act{a} p_kX$;

$q_i^{2}X\act{a} p_jX$, \ $q_i^{2}X\act{a} q_kX$.

\vspace*{1.5mm}

\end{itemize}
These rules embodies precisely the idea of Defender's Forcing \cite{JancarSrba2008}.
It is Defender who makes the choice.
 \item For each instruction of the form
\begin{center}
``$i:$ if $c_e = 0$ then goto $j$; otherwise $c_e = c_e -1$ and then goto $k$''
\end{center}
we construct a system defined by the following rules.
\begin{itemize}
\vspace*{1.5mm}

\item $p_iX\act{a}p_i(e,0,j)$, \ $p_iX\act{c} p_i(e,1,k)$;

$q_iX\act{a}q_i(e,0,j)$, \ $q_iX\act{c}q_i(e,1,k)$;
\vspace*{1.5mm}

\item $p_i(e,0,j)X\act{a}v_1(e,0,j)X$, \ $p_i(e,1,k)X\act{a}v_1(e,1,k)X$;

$p_i(e,0,j)X\act{a}v_2(e,0,j)X$, \ $p_i(e,1,k)X\act{a}v_2(e,1,k)X$;

$p_i(e,0,j)X\act{a}v_3(e,0,j)X$,  $p_i(e,1,k)X\act{a}v_3(e,1,k)X$;
\vspace*{1.5mm}

\item $q_i(e,0,j)X\act{a}v_2(e,0,j)X$, \ $q_i(e,1,k)X\act{a}v_2(e,1,k)X$;

$q_i(e,0,j)X\act{a}v_3(e,0,j)X$, \ $q_i(e,1,k)X\act{a}v_3(e,1,k)X$;
\vspace*{1.5mm}

\item $v_1(e,0,j)X\act{a}t(e,1)X$, \ $v_1(e,0,j)X\act{a}p_jX$;

$v_2(e,0,j)X\act{a}t'(e,1)X$, \ $v_2(e,0,j)X\act{a}p_jX$;

$v_3(e,0,j)X\act{a}t(e,1)X$, \ $v_3(e,0,j)X\act{a}q_jX$;
\vspace*{1.5mm}

\item $v_1(e,1,k)X\act{a}t(e,0)X$, \ $v_1(e,1,k)X\act{a}u(e,-,k)X$;

$v_2(e,1,k)X\act{a}t'(e,0)X$, \ $v_2(e,1,k)X\act{a}u(e,-,k)X$;

$v_3(e,1,k)X\act{a}t(e,0)X$, \ $v_3(e,1,k)X\act{a}u'(e,-,k)X$.
\vspace*{1.5mm}
\end{itemize}
The idea of the above encoding is that Attacker must claim either ``$c_e=0$'' or ``$c_e >0$''.
Defender can check the claim and wins if Attacker lies.
If Attacker has not lied, Defender can force Attacker to  do what Defender wants.

\item For ``$2n: \textrm{halt}$'', we add the rules
\[p_{2n}X \act{f} p_{2n}\bot, \quad q_{2n}X\act{f'} q_{2n}\bot.\]
So Attacker wins if the game ever terminates.
\end{enumerate}
This completes the definition of $\mathcal{G}$.

With the help of Proposition~\ref{lm:checking} and Lemma~\ref{lm:op_counter}, it is a routine to prove the next lemma.
\begin{lemma}\label{2014-04-01}
$\mathcal{M}'$ has a infinite computation if and only if $p_1X\bot\beq q_1X\bot$.
\end{lemma}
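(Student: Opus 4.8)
The plan is to invoke the game characterization of $\simeq$ (the bisimulation game lemma stated in Section~\ref{sec-Branching-Bisimilarity}) and to show that Defender has a winning strategy for the game starting at $(p_1X\bot, q_1X\bot)$ exactly when $\mathcal{M}'$ admits an infinite computation. The backbone of the argument is a \emph{macro-step} correspondence: from any \emph{instruction configuration} $(p_iXC_1^{n_1}C_2^{n_2}C_3^{n_3}\bot,\, q_iXC_1^{n_1}C_2^{n_2}C_3^{n_3}\bot)$ encoding the state $(i,(n_1,n_2,n_3))$ of $\mathcal{M}'$, optimal play through the gadget for $I_i$ drives the game to an instruction configuration encoding a genuine successor state $(i',(n_1',n_2',n_3'))$ of $\mathcal{M}'$, and conversely every instruction configuration reachable from an instruction configuration arises in this way. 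I would assemble this correspondence by case analysis on the shape of $I_i$, reading off each case from the control-flow rules together with the counter lemmas.

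Concretely, the cases are as follows. For increment, decrement and nondeterministic-assignment instructions, Lemma~\ref{lm:op_counter} already states that from $(u(e,o,j)X\alpha\bot, u'(e,o,j)X\alpha\bot)$ both players can force the game to a pair $(P,Q)$ with $P \beq p_jX\beta\bot$ and $Q \beq q_jX\beta\bot$, where $\beta$ records the correctly updated counters; combined with the rules $p_iX\act{a}u(e,o,j)X$ and $q_iX\act{a}u'(e,o,j)X$ this realizes one macro-step. For the nondeterministic branch ``goto $j$ or goto $k$'' the Defender's Forcing rules make it Defender who selects the target, so Defender can follow whichever branch the chosen computation takes. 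For the zero-test/decrement instruction, Attacker must commit to a claim ``$c_e=0$'' or ``$c_e>0$'' via the $a$- and $c$-transitions; parts (5) and (6) of Proposition~\ref{lm:checking} let Defender refute a false claim, winning outright through the $t(e,0)/t'(e,0)$ or $t(e,1)/t'(e,1)$ tests, while a truthful claim sends the game to the program-prescribed successor with the decrement handled as above. Throughout, Lemma~\ref{lm:forcing} and clause (7) of Proposition~\ref{lm:checking} guarantee that the intervening silent transitions produced by the $r$- and $g$-gadgets are state-preserving, which is exactly what the branching (rather than merely weak) condition demands; this also ensures that Defender's responses cope with arbitrary Attacker deviations, since the intermediate configurations are genuinely bisimilar.

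For the forward implication---if $\mathcal{M}'$ has an infinite computation $\rho$ starting at $(1,(0,0,0))$---I would let Defender play the strategy that at each instruction configuration follows the macro-step dictated by the next move of $\rho$, resolving the branch choices of $\rho$ through the forcing gadgets. Because $\rho$ is infinite it never reaches instruction $2n$, so the mismatched halting rules $p_{2n}X\act{f}p_{2n}\bot$ and $q_{2n}X\act{f'}q_{2n}\bot$ are never triggered; Defender never gets stuck and the play is infinite, hence Defender wins and $p_1X\bot\beq q_1X\bot$. For the converse, assume $p_1X\bot\beq q_1X\bot$, so Defender has a winning strategy. Any maximal play is then infinite and never visits an instruction configuration with label $2n$, since reaching halt would let Attacker win through the $f/f'$ mismatch. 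Reading the instruction configurations visited along the play off in order yields an infinite sequence of states of $\mathcal{M}'$; the control-flow gadgets force consecutive states to be related by a program instruction, and the test gadgets of Proposition~\ref{lm:checking} force the counters to be updated exactly as the instruction prescribes---otherwise Attacker, not Defender, would have had the winning deviation. This sequence is therefore a genuine infinite computation of $\mathcal{M}'$.

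The main obstacle will be the coinductive bookkeeping needed to promote the one-step macro correspondence into a full winning strategy: one must track the intermediate, non-instruction configurations generated inside each gadget and verify that Defender can always steer the play back to a legitimate instruction configuration while keeping every silent step state-preserving. The single most delicate gadget is the zero-test/decrement instruction, where Attacker's commitment and Defender's verification interleave; here one must check that neither player can profit from cheating, i.e. that a false zero or positivity claim is always punishable through the corresponding test of Proposition~\ref{lm:checking}. Once these two points are settled, the remaining reasoning is the routine case analysis the construction is designed to support.
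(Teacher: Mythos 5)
Your proposal is correct and follows exactly the route the paper intends: the paper itself dismisses this lemma with the single remark that it is ``routine'' given Proposition~\ref{lm:checking} and Lemma~\ref{lm:op_counter}, and your macro-step correspondence between instruction configurations and machine states, with Defender's Forcing for the branch and zero-test gadgets and the $f/f'$ mismatch at halt, is precisely the routine argument being alluded to. No discrepancy with the paper's approach.
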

Branching bisimilarity on $\textrm{PDA}^{\epsilon+}$ is in $\Sigma_1^{1}$ for the following reason:
For any $\textrm{PDA}^{\epsilon+}$ processes $P$ and $Q$, $P\beq Q$ if and only if there exists a set of pairs that contains $(P,Q)$ and satisfies the first order arithmetic definable conditions prescribed in Definition~\ref{2014-03-30}.
Together with the reduction justified by Lemma~\ref{2014-04-01} we derive the main result of the section.
\begin{theorem}
Branching bisimilarity is $\Sigma_1^{1}$-complete on $\textrm{PDA}^{\epsilon+}$.
\end{theorem}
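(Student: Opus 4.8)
The plan is to prove the two inclusions separately: membership in $\Sigma_1^1$ and $\Sigma_1^1$-hardness.

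For the upper bound, first I would fix an effective G\"odel coding of $\mathrm{PDA}^{\epsilon+}$ processes as natural numbers; since every process is a finite syntactic tree over the finite signature $(\mathcal{Q},\mathcal{V},\mathcal{L})$, this is routine. Under this coding the one-step relations $\stackrel{\ell}{\longrightarrow}$ are primitive recursive, because the rule set $\mathcal{R}$ is finite and the structural transition rule is a simple tree substitution; consequently the weak relation $\Longrightarrow$ and the composite $\stackrel{\epsilon}{\Longrightarrow}\stackrel{\ell}{\longrightarrow}$ are $\Sigma^0_1$. I would then read off Definition~\ref{2014-03-30}: $P\beq Q$ holds exactly when there is a set $S$ of pairs of codes with $(P,Q)\in S$ such that $S$ is symmetric and every pair in $S$ satisfies clauses 1--3. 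With $S$ treated as a free set parameter, this body is an arithmetic predicate of $S$, since each clause is an alternation of number quantifiers over transitions, all arithmetic relative to $S$. Prefixing the single existential second-order quantifier $\exists S$ yields a $\Sigma_1^1$ formula, so $\beq$ restricted to $\mathrm{PDA}^{\epsilon+}$ is in $\Sigma_1^1$.

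For hardness I would turn the construction of Section~\ref{sec-High-Undecidability} into a many-one reduction. Given a two-counter machine $\mathcal{M}$, the passage to the three-counter machine $\mathcal{M}'$ and then to the system $\mathcal{G}$ with its distinguished processes $p_1X\bot$ and $q_1X\bot$ is entirely effective, so the map $\mathcal{M}\mapsto(p_1X\bot,q_1X\bot)$ is computable. By the equivalence established for $\mathcal{M}'$ (namely $\mathcal{M}'$ has an infinite computation iff $\mathcal{M}$ has one executing instruction $1$ infinitely often) together with Lemma~\ref{2014-04-01}, we obtain that $\mathcal{M}$ is a positive instance of \emph{rec-NMCM} iff $p_1X\bot\beq q_1X\bot$. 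Since \emph{rec-NMCM} is $\Sigma_1^1$-complete by the Proposition, this reduction witnesses $\Sigma_1^1$-hardness of $\beq$ on $\mathrm{PDA}^{\epsilon+}$, and combining the two halves gives $\Sigma_1^1$-completeness.

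The genuinely hard content is the correctness of the reduction, i.e. Lemma~\ref{2014-04-01}, which rests on Proposition~\ref{lm:checking}, Lemma~\ref{lm:op_counter}, and the Defender's-Forcing gadgets; but all of these are available as established facts, so within the present theorem they are invoked rather than reproved. The only point needing real care in the membership half is to confirm that the weak closure $\Longrightarrow$, ranging over silent sequences of unbounded length, is arithmetic (it is $\Sigma^0_1$) and that the universal quantifications over outgoing transitions in clauses 1 and 2 of Definition~\ref{2014-03-30} therefore keep the matrix arithmetic in $S$; this is precisely what prevents the complexity from climbing above $\Sigma_1^1$. I expect this bookkeeping to be the only nontrivial step in the proof itself, the substantive difficulty having already been discharged in the construction of $\mathcal{G}$.
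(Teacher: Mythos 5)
Your proposal is correct and takes essentially the same route as the paper: membership in $\Sigma_1^1$ by expressing $P\simeq Q$ as the existence of a set of pairs containing $(P,Q)$ that satisfies the arithmetic conditions of Definition~\ref{2014-03-30}, and hardness by the effective reduction from \emph{rec-NMCM} through $\mathcal{M}'$ and the system $\mathcal{G}$, justified by Lemma~\ref{2014-04-01}. Your extra bookkeeping (G\"odel coding, $\Sigma^0_1$-ness of $\Longrightarrow$) only spells out details the paper leaves implicit.
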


It has been proved in~\cite{YinFuHeHuangTao2014} that the branching bisimilarity is undecidable on normed PDA.
The reduction defined in the above can be constructed for nPDA too.
This is because in nPDA the stack can be {\em reset} by popping off all the symbols in the stack using $\epsilon$-popping transitions and creating new stack content using $\epsilon$-pushing transitions, achieving the same effect as the bottom symbol $\bot$ achieves in $\textrm{PDA}^{\epsilon+}$.
The details are omitted.

\begin{theorem}
Branching bisimilarity is $\Sigma_1^{1}$-complete on $\textrm{nPDA}$.
\end{theorem}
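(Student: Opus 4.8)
The plan is to establish the two inclusions separately. For membership, nothing new is needed: exactly as in the preceding theorem for $\textrm{PDA}^{\epsilon+}$, the transition rules of an nPDA are finite and the one-step relation is computable, so the clauses of Definition~\ref{2014-03-30}, with the silent closures $\Longrightarrow$ unfolded, form an arithmetical matrix. Hence $P\beq Q$ holds iff there exists a set of pairs that contains $(P,Q)$ and satisfies the first order arithmetic definable conditions prescribed in Definition~\ref{2014-03-30}, which is a $\Sigma_1^1$ statement. So $\beq$ on nPDA lies in $\Sigma_1^1$, and it remains to prove hardness.

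For hardness I would reduce \emph{rec-NMCM} to $\beq$ on nPDA by reusing the system $\mathcal{G}$ built in the proof for $\textrm{PDA}^{\epsilon+}$, together with the pair $p_1X\bot,q_1X\bot$ of Lemma~\ref{2014-04-01}. The only obstruction to $\mathcal{G}$ being an nPDA is the inert bottom symbol $\bot$: in $\mathcal{G}$ no transition ever removes $\bot$, so every game configuration loops forever and is unnormed. The remedy suggested in the text is to make the stack resettable. Concretely, I would drop the requirement that $\bot$ be permanent and add $\epsilon$-popping rules that let any working symbol be popped, so that from every reachable configuration the stack can be silently emptied down to a designated base and then to an accepting leaf; this makes every $pX$ normed. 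Simultaneously I would route each reset in the guessing gadgets (the states $r(e,o,j)X$, $r'(e,o,j)X$ and the associated $g,g'$ chains of Fig.~\ref{Operation-on-Counter}) through a full pop of the current stack followed by a fresh rebuild of the counter contents, so that a reset discards whatever lay below and re-creates the counters from scratch, thereby recovering the barrier behaviour that $\bot$ provided.

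The correctness argument is then a transcription of the $\textrm{PDA}^{\epsilon+}$ proof, provided two black-box facts survive the change. The first is clause (7) of Proposition~\ref{lm:checking}, the statement that content appended below the current working region is irrelevant to equivalence; in the reset setting this still holds because any access to that content is necessarily preceded by a reset that throws it away, so a straightforward bisimulation identifies $p\alpha\bot$ with $p\alpha\bot\beta$ as before. The second is Lemma~\ref{lm:forcing}, that every silent descendant $P$ of a reset state $r(e,o,j)X\bot$ is branching bisimilar to it: here one checks that from $P$ one can silently return, via the pop-and-rebuild loop, to a configuration that by the revised clause (7) is bisimilar to $r(e,o,j)X\bot$, whence Lemma~\ref{computation-lemma} yields $P\beq r(e,o,j)X\bot$. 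Once these two are in place, Lemma~\ref{lm:op_counter} and Lemma~\ref{2014-04-01} go through unchanged, giving $p_1X\bot\beq q_1X\bot$ iff $\mathcal{M}'$ has an infinite computation, and hence $\Sigma_1^1$-hardness on nPDA.

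The main obstacle I anticipate is precisely the interaction between normedness and Defender's Forcing: the new popping and escape transitions added to normalise the system must not hand Attacker a move that separates $p_iX\gamma$ from $q_iX\gamma$ when the two should be equivalent, nor let Defender equate them when they should not be. The safe way to arrange this is to add the escape routes symmetrically on the $p$-side and the $q$-side of each gadget and to have them lead only to configurations whose equivalence status is already settled, so that every new Attacker move on one side is matched by the mirror move on the other. Verifying this symmetry case by case across the gadgets of Fig.~\ref{Testing-on-Counter} and Fig.~\ref{Operation-on-Counter} is the bulk of the remaining work; the hardness for nPDA then follows, and combined with membership gives $\Sigma_1^1$-completeness.
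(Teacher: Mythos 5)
Your proposal takes essentially the same route as the paper: membership is argued exactly as for $\textrm{PDA}^{\epsilon+}$, and hardness is obtained by adapting the reduction of Lemma~\ref{2014-04-01} to a normed system in which the role of the inert bottom symbol $\bot$ is played by a reset that $\epsilon$-pops the entire stack and $\epsilon$-pushes fresh counter contents. The paper states only this idea and omits all details, so your additional care about preserving clause (7) of Proposition~\ref{lm:checking}, Lemma~\ref{lm:forcing}, and the symmetry of the new escape transitions is a faithful elaboration of the intended argument rather than a departure from it.
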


\section{Conclusion}\label{sec-Conclusion}

\begin{figure*}[t]
\begin{center}
\begin{tabular}{|c|c||c|c|} \hline
 & $\epsilon$-Popping {\bf nPDA}/{\bf PDA} & $\epsilon$-Pushing {\bf nPDA} & $\epsilon$-Pushing {\bf PDA} \\ \hline\hline
$\simeq$ & {\color{blue}Decidable} & {\color{blue}Decidable} & {\color{blue}$\Sigma^{1}_{1}$-Complete} \\ \hline
$\;\approx\;$ & \;$\Pi_{1}^{0}$-Complete~\cite{JancarSrba2008}\; & \;$\Pi_{1}^{0}$-Complete~\cite{JancarSrba2008}\; & \;$\Sigma^{1}_{1}$-Complete~\cite{JancarSrba2008}\; \\ \hline\hline
\end{tabular}
\end{center}
\caption{Dividing Line for PDA \label{Dividing-Line-for-PDA}}
\end{figure*}

The main results of the paper is summarized in Fig.~\ref{Dividing-Line-for-PDA}.
Our decidability results subsume all previous decidability results on the language equivalence on DPDA and the strong bisimilarity on PDA.
The structural definition of PDA plays an important role in the noticeably simpler proofs.
Towards the end of writing up this paper we became aware of the relationship between our definition of PDA and Jan\v{c}ar's notion of first order grammar~\cite{Janvcar2012-FO-Grammar}.
In our opinion Jan\v{c}ar's approach is an abstraction of the issue at a more basic level.
It appears to us that the proof methodology used in this paper can be applied to the first order grammar in a straightforward manner.
We are currently looking into the issue of whether anything new can be said in this abstract setting.

Stirling proved~\cite{Stirling2002} that the language equivalence of DPDA is primitive recursive.
Benedikt, Goller, Kiefer and Murawski showed that the strong bisimilarity on nPDA is non-elementary~\cite{BenediktMollerKieferMurawski2013}.
More recently Jan\v{c}ar observed that the strong bisimilarity of first-order grammar is
Ackermann-hard~\cite{Jancar2014},
a consequence of which is that the strong bisimilarity proved decidable by S{\'e}nizergues in
~\cite{Senizergues1998} is Ackermann-hard.
In view of the stronger results obtained in this paper, it is an interesting research direction to look for tighter upper and lower bounds on the branching bisimilarity of $\mathrm{nPDA}^{\epsilon+}$, $\mathrm{nPDA}^{\epsilon-}$ and $\mathrm{PDA}^{\epsilon-}$.

\vspace*{5mm}\noindent {\bf Acknowledgement}.
We thank the members of BASICS for their interest and constructive questions.
The support from NSFC (61033002, ANR 61261130589, 91318301) is gratefully acknowledged.

\newpage


\begin{thebibliography}{10}
\bibitem{BenediktMollerKieferMurawski2013}
M.~Benedikt, S.~Moller, S.~Kiefer, and A.~Murawski.
\newblock Bisimilarity of pushdown automata is nonelementary.
\newblock In {\em Logic in Computer Science}, pages 488--498, 2013.

\bibitem{BurkartCaucalMollerSteffen2001}
O.~Burkart, D.~Caucal, F.~Moller, and B.~Steffen.
\newblock Verification on infinite structures.
\newblock In J.~Bergstra, A.~Ponse, and S.~Smolka, editors, {\em Handbook of
  Process Algebra}, pages 545--623. North-Holland, 2001.

\bibitem{GinsburgGreibach1966}
S.~Ginsburg and S.~Greibach.
\newblock Deterministic context free languages.
\newblock {\em Information and Control}, 9:620--648, 1966.

\bibitem{GrooteHuttel1994}
J.~Groote and H.~H\"{u}ttel.
\newblock Undecidable equivalences for basic process a;lgebra.
\newblock {\em Information and Computation}, 115:354--371, 1994.

\bibitem{Harel1986}
D.~Harel.
\newblock Effective transformations on infinite trees, with applications to
  high undecidability, dominoes, and fairness.
\newblock {\em J. ACM}, 33:224--248, 1986.

\bibitem{HopcroftUllman1979}
J.~Hopcroft and J.~Ullman.
\newblock {\em Introduction to Automata Theory, Languages and Computation}.
\newblock Addison-Wesley Publishing Company, 1979.

\bibitem{Huttel1992}
H.~H{\"u}ttel.
\newblock Silence is golden: Branching bisimilarity is decidable for context
  free processes.
\newblock In {\em CAV'91}, pages 2--12. Lecture Notes in Computer Science 575,
  Springer, 1992.

\bibitem{Huttel1994}
H.~H{\"u}ttel.
\newblock Undecidable equivalences for basic parallel processes.
\newblock In {\em Theoretical Aspects of Computer Software}, Lecture Notes in
  Computer Science 789, pages 454--464, 1994.

\bibitem{HuttelStirling1991}
H.~H\"{u}ttel and C.~Stirling.
\newblock Actions speak louder than words: Proving bisimilarity for
  context-free processes.
\newblock In {\em LICS'91}, pages 376--386, 1991.

\bibitem{Janvcar2012-FO-Grammar}
P.~Jan\v{c}ar.
\newblock Decidability of dpda language equivalence via first-order grammars.
\newblock In {\em 27th Annual IEEE Symposium on Logic in Computer Science},
  page 415¨C424. IEEE Computer Society, 2012.

\bibitem{Jancar2014}
P.~Jan\v{c}ar.
\newblock Equivalences of pushdown systems are hard.
\newblock {\em Foundations of Software Science and Computation}, pages 1--28,
  2014.

\bibitem{JancarSrba2008}
P.~Jan\v{c}ar and J.~Srba.
\newblock Undecidability of bisimilarity by defender's forcing.
\newblock {\em Journal of ACM}, 55(1), 2008.

\bibitem{Mayr2003-ICALP}
E.~Mayr.
\newblock Undecidability of weak bisimulation equivalence for 1-counter
  processes.
\newblock In {\em ICALP 2003}, Lecture Notes in Computer Science 2719, page
  570¨C583. Springer, 2003.

\bibitem{Mayr2000PRS}
R.~Mayr.
\newblock Process rewrite systems.
\newblock {\em Information and Computation}, 156:264--286, 2000.

\bibitem{Milner1989}
R.~Milner.
\newblock {\em Communication and Concurrency}.
\newblock Prentice Hall, 1989.

\bibitem{Park1981}
D.~Park.
\newblock Concurrency and automata on infinite sequences.
\newblock In {\em Theoretical Computer Science}, volume Lecture Notes in
  Computer Science 104, pages 167--183. Springer, 1981.

\bibitem{Senizergues1997}
G.~S{\'e}nizergues.
\newblock The equivalence problem for deterministic pushdown automata is
  decidable.
\newblock In {\em ICALP 1997}, volume 1256 of {\em Lecture Notes in Computer
  Science}, pages 671--681. Springer-Verlag, 1997.

\bibitem{Senizergues1998}
G.~S{\'e}nizergues.
\newblock Decidability of bisimulation equivalence for equational graphs of
  finite out-degree.
\newblock In {\em Foundations of Computer Science, 1998. Proceedings. 39th
  Annual Symposium on}, pages 120--129. IEEE, 1998.

\bibitem{Senizergues2001}
G.~S{\'e}nizergues.
\newblock L (a)= l (b)? decidability results from complete formal systems.
\newblock {\em Theoretical Computer Science}, 251(1-2):1--166, 2001.

\bibitem{Senizergues2002}
G.~S{\'e}nizergues.
\newblock L (a)= l (b)? a simplified decidability proof.
\newblock {\em Theoretical Computer Science}, 281(1):555--608, 2002.

\bibitem{Srba2002d}
J.~Srba.
\newblock Undecidability of weak bisimilarity for pushdown processes.
\newblock In {\em ONCUR 2002}, volume 2421 of {\em LNCS}, pages 579--593.
  Springer-Verlag, 2002.

\bibitem{Stirling2002}
Stirling.
\newblock Deciding dpda equivalence is primitive recursive.
\newblock In {\em ICALP 2002}, Lecture Notes in Computer Science 2380, pages
  821--832. Springer, 2002.

\bibitem{Stirling1996-nPDA-decidability}
C.~Stirling.
\newblock Decidability of bisimulation equivalence for normed pushdown
  processes.
\newblock In {\em Proceedings of the 13th International Conference on
  Concurrency Theory (CONCUR 1996)}, Lecture Notes in Computer Science, pages
  217--232. Springer-Verlag, 1996.

\bibitem{Stirling1998-nPDA-decidability}
C.~Stirling.
\newblock Decidability of bisimulation equivalence for normed pushdown
  processes.
\newblock {\em Theoretical Computer Science}, 195(2):113--131, 1998.

\bibitem{Stirling1998}
C.~Stirling.
\newblock The joy of bisimulation.
\newblock In {\em MFCS 1998}, Lecture Notes in Computer Science 1450, pages
  142--151. Springer, 1998.

\bibitem{Stirling2000-PDA-decidability}
C.~Stirling.
\newblock Decidability of bisimulation equivalence for pushdown processes.
\newblock 2000.

\bibitem{Stirling2001-DPDA-dcidability}
C.~Stirling.
\newblock Decidability of dpda equivalence.
\newblock {\em Theoretical Computer Science}, 255(1-2):1--31, 2001.

\bibitem{vanGlabbeekWeijland1989-first-paper-bb}
R.~van Glabbeek and W.~Weijland.
\newblock Branching time and abstraction in bisimulation semantics.
\newblock In {\em Information Processing'89}, pages 613--618. North-Holland,
  1989.

\bibitem{vanGlabbeekWeijland1996}
R.~van Glabbeek and W.~Weijland.
\newblock Branching time and abstraction in bisimulation semantics.
\newblock {\em Journal of ACM}, 3:555--600, 1996.

\bibitem{YinFuHeHuangTao2014}
Q.~Yin, Y.~Fu, C.~He, M.~Huang, and X.~Tao.
\newblock Branching bisimilarity checking for prs.
\newblock In {\em ICALP 2014}, to appear.
\end{thebibliography}
\end{document}